\documentclass[11pt]{article}

\usepackage{microtype}

\usepackage{fullpage}
\usepackage{amsmath,amsthm}
\usepackage{amssymb}
\usepackage{color}
\usepackage{graphicx}
\usepackage[normalem]{ulem}
\usepackage{algorithm}
\usepackage{algpseudocode}

\newtheorem{theorem}{Theorem}[section]
\newtheorem{lemma}[theorem]{Lemma}

\newtheorem{definition}[theorem]{Definition}

\newcommand{\etal}{\emph{et al.}}
\newcommand{\ball}{\mathsf{ball}}
\newcommand{\expansion}{\mathsf{expansion}}
\newcommand{\contraction}{\mathsf{contraction}}
\newcommand{\distortion}{\mathsf{distortion}}

\newcommand{\diam}{\mathsf{diam}}
\newcommand{\degmax}{\mathsf{deg}_{\mathsf{max}}}
\newcommand{\OPT}{\mathsf{OPT}}
\newcommand{\ALG}{\mathsf{ALG}}
\newcommand{\argmin}{\mathsf{argmin}}

\newcommand{\cluster}{\mathsf{CLUSTER}}
\newcommand{\path}{\mathsf{PATH}}
\newcommand{\sub}{\mathsf{SUB}}

\newcommand{\poly}{\text{poly}}

\usepackage{todonotes}
\newcommand{\cO}{\mathcal{O}}

\newcommand\tab[1][1cm]{\hspace*{#1}}

\title{Algorithms for low-distortion embeddings into arbitrary 1-dimensional spaces}

\author{
Timothy Carpenter\thanks{Dept.~of Computer Science \& Engineering, The Ohio State University.}
\and
Fedor V. Fomin\thanks{Department of Informatics, University of Bergen, Norway.}
\and
Daniel Lokshtanov\thanks{Department of Informatics, University of Bergen, Norway.}
\and
Saket Saurabh\thanks{Institute of Mathematical Sciences, India.}
\and
Anastasios Sidiropoulos\thanks{Dept.~of Mathematics and Dept.~of Computer Science \& Engineering, The Ohio State University.}
}


\begin{document}

\maketitle

\begin{abstract}
We study the problem of finding a minimum-distortion embedding of the shortest path metric of an unweighted graph into a ``simpler'' metric $X$.  Computing such an embedding (exactly or approximately) is a non-trivial task even when $X$ is the metric induced by a path, or, equivalently, into the real line. 
In this paper we give approximation and fixed-parameter tractable (FPT) algorithms for minimum-distortion embeddings into the metric of a subdivision of some fixed graph $H$, or, equivalently, into any fixed 1-dimensional simplicial complex. More precisely, we study the following problem: For  given graphs $G$, $H$  and integer $c$, is it possible to embed  $G$ with distortion $c$ into a graph homeomorphic to $H$?  
Then embedding into the line is the special case $H=K_2$, and embedding into the cycle is the case $H=K_3$, where $K_k$ denotes the complete graph on $k$ vertices.
For this problem we give 
\begin{itemize}
\item an approximation algorithm, which in time  $f(H)\cdot \poly (n)$, for some function $f$,  either correctly decides that there is no embedding of $G$ with distortion $c$ into any graph homeomorphic to $H$, or finds an embedding with distortion $\poly(c)$;
 \item an exact algorithm, which in time $f'(H, c)\cdot \poly (n)$, for some function $f'$,  either correctly decides that there is no embedding of $G$ with distortion $c$ into any graph homeomorphic to $H$, or finds an embedding with distortion $c$.
\end{itemize}

Prior to our work,  $\poly(\OPT)$-approximation or FPT algorithms were known only for embedding into paths and trees of bounded degrees.
\end{abstract}


%
%

\section{Introduction}


Embeddings of various metric spaces are a fundamental primitive in the design of algorithms \cite{Indyk01,indyk2004low,linial1995geometry,Linial02,arora2008euclidean,arora2009expander}.
A low-distortion embedding into a low-dimensional space can be used as a sparse representation of a metrical data set (see e.g.~\cite{indyk2006stable}).
Embeddings into 1- and 2-dimensional spaces also provide a natural abstraction of vizualization tasks (see e.g.~\cite{onak2013fat}).
Moreover embeddings into topologically restricted spaces can be used to discover interesting structures in a data set; for example, embedding into trees is a natural mathematical abstraction of phylogenetic reconstruction (see e.g.~\cite{farach1995robust}).
More generally, embedding into ``algorithmically easy'' spaces provides a general reduction for solving geometric optimization problems (see e.g.~\cite{bartal1996probabilistic,farach1999approximate}).

A natural algorithmic problem that has received a lot of attention in the past decade concerns the exact or approximate computation of embeddings of minimum distortion of a given metric space into some host space (or, more generally, into some space chosen from a specified family).
Despite significant efforts, most known algorithms for this important class of problems work only for the case of the real line and trees.

In this work we present exact and approximate algorithms for computing minimum distortion embeddings into arbitrary 1-dimensional topological spaces of bounded complexity.
More precisely, we obtain algorithms for embedding the shortest-path metric of a given unweighted graph into a subdivision of an arbitrary graph $H$.
The case where $H$ is just one edge is precisely the problem of embedding into the real line.
We remark that prior to our work, even the case where $H$ is a triangle, which corresponds to the problem of embedding into a cycle, was open.

We remark that the problem of embedding shortest path metrics of finite graphs into any fixed finite 1-dimensional simplicial complex ${\cal C}$ is equivalent to the problem of embedding into arbitrary subdivisions of some fixed finite graph $H$, where $H$ is the abstract 1-dimensional simplicial complex corresponding to ${\cal C}$\footnote{Here, a $d$-dimensional simplicial complex, for some integer $d\geq 1$, is the space obtained by taking a set of simplices of dimension at most $d$, and identifying pairs of faces of the same dimension.
An abstract $d$-dimensional simplicial complex ${\cal A}$ is a family of nonempty subsets of cardinality at most $d+1$ of some ground set $X$, such that for all $Y'\subset Y\in {\cal A}$, we have $Y'\in {\cal A}$;
in particular, any $1$-dimensional simplicial complex corresponds to the set of edges and vertices of some graph.}.
Since we are interested in algorithms, for the remainder of the paper we state all of our results as embeddings into subdivisions of graphs.



\subsection{Our contribution}
We now formally state our results and briefly highlight the key new techniques that we introduce.
The input space consists of some unweighted graph $G$.
The target space is some unknown subdivision $H'$ of some fixed $H$; we allow the edges in $H'$ to have arbitrary non-negative edge lengths.

We first consider the problem of approximating a minimum-distortion embedding into arbitrary $H$-subdivisions.
We obtain a polynomial-time approximation algorithm, summarized in the following.
The proof is given in Section~\ref{sec:H_approx}.

\newtheorem*{thm:mainApprox}{Theorem \ref{thm:mainApprox}}
\begin{thm:mainApprox}
There exists a $8^hn^{\cO(1)}$ time algorithm that takes as input an $n$-vertex graph $G$, a graph $H$ on $h$ vertices, and an integer $c$, and either correctly concludes that there is no $c$-embedding of $G$ into a subdivision of $H$, or produces a $c_{\ALG}$-embedding of $G$ into a subdivision of $H$, with $c_{\ALG} \leq  64 \cdot 10^6 \cdot c^{24} (h+1)^9$.
\end{thm:mainApprox}

In addition, we also obtain a FPT algorithm, parameterized by the optimal distortion and $H$.
The proof is given in Section~\ref{sec:H_fpt}.

\newtheorem*{theorem:fpt_H}{Theorem \ref{theorem:fpt_H}}
\begin{theorem:fpt_H} 
Given an integer $c>0$ and graphs $G$ and $H$, it is possible in time $f(H, c) \cdot n^{\cO(1)}$ to either find a non-contracting $c$-embedding of $G$ into a subdivision of $H$, or correctly determine that no such embedding exists.
\end{theorem:fpt_H}

\subsection{Related work}

\paragraph{Embedding into 1-dimensional spaces.}
Most of  the previous work on approximation and FPT algorithms for  low-distortion embedding (with one notable recent exception~\cite{NayyeriR16}) concerns  embeddings of a more general metric space $M$  into the real line and trees. However, even in the  case of embedding into the line,   all polynomial time approximation algorithms    make assumptions on the   metric $M$ such as having bounded spread (which is the ratio between the maximum and the minimum point distances in $M$) \cite{BadoiuCIS05,nayyeri2015reality} or being the shortest-path metric of an unweighted graph   \cite{BadoiuDGRRRS05}. 
This happens for  to a good reason: 
as it was shown by B{\u{a}}doiu {\it et al.}  \cite{BadoiuCIS05}, 
 computing the minimum line distortion is hard to approximate up to a
factor polynomial in $n$, even when $M$ is the weighted tree metrics  with spread  $n^{\cO(1)}$.  

Most relevant to our  approximation  algorithm is the work of  
B{\u{a}}doiu {\it et al.} 
\cite{BadoiuDGRRRS05}, who  
gave an 
 algorithm that for a given $n$-vertex (unweighted)  graph $G$ and $c>0$ in time  $\cO(cn^3)$ either concludes correctly that no $c$-distortion of $G$ into line exists, or computes an $\cO(c)$-embedding of $G$ into the line. 
 Similar results can be obtained for embedding into trees~\cite{BadoiuDGRRRS05,BadoiuIS07}. 
  Our approximation algorithm can be seen as an extension of these results to much more general metrics.

  
 Parameterized complexity of    low-distortion embeddings was considered by 
Fellows {\it et al.}~\cite{Fellows:2013:DFP:2539126.2489789}, who 
  gave a fixed parameter tractable (FPT) algorithm for finding an embedding of an {unweighted} graph metric   into the line with distortion at most $c$, or concludes that no such embedding exists, which works in time 
  $\cO(nc^4(2c+1)^{2c})$. As it was shown by Lokshtanov  {\it et al.}~\cite{LokshtanovMS11-superexp} that, unless ETH fails, this bound is asymptotically tight. 
  For   
{weighted} graph metrics Fellows {\it et al.}  obtained an  algorithm with running time $\cO(n(cW)^4(2c+1)^{2cW})$, where $W$ is the largest edge weight of the input graph. In addition, they   rule out, unless P=NP, any possibility of an algorithm with running time $\cO((nW)^{h(c)})$, where
$h$ is a function of $c$ alone. The problem of low-distortion embedding into a tree 
is FPT parameterized by the maximum vertex degree in the tree and the distortion $c$  \cite{Fellows:2013:DFP:2539126.2489789}.

Due to the intractability of  low-distortion embedding problems  from approximation and parameterized complexity perspective,   
Nayyeri and Raichel \cite{nayyeri2015reality} initiated the study of approximation algorithms with  running time, in the worse case, not necessarily polynomial and not even FPT.  In a very recent work
Nayyeri and Raichel \cite{NayyeriR16} obtained a  
 $(1+\varepsilon)$-approximation algorithm for finding the minimum-distortion embedding of an $n$-point metric space $M$ into the shortest path metric space of a weighted graph $H$ with $m$ vertices. The running time of their algorithm is
 $(c_\OPT \Delta)^{\omega\cdot \lambda \cdot  (1/\varepsilon)^{\lambda +2} \cdot \cO((c_\OPT)^{2\lambda}) }\cdot n^{\cO(\omega)} \cdot m^{\cO(1)}$, where $\Delta$ is the spread 
 of the points of $M$,  $\omega$ is  
 the treewidth of $H$  and $\lambda$ is the doubling dimension of $H$.
Our approximation and FPT algorithms  and  the algorithm of Nayyeri and Raichel are incomparable. 
Their algorithm is for more general metrics but runs in polynomial time  only when the optimal distortion $c_\OPT$ is constant, even when $H$ is a cycle. In contrast, our approximation algorithm runs in polynomial time for any value of $c_\OPT$.
Moreover, the  algorithm of Nayyeri and Raichel is (approximation) FPT with parameter $c_\OPT$ only when the spread $\Delta$ of $M$ (which in the case of the unweighted graph metric is the diameter of the graph) and the doubling dimension of the host space are both constants; when $c_\OPT=\cO(1)$ (which is the interesting case for FPT algorithms), this implies that the doubling dimension of $M$ must also be constant, and therefore $M$ can contain only a constant number of points, this makes the problem trivially solvable in constant time.
The running time of our parameterized algorithm does not depend on the spread of the metric of $M$. 
%

%

\paragraph{Embedding into higher dimensional spaces.}
Embeddings into $d$-dimensional Euclidean space have also been investigated.
The problem of approximating the minimum distortion in this setting appears to be significantly harder, and most known results are lower bounds \cite{matouvsek2010inapproximability,edmonds2010inapproximability}.
Specifically, it has been shown by Matou{\v{s}}ek and Sidiropoulos \cite{matouvsek2010inapproximability} that it is NP-hard to approximate the minimum distortion for embedding into $\mathbb{R}^2$ to within some polynomial factor.
Moreover, for any fixed $d\geq 3$, it is NP-hard to distinguish whether the optimum distortion is at most $\alpha$ or at least $n^\beta$, for some universal constants $\alpha,\beta>0$.
The only known positive results are a $\cO(1)$-approximation algorithm for embedding subsets of the 2-sphere into $\mathbb{R}^2$ \cite{BadoiuDGRRRS05}, and approximation algorithms for embedding ultrametrics into $\mathbb{R}^d$ \cite{badoiu2006embedding,onak2013fat}.

\paragraph{Bijective embeddings.}
We note that the approximability of minimum-distortion embeddings has also been studied for the case of bijections \cite{papadimitriou2005complexity,HallP05,KenyonRS04,khot2007hardness,ChandranMOPSS08,KenyonRS09,edmonds2010inapproximability}.
In this setting, most known algorithms work for subsets of the real line and for trees.

\section{Notation and definitions}

For a graph $G$, we denote by $V(G)$ the set of vertices of $G$ and by $E(G)$ the set of edges of $G$.
For some $U\subseteq V(G)$, we denote by $G[U]$ the subgraph of $G$ induced by $U$.
Let $\degmax(G)$ denote the maximum degree of $G$.

Let $M=(X,d)$, $M'=(X',d')$ be metric spaces.
An injective map $f:X\to X'$ is called an \emph{embedding}.
The \emph{expansion} of $f$ is defined to be
$\expansion(f) = \sup_{x'\neq y'\in X} \frac{d'(f(x'),f(y'))}{d(x',y')}$
and the \emph{contraction} of $f$ is defined to be
$\contraction(f) = \sup_{x\neq y\in X} \frac{d(x,y)}{d'(f(x),f(y))}$.
We say that $f$ is \emph{non-expanding} (resp.~\emph{non-contracting}) if $\expansion(f)\geq 1$ (resp.~$\contraction(f)\geq 1$).
The distortion of $f$ is defined to be
$\distortion(f) = \expansion(f) \cdot \contraction(f)$.
We say that $f$ is a \emph{$c$-embedding} if $\distortion(f) \leq c$.

For a metric space $M=(X,d)$, for some $x\in X$, and $r\geq 0$, we write
$\ball_M(x,r)=\{y\in X : d(x,y)\leq r\}$,
and for some $Y\subseteq X$, we define
$\diam_M(Y) = \sup_{x,y\in Y} d(x,y)$.
We omit the subscript when it is clear from the context.
We also write $\diam(M) = \diam_M(X)$.
When $M$ is finite, the \emph{local density} of $M$ is defined to be 
$\delta(M) = \max_{x \in X, r>0} \frac{|\ball_M(x,r)-1|}{2r}$.
For a graph $G$, we denote by $d_G$ the shortest-path distance in $G$.
We shall often use $G$ to refer to the metric space $(V(G), d_G)$.

For graphs $H$ and $H'$, we say that $H'$ is a \emph{subdivision} of $H$ if it is possible, after replacing every edge of $H$ by some path, to obtain a graph isomorphic to $H'$.


\section{Overview of our results and techniques}

Here we present our main theorems and algorithms, with a short discussion. Formal proofs and detailed statements of the algorithms are left to later sections in the paper.

\paragraph{Approximation algorithm for embedding into an $H$-subdivision for general $H$.}
Here, we briefly highlight the main ideas of the approximation algorithm for embedding into $H$-subdivisions, for arbitrary fixed $H$.
A key concept is that of a \emph{proper} embedding: this is an embedding where every edge of the target space is ``necessary''. In other words, for every edge $e$ of $H'$ there exists some vertices $u$, $v$ in $G$, such that the shortest path between $u$ and $v$ in $H'$ traverses $e$. Embeddings that are not proper are difficult to handle. We first guess the set of edges in $H$ such that their corresponding subdivisions in $H'$ contain unnecessary edges; we ``break'' those edges of $H$ into two new edges having a leaf as one of their endpoint. There is a bounded number of guesses (depending on $H$), and we are guaranteed that for at least one guess, there exists an optimal embedding that is proper. By appropriately scaling the length of the edges in $H'$ we may assume that the embedding we are looking for has contraction exactly $1$. The importance of using proper embeddings is that a proper embedding which is ``locally'' non-contracting is also (globally) non-contracting, while this is not necessarily true for non-proper embeddings.

A second difficulty is that we do not know the number of times that an edge in $H$ is being subdivided. Guessing the exact number of times each edge is subdivided would require $n^{f(H)}$ time, which is too much. Instead we set a specific threshold $\ell$, based on $c$. The threshold $\ell$ is approximately $c^3$, and essentially $\ell$ is a threshold for how many vertices a BFS in $G$ needs to see before it is able to distinguish between a part of $G$ that is embedded on an edge, and a part of $G$ that is embedded onto in an area of $H'$ close to a vertex of degree at least $3$. In particular, parts of $G$ that are embedded close to the middle of an edge {\em can} be embedded with low distortion onto the line, while parts that are embedded close to a vertex of degree $3$ in $H$ can not - because $G$ ``grows in at least $3$ different directions'' in such parts. Since BFS is can be used as an approximation algorithm for embedding into the line, it will detect whether the considered part of $G$ is close to a degree $\geq 3$ vertex of $H$ or not.

Instead of guessing exactly how many times each edge of $H$ is subdivided, we guess for every edge whether it is subdivided at least $\ell$ times or not. The edges of $H$ that are subdivided at least $\ell$ times are called ``long'', while the edges that are subdivided less than $\ell$ times are called ``short''. We call the connected components of $H$ induced on the short edges a {\em cluster}. Having defined clusters, we now observe that a cluster with only two long edges leaving it {\em can} be embedded into the line with (relatively) low distortion, contradicting what we said in the previous paragraph! Indeed, the parts of $G$ mapped to a cluster with only two long edges leaving it are (from the perspective of a BFS), indistinguishable from the parts that are mapped in the middle of an edge! For this reason, we classify clusters into two types: the {\em boring} ones that have at most two (long) edges leaving them, and the {\em interesting} ones that are incident to at least $3$ long edges.

Any graph can be partitioned into vertices of degree at least $3$ and paths between these vertices such that every internal vertex on these paths has degree $2$. Thinking of clusters as ``large'' vertices and the long edges as edges between clusters, we can now partition the ``cluster graph'' into interesting clusters (i.e vertices of degree 3), and chains of boring clusters between the interesting clusters -- these chains correspond to paths of vertices of degree $2$.

The parts of $G$ that are embedded onto a chain of boring clusters can be embedded into the line with low distortion, and therefore, for a BFS these parts are indistinguishable from the parts of $G$ that are embedded onto a single long edge. However, the interesting clusters are distinguishable from the boring ones, and from the parts of $G$ that are mapped onto long edges, because around interesting clusters the graph really does  ``grow in at least $3$ different directions'' for a long enough time for a BFS to pick up on this.

Using the insights above, we can find a set $F$ of at most $|V(H)|$ vertices in $G$, such that every vertex in $F$ is mapped ``close'' to some interesting cluster, and such that every interesting cluster has some vertex in $F$ mapped ``close'' to it. At this point, one can essentially just guess in time $\cO(h^h)$ which vertex of $F$ is mapped close to which clusters of $H$. Then one maps each of the vertices that are ``close'' to $F$ (in $G$) to some arbitrarily chosen spot in $H$ which is close enough to the image of the corresponding vertex of $F$. Local density arguments show that there are not too many vertices in $G$ that are ``close'' to $F$, and therefore this arbitrary choice will not drive the distortion of the computed mapping up too much.

It remains to embed all of the vertices that are ``far'' from $F$ in $G$. However, by the choice of $F$ we know that all such vertices should be embedded onto long edges, or onto chains of boring clusters. Thus, each of the yet un-embedded parts of the graph can be embedded with low distortion into the line! All that remains is to compute such low distortion embeddings for each part using a BFS, and assign each part to an edge of $H$. Stitching all of these embeddings together yields the approximation algorithm.

There are multiple important details that we have completely ignored in the above exposition. The most important one is that a cluster can actually be quite large when compared to a long edge. After all, a boring cluster contains up to $E(H)$ short edges, and the longest short edge can be almost as long as the shortest long edge! This creates several technical complications in the algorithm that computes the set $F$. Resolving these technical complications ends up making it unnecessary to guess which vertex of $F$ is mapped to which vertex of $H$, instead one can compute this directly, at the cost of increasing the approximation ratio.

\paragraph{FPT algorithm for embedding into an $H$-subdivision for general $H$.}

Our FPT algorithm for embedding graphs $G$ into $H$-subdivisions (for arbitrary fixed $H$) draws inspiration from the algorithm for the line used in \cite{FellowsFLLSR09,BadoiuDGRRRS05}, while also using an approach similar to the approximation algorithm for $H$-subdivisions. The result here is an exact algorithm with running time $f(H,c_\OPT)\cdot n^{\cO(1)}$. 

A naive generalization of the algorithm for the line needs to maintain the partial solution over $f(H)$ intervals, which results in running time $n^{g(H)}$, which is too much. Supposing that there is a proper $c$-embedding of $G$ into some $H$-subdivision, we attempt to find this embedding by guessing the short and long edges of $H$.
Using this guess, we partitions $H$ into connected clusters of short and long edges (we call the clusters of short edges ``interesting'' clusters, and the clusters of long edges ``path'' clusters). We show that if a $c$-embedding exists, we can find a subset of $V(G)$, with size bounded by a function of $|H|$ and $c$, that contains all vertices embedded into the interesting clusters of $H$. From this, we make further guesses as to which specific vertices are embedded into which interesting clusters, then how they are embedded into the interesting clusters. We also make guesses as to what the embedding looks like for a short distance (for example, $\cO(c^2)$) along the long edges which are connected to the important clusters.

\begin{figure}[H]
	\centering
  \includegraphics[width=0.4\textwidth]{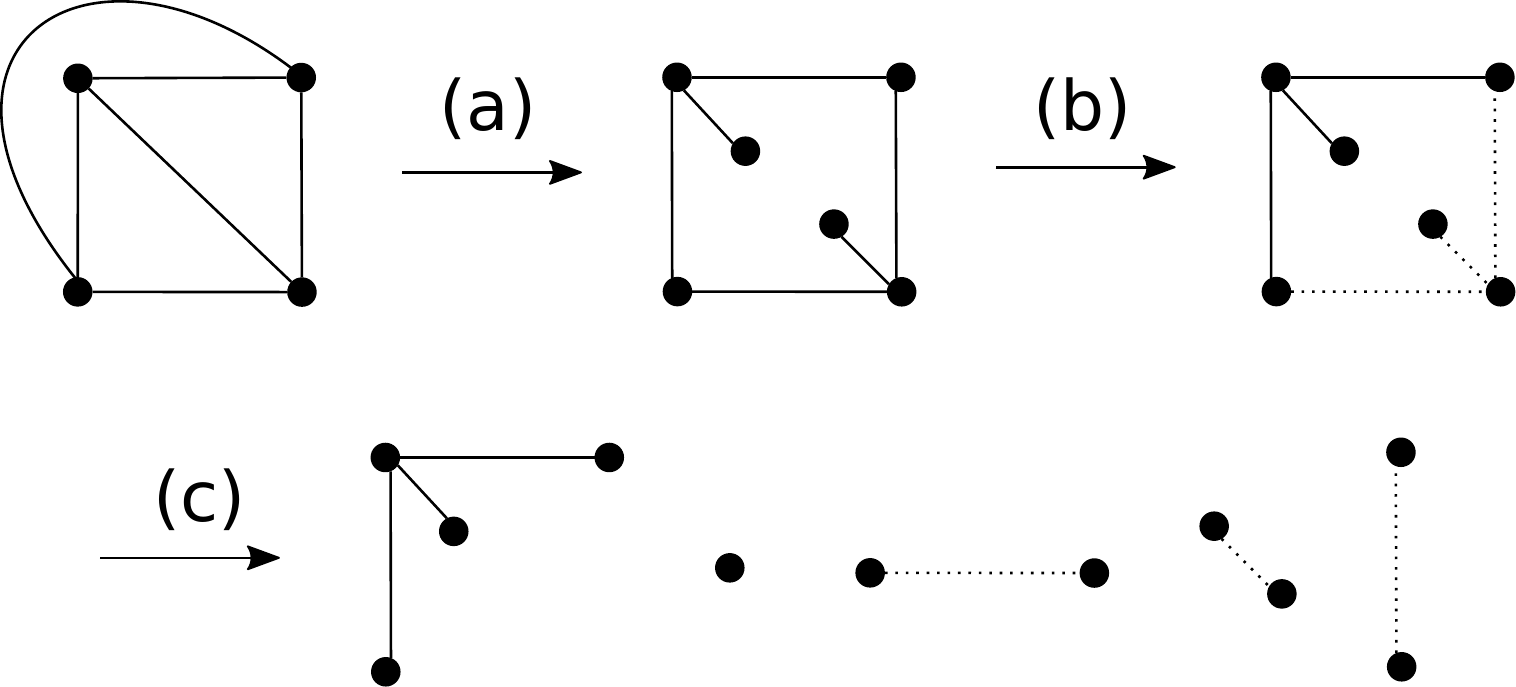}
  \caption{The FPT algorithm follows this process: (a) A quasi-subgraph of the target graph is chosen. (b) Short (solid line) and long (dotted line) edges are chosen. (c) The graph is divided into interesting (left 2) and path (right 3) components.}
  \label{fig-fpt}
\end{figure}

Since the number of guesses at each step so far can be bounded in terms of $c$ and $H$, we can iterate over all possible configurations. Once our guesses have found the correct choices for the interesting clusters and for a short distance along the paths leaving these clusters, we are able to partition the remaining vertices of $G$, and guess which path clusters these partitions are embedded into. Due to the ``path-like'' nature of the path clusters, when we pair the correct partition and path cluster, we are able to use an approach inspired by \cite{FellowsFLLSR09,BadoiuDGRRRS05} to find a $c$-embedding of the partition into the path cluster, which is compatible with the choices already made for the interesting clusters.  The formal description and analysis of this algorithm is quite lengthy, and deferred to Section \ref{sec:H_fpt}.

\section{Preliminaries on embeddings into general graphs}

Let $G$, $H$ be connected graphs, with a fixed total order $<$ on $V(G)$ and $V(H)$. A non-contracting, $c_\OPT$-embedding of $G$ to $H$ is a function $f_\OPT : V(G) \rightarrow (H_\OPT, w_\OPT)$, where $H_\OPT$ is a subdivision of $H$, $w_\OPT : E(H_\OPT) \rightarrow \mathbb{R}^{>0}$, and for all $u, v \in V(G)$,
\[
d_G(u, v) \leq d_{(H_\OPT, w_\OPT)}(f_\OPT(u), f_\OPT(v)) \leq c_\OPT \cdot d_G(u, v),
\]
where $d_{(H_\OPT, w_\OPT)}$ is the shortest path distance in $H_\OPT$ with respect to $w_\OPT$. Stated formally, for all $h_1, h_2 \in V(H_\OPT)$, if $\mathcal{P}$ is the set of all paths from $h_1$ to $h_2$ in $H_\OPT$, then
\[
d_{(H_\OPT, w_\OPT)}(h_1, h_2) = \min_{P \in \mathcal{P}}\left\{\sum_{e \in P} w_\OPT(e)\right\}.
\]

\begin{definition}
For a graph $G_1$ and subdivision $G'_1$ of $G_1$, for $e \in E(G_1)$, let $\sub_{G'_1}(e)$ be the subdivision of $e$ in $G'_1$. For convenience, for each $e \in E(H)$, we shall use $e_\OPT$ to indicate the subdivision of $e$ in $H_\OPT$.
\end{definition}

The following notion of consecutive vertices will be necessary to describe additional properties we will want our embeddings to have.

\begin{definition} 
Suppose there exists $u, v \in V(G)$ and $e \in E(H)$ such that $f_\OPT(u), f_\OPT(v) \in V(e_\OPT)$ and $f_\OPT(u) < f_\OPT(v)$.
If for all $w \in V(G) \setminus \{u, v\}$, $f_\OPT(w)$ is not in the path in $e_\OPT$ between $f_\OPT(u)$ and $f_\OPT(v)$, then we say that $u$ and $v$ are \emph{consecutive w.r.t.}~$e$, or we say that $u$ and $v$ are \emph{consecutive}.
\end{definition}

The first property we will want our embeddings to have is that they are ``pushing''. The intuition here is that we want our embedding to be such that we cannot modify the embedding by contracting the distance further between two consecutive vertices.

\begin{definition}
If for all $u, v \in V(G)$ and $e \in E(H)$ such that $u$ and $v$ are consecutive w.r.t.~$e$ we have that
\[
d_{e_\OPT}(f_\OPT(u),f_\OPT(v)) = d_G(u, v),
\]
then we say that $f_\OPT$ is \emph{pushing}.
\end{definition}

\begin{figure}[H]
	\centering
  \includegraphics{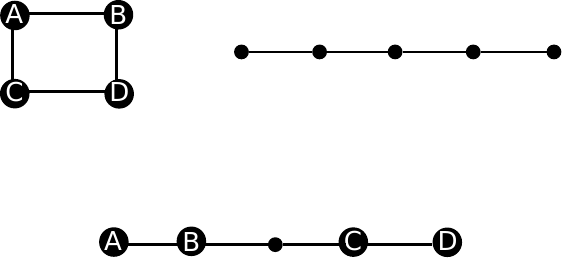}
  \caption{A pushing embedding of the cycle on 4 vertices into a path of length 5.}
  \label{fig-pushing}
\end{figure}

The next property we want for our embeddigns is that they are ``proper'', meaning that all edges of the target graph are, in a loose sense, covered by an edge of the source graph.

\begin{definition}
For any $z \in V(H_\OPT)$, if there exists $\{u, v\} \in E(G)$ such that
\[
d_{(H_\OPT, w_\OPT)}(f_\OPT(u), f_\OPT(v)) = d_{(H_\OPT, w_\OPT)}(z, f_\OPT(u)) + d_{(H_\OPT, w_\OPT)}(z, f_\OPT(v))
\]
then we say that $z$ is \emph{proper w.r.t.~$f_\OPT$}. If for all $x \in V(G)$, $x$ is proper w.r.t.~$f_\OPT$, then we say that $f_\OPT$ is \emph{proper}.
\end{definition}

\begin{figure}[H]
	\centering
  \includegraphics{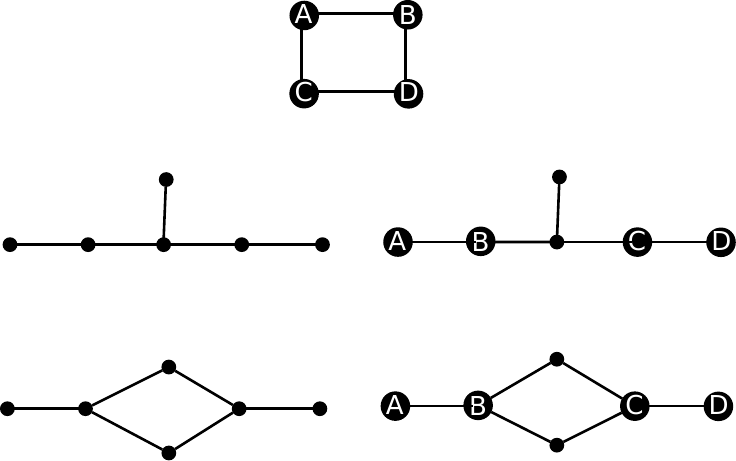}
  \caption{The cycle on 4 vertices, embedded into the two graphs on the left. The first embedding is not proper. The second embedding is proper.}
  \label{fig-proper}
\end{figure}

Given some target graph to embed into, there may not necessarily be a proper embedding. However, for some ``quasi-subgraph'' (defined below) of our target, there will be a proper embedding, which can be used to find an embedding into the target graph.

\begin{definition} \label{def:H_quasi}
Let $J$ and $J'$ be connected graphs. We say $J'$ is a \emph{quasi-subgraph} of $J$ if $J$ can be made isomorphic to $J'$ by applying any sequence of the following rules to $J$:
\begin{enumerate}
\item Delete a vertex in $V(J)$.
\item Delete an edge in $E(J)$.
\item Delete an edge $\{u, v\} \in E(J)$, add vertices $u', v'$ to $V(J)$, and add edges $\{u, u'\}, \{v, v'\}$ to $E(J)$.
\end{enumerate}
\end{definition}

\begin{figure}[H]
	\centering
  \includegraphics{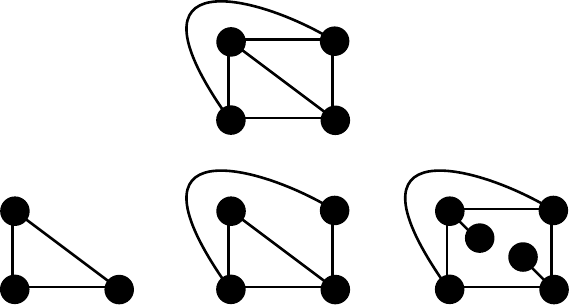}
  \caption{$K_4$ and three quasi-subgraphs of $K_4$.}
  \label{fig-quasi}
\end{figure}

We now show that by examining the quasi-subgraphs of our target graph, we can restrict our search to proper, pushing, non-contracting embeddings.

\begin{lemma} \label{lemma:H_proper}
There exists a proper, pushing, non-contracting $c_\OPT$-embedding of $G$ to some $(H^q, w^q)$, where $H^q$ is the subdivision of some quasi-subgraph of $H$, and $w^q : E(H^q) \rightarrow \mathbb{R}^{>0}$.
\end{lemma}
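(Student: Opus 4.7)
The plan is to transform the given $c_\OPT$-embedding $f_\OPT$ into the desired one in two successive stages: first enforcing the pushing property, then trimming the target to obtain properness.

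In the pushing stage, for every edge $e \in E(H)$ and every consecutive pair of images $f_\OPT(u), f_\OPT(v)$ on $e_\OPT$, I would rescale the weights $w_\OPT$ along the subpath between them so that its total weight becomes exactly $d_G(u, v)$. This is a shrinking since the original subpath weight lies in $[d_G(u, v), c_\OPT \cdot d_G(u, v)]$ by the non-contracting and $c_\OPT$-embedding properties. The key verification is that the modified embedding remains non-contracting, which I would prove by induction on the number of intermediate images traversed by a shortest path between two image points: in the base case (no intermediate images), the shortest path is either a pushed direct subpath of length exactly $d_G$ between two consecutive images, or a path through vertex-to-image segments whose weights were untouched and which have combined length at least $d_G$ by original non-contracting; in the inductive step, split the shortest path at an intermediate image, apply the inductive hypothesis to both halves, and combine via the triangle inequality in $G$. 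Since weights only decrease, expansion cannot increase, so the $c_\OPT$-distortion bound is preserved.

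In the trimming stage, I would define $U := \bigcup_{\{a,b\} \in E(G)} P_{ab}$, where $P_{ab}$ is a shortest path between $f(a)$ and $f(b)$ in the pushed metric (chosen, when ties permit, as the direct subpath between consecutive images). Since $G$ is connected, so is $U$. For each edge $e = \{u, v\} \in E(H)$, a case analysis on shortest-path structure shows $U \cap e_\OPT$ falls into one of: (A) all of $e_\OPT$; (B) empty; (C) one or two disjoint subpaths, each containing one of $u, v$; or (D) a single subpath strictly in the interior of $e_\OPT$, in which case all of $U$ lies inside $e_\OPT$. In cases (A)--(D) respectively, I would keep $e$; delete $e$ via rule 2 (plus rule 1 for isolated vertices); apply rule 3 to create the pendants corresponding to the used subpaths, and rule 2 plus rule 1 to remove any pendant whose corresponding part is empty; or reduce $H$ to a single edge (i.e., $K_2$) via rules 1 and 2. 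The resulting quasi-subgraph $H^{qs}$, together with the weight function induced on $H^q := U$ by the pushed $w_\OPT$, is the desired subdivision. Properness holds by definition of $U$; the pushing property is inherited since each pushed consecutive-pair subpath lies entirely in $U$; non-contracting is preserved since shortest-path restriction to a subspace can only increase distances; and expansion at most $c_\OPT$ follows by concatenating the in-$U$ shortest paths along a shortest $G$-path $x = z_0, z_1, \ldots, z_k = y$, each contributing at most $c_\OPT \cdot d_G(z_i, z_{i+1}) = c_\OPT$.

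The main obstacle is establishing the structural classification of $U \cap e_\OPT$, particularly ruling out configurations combining middle and end intervals, or having three or more disjoint intervals. The crux is the observation that any image in the interior of a putative middle interval must, by $G$-connectedness, be adjacent to some image outside that interval, and the shortest path between them either traverses the supposed gap (forcing interval merger) or exits $e_\OPT$ at $u$ or $v$ (forcing the middle interval to extend to an endpoint). Iterating this argument collapses any purported configuration into one of (A)--(D).
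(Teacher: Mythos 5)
Your proposal is correct, and it reaches the same destination as the paper's proof by the same general means (shrink consecutive segments to achieve the pushing property, and record the discarded parts of the host via the quasi-subgraph rules), but the two arguments are organized quite differently. The paper establishes properness \emph{first}, by an iterative repair: it collects the improper vertices of $H_\OPT$, observes that within a single subdivided edge the improper vertices must form a contiguous block unless the whole of $G$ embeds into a line (that degenerate case is dispatched by citing the pushing-embedding result of Fellows~\etal), deletes them using rules 1--3, and argues termination via the decreasing potential $|V^{\neg}|+|E^{\neg}|$; only afterwards does it contract consecutive-pair segments one at a time to obtain the pushing property. You invert the order: you push first (rescaling all consecutive segments simultaneously), which obliges you to supply the induction on intermediate images to show non-contraction survives --- an argument the paper does not need in this form, since its pushing step is performed after properness and justified pairwise by the triangle inequality --- and you then obtain properness constructively, by taking $U$ to be the union of shortest paths between images of endpoints of $G$-edges and classifying $U\cap e_\OPT$ per edge into your cases (A)--(D), with the all-inside-one-edge case (D) folded in as a reduction to $K_2$ rather than handled by an external theorem. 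Your route buys a more explicit, global description of the retained host (and avoids the potential-function bookkeeping and the appeal to the line-embedding theorem), at the cost of the two delicate verifications you correctly flag: that rescaling before properness cannot create contraction (your base case must also account for shortest paths between a consecutive pair that avoid the direct segment, which the degree-2 structure of subdivided edges handles), and that the per-edge intersection pattern of $U$ really collapses to (A)--(D); both check out, so the proof is sound.
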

\begin{proof}
If $f_\OPT$ yields a non-contracting $c_\OPT$-embedding of $G$ to the line, then by a theorem of \cite{FellowsFLLSR09}, there exists a pushing, non-contracting $c_\OPT$-embedding of $G$ to the line. Since $G$ is connected, a line embedding must also be proper.
Therefore, in this case the claim of the lemma is true. For the rest of the proof, we shall assume $f_\OPT$ does not yield such an embedding.

Suppose that $f_\OPT$ is not proper. Let
\[
E^{\neg} = \{e \in E(H) : \exists z \in V(e_\OPT) \mbox{ such that $z$ is not proper w.r.t.~$f_\OPT$}\}
\]
and
\[
V^{\neg} = \{v \in V(H) : \mbox{ $v$ is not proper w.r.t.~$f_\OPT$}\}.
\]
Suppose that for some $e \in E(H)$ there exists $z_1, z_2 \in V(e_\OPT)$ such that $z_1$ and $z_2$ are both not proper w.r.t.~$f_\OPT$, and there exists $v \in V(G)$ such that $f_\OPT(v)$ is in the path in $e_\OPT$ from $z_1$ to $z_2$. Since $G$ is a connected graph, we therefore have that $f_\OPT$ embeds all of $V(G)$ to the path in $e_\OPT$ from $z_1$ to $z_2$. Therefore, $f_\OPT$ yields a non-contracting $c_\OPT$-embedding of $G$ to the line, contradicting our assumption above. Thus, for any vertex $z_3$ in the path in $e_\OPT$ from $z_1$ to $z_2$, we have that $z_3$ is not proper w.r.t.~$f_\OPT$.

Using the following procedure, we can modify $f_\OPT, (H_\OPT, w_\OPT)$ so that $f_\OPT$ is still a $c_\OPT$-embedding of $G$ to $(H_\OPT, w_\OPT$ and $|V^{\neg}| + |E^{\neg}|$ is reduced by at least one.

If $V^{\neg} \neq \emptyset$, choose $z \in V^{\neg}$.
Modify $H_\OPT$ by applying rule 2 of Definition \ref{def:H_quasi} to all $e \in E(H_\OPT)$ adjacent to $z$, and then rule 1 to $z$.
With these modifications, $H_\OPT$ is now a subdivision of the quasi-subgraph of $H$ found by applying rule 3 to all edges in $E(H)$ adjacent to $z$, and then rule 1 to all vertices in the component containing $z$.
Before these modification, for all $u, v \in V(G)$, there exists a path $P_{u, v} \in (H_\OPT, w_\OPT)$ from $f_\OPT(u)$ to $f_\OPT(v)$ such that $|P_{u, v}| = d_{(H_\OPT, w_\OPT)}(f_\OPT(u), f_\OPT(v))$ and $z \not\in V(P_{u, v})$. Therefore, after the modifications, a corresponding path $P'_{u, v}$ exists in $(H_\OPT, w_\OPT)$, with $|P_{u, v}| = |P'_{u, v}|$. Thus after these modifications, $f_\OPT$ is again a non-contracting, $c_\OPT$-embedding of $G$ to $(H_\OPT, w_\OPT)$, and $|V^{\neq}|$ is reduced by at least 1.

If $V^{\neg} = \emptyset$, choose $\{a, b\} \in E^{\neg}$.
Modify $H_\OPT$ by applying rule 1 of Definition \ref{def:H_quasi} to every $z \in Z = \{e \in V(e_\OPT) : e \mbox{ is not proper w.r.t.~$f_\OPT$}\}$. 
Before these modifications, for all $u, v \in V(G)$, there exists a path $P_{u, v} \in (H_\OPT, w_\OPT)$ from $f_\OPT(u)$ to $f_\OPT(v)$ such that $|P_{u, v}| = d_{(H_\OPT, w_\OPT)}(f_\OPT(u), f_\OPT(v))$ and $Z \cap V(P_{u, v}) = \emptyset$. Therefore, after the modifications, path $P_{u, v}$ exists in $(H_\OPT, w_\OPT)$. Therefore, there is a single connected component $C$ of $(H_\OPT, w_\OPT) \setminus Z$ such that $f_\OPT(V(G)) \subseteq (C, w_\OPT)$.
Modify $H_\OPT$ again by applying rule 1 to any $v \in V(H)$ such that $f_\OPT(v) \notin C$.
Thus $f_\OPT$ remains a non-contracting, $c_\OPT$-embedding of $G$ to $(H_\OPT, w_\OPT)$, and $|E^{\neg}|$ is reduced by one.

The sum $|V^{\neg}| + |E^{\neg}|$ is finite, and so a finite number of iterations of the procedure above will yield a proper, non-contracting $c_\OPT$ embedding of $G$ to $(H_\OPT, w_\OPT)$, where $H_\OPT$ is some quasi-subgraph of $H$.

Suppose $f_\OPT$ is a proper, non-contracting $c_\OPT$-embedding of $G$ to $(H_\OPT, w_\OPT)$, where $H_\OPT$ is some quasi-subgraph of $H$, and $f_\OPT$ is not pushing.
Therefore, there exists $a, b \in V(G)$ such that $a, b$ are consecutive w.r.t~some edge $e \in E(H_\OPT)$, and since $f_\OPT$ is non-contracting,
\[
d_{e_\OPT}(f_\OPT(a), f_\OPT(b)) > d_G(a, b).
\]
Modify $H_\OPT$ to replace the path in $e_\OPT$ between $f_\OPT(a), f_\OPT(b)$ with a single edge $\{f_\OPT(a), f_\OPT(b)\}$. Modify $w_\OPT$ so that
\[
w_\OPT(\{f_\OPT(a), f_\OPT(b)\}) = d_G(a, b).
\]
Then for all $u, v \in V(G)$, we either have that
\[
d_{(H_\OPT, w_\OPT)}(f_\OPT(u), f_\OPT(v))
\]
is unchanged by the modification to $e_\OPT$, or
\begin{align*}
d_{(H_\OPT, w_\OPT)}(f_\OPT(u), f_\OPT(v)) 
&= d_{(H_\OPT, w_\OPT)}(f_\OPT(u), f_\OPT(a)) \\
&\tab+ d_{(H_\OPT, w_\OPT)}(f_\OPT(b), f_\OPT(v)) + d_G(a, b) \\
&< d_{(H_\OPT, w_\OPT)}(f_\OPT(u), f_\OPT(v)),
\end{align*}
or 
\begin{align*}
d_{(H_\OPT, w_\OPT)}(f_\OPT(u), f_\OPT(v)) 
&= d_{(H_\OPT, w_\OPT)}(f_\OPT(v), f_\OPT(a)) \\
&\tab+ d_{(H_\OPT, w_\OPT)}(f_\OPT(b), f_\OPT(u)) + d_G(a, b) \\
&< d_{(H_\OPT, w_\OPT)}(f_\OPT(u), f_\OPT(v))
\end{align*}
If it is that case that 
\begin{align*}
d_{(H_\OPT, w_\OPT)}(f_\OPT(u), f_\OPT(v)) 
&= d_{(H_\OPT, w_\OPT)}(f_\OPT(u), f_\OPT(a)) \\
&\tab+ d_{(H_\OPT, w_\OPT)}(f_\OPT(b), f_\OPT(v)) + d_G(a, b) \\
&< d_{(H_\OPT, w_\OPT)}(f_\OPT(u), f_\OPT(v)),
\end{align*}
then by the triangle inequality we have that
\begin{align*}
d_{(H_\OPT, w_\OPT)}(f_\OPT(u), f_\OPT(a))\\
+ d_{(H_\OPT, w_\OPT)}(f_\OPT(b), f_\OPT(v)) + d_G(a, b) \\
&\geq d_G(u, a) + d_G(b, v) + d_G(a, b) \\
&\geq d_G(u, v)
\end{align*}
and therefore
\begin{align*}
d_G(u, v) &\leq d_{(H_\OPT, w_\OPT)}(f_\OPT(u), f_\OPT(v)) \\
&< d_{(H_\OPT, w_\OPT)}(f_\OPT(u), f_\OPT(v)) \\
&\leq c_\OPT \cdot d_G(u, v).
\end{align*}
Similarly, if it is the case that 
\begin{align*}
d_{(H_\OPT, w_\OPT)}(f_\OPT(u), f_\OPT(v)) 
&= d_{(H_\OPT, w_\OPT)}(f_\OPT(v), f_\OPT(a)) \\
&\tab+ d_{(H_\OPT, w_\OPT)}(f_\OPT(b), f_\OPT(u)) + d_G(a, b) \\
&< d_{(H_\OPT, w_\OPT)}(f_\OPT(u), f_\OPT(v))
\end{align*}
then we have that
\begin{align*}
d_G(u, v) &\leq d_{(H_\OPT, w_\OPT)}(f_\OPT(u), f_\OPT(v)) \\
&< d_{L(H_\OPT, w_\OPT)}(f_\OPT(u), f_\OPT(v)) \\
&\leq c_\OPT \cdot d_G(u, v).
\end{align*}
Therefore, after these modifications $f_\OPT$ remains a $c_\OPT$-embedding of $G$ to $(H_\OPT, w_\OPT)$.

By repeated modifications as described above, $f_\OPT$ can be modified until it is a proper, pushing, non-contracting $c_\OPT$-embedding of $G$ to $(H_\OPT, w_\OPT)$, with $H_\OPT$ a quasi-subgraph of $H$.
\end{proof}
Finally, we show the local density lemma. 

\begin{lemma}[Local Density] \label{lemma:local_density}
Let $f_\OPT$ be a non-contracting  $c_\OPT$-embedding of $G$ to some $(H^q, w^q)$, where $H^q$ is the subdivision of some quasi-subgraph of $H$, and $w^q : E(H^q) \rightarrow \mathbb{R}^{>0}$. Then for all $v \in V(G)$, for any $r \geq 0$,
\[
|\ball_{G}(v, r)| \leq 2r \cdot c_\OPT \cdot |E(H)|.
\]
\end{lemma}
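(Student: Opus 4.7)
The plan is to combine the expansion upper bound of $f_\OPT$ with a packing argument on each topological arc of $H^q$, using the non-contraction property together with the fact that $G$ is unweighted.

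First, I would translate the density question in $G$ into a question about images in $(H^q, w^q)$. By the definition of a non-contracting $c_\OPT$-embedding, for every $u \in \ball_G(v, r)$ we have $d_{(H^q, w^q)}(f_\OPT(v), f_\OPT(u)) \leq c_\OPT \cdot d_G(v, u) \leq c_\OPT r$. Hence $f_\OPT(\ball_G(v, r))$ lies entirely inside the metric ball $B := \ball_{(H^q, w^q)}(f_\OPT(v), c_\OPT r)$ of the 1-complex $(H^q, w^q)$. So it suffices to bound the number of images of vertices of $G$ that can land in $B$.

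Next, I would decompose $H^q$ into topological arcs, one for each edge of the underlying (non-subdivided) quasi-subgraph $H'$ of $H$. For each such arc $e_\OPT$, the intersection $e_\OPT \cap B$ is a union of at most two sub-arcs (one from each endpoint of $e_\OPT$ into $B$, or a single sub-arc centered at $f_\OPT(v)$ if $f_\OPT(v) \in e_\OPT$) of total $w^q$-length at most $2 c_\OPT r$. On the other hand, if $u_1 \neq u_2$ are distinct vertices of $G$ with $f_\OPT(u_1), f_\OPT(u_2) \in e_\OPT$, then $d_G(u_1, u_2) \geq 1$ (because $G$ is unweighted), and non-contraction gives $d_{(H^q, w^q)}(f_\OPT(u_1), f_\OPT(u_2)) \geq 1$; since the $e_\OPT$-arc distance between these images upper-bounds the global $(H^q, w^q)$-distance, consecutive images of $V(G)$ along $e_\OPT$ are at arc-distance at least $1$. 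Therefore the number of vertices of $G$ mapped into $e_\OPT \cap B$ is at most $2 c_\OPT r$.

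Finally, I would sum over the arcs. By Definition \ref{def:H_quasi}, a quasi-subgraph of $H$ has at most $|E(H)|$ topological arcs (rules 1 and 2 only remove structure, and rule 3 merely trades one edge for pendant arcs, which can be absorbed into the count), so summing the per-arc bound yields $|\ball_G(v, r)| \leq 2 c_\OPT r \cdot |E(H)|$, as claimed. The main subtlety is the per-arc counting: the naive bound produces a small additive $+1$ or $+2$ term from boundary endpoints of the sub-arcs, which must either be absorbed into a slightly loose constant or handled more carefully (e.g., by merging the two sub-arcs into a single chain of images whose total arc-length between them is at most $2 c_\OPT r$). This boundary accounting is the only real obstacle; everything else is a direct consequence of the expansion and contraction bounds.
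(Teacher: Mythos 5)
Your approach is essentially the paper's: use the expansion bound to place all images of $\ball_G(v,r)$ in a ball of radius $c_\OPT r$ in $(H^q,w^q)$, use non-contraction together with $d_G(u_1,u_2)\geq 1$ (unweighted $G$) to force consecutive images on a single arc to be at arc-distance at least $1$, bound the count per arc by $2c_\OPT r$, and multiply by the number of arcs. Up to that point your argument is sound and matches the paper step for step.

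The one genuine gap is the final counting over arcs. Your claim that a quasi-subgraph of $H$ has at most $|E(H)|$ topological arcs is false: rule 3 of Definition \ref{def:H_quasi} replaces one edge by \emph{two} pendant edges, so $|E(H^q)|$ can be as large as $2|E(H)|$ (the paper itself uses this bound in its running-time analysis). Your parenthetical ``absorption'' of the two pendant arcs into one edge's budget is not actually carried out, and it cannot be made to give $2c_\OPT r$ per original edge in all cases: if $f_\OPT(v)$ lies on one of the two pendant arcs, that arc can meet the ball in a sub-arc of length up to $2c_\OPT r$ while its sibling still contributes up to $c_\OPT r$, i.e.\ up to $3c_\OPT r$ for that pair; together with the additive $+1$ per sub-arc that you flag, the constant $2$ in the statement is not reached by this route. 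To be fair, the paper has the same defect: its own proof bounds the count by $2rc_\OPT|E(H^q)|\leq 4rc_\OPT|E(H)|$, which does not match the factor $2$ in the lemma statement either, and the precise constant is immaterial to the lemma's later uses. So your write-up is acceptable if you either prove the bound with the honest constant (e.g.\ $4rc_\OPT|E(H)|$, or $\cO(rc_\OPT|E(H)|)$) or replace the absorption remark by the careful case analysis just described.
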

\begin{proof}
Since $f_\OPT$ is a non-contracting $c_\OPT$-embedding, for any $u\in V(G)$ such that
\[
d_G(u, v) \leq r
\]
we have that
\begin{align*}
1 &\leq d_{(H_\OPT, w_\OPT)}(f_\OPT(u), f_\OPT(v)) \\
&\leq d_G(u, v) \\
&\leq c_\OPT \cdot d_{(H_\OPT, w_\OPT)}(f_\OPT(u), f_\OPT(v)) \\
&\leq c_\OPT \cdot r.
\end{align*}
Therefore, for each edge $e \in E(H)$, there are at most $2c_\OPT \cdot r$ vertices $x \in \ball_G(v, r)$ such that $f_\OPT(x) \in e_\OPT$, and therefore 
\[
|\ball_{G}(v, r)| \leq 2r \cdot c_\OPT \cdot |E(H^q)|.
\]
Thus, by Definition \ref{def:H_quasi}, 
\[
|\ball_{G}(v, r)| \leq 4r \cdot c_\OPT \cdot |E(H)|.
\]
\end{proof}

\section{An approximation algorithm for embedding into arbitrary graphs}\label{sec:H_approx}
In this section we give our approximation algorithm for embedding into 
arbitrary graph. In particular, we prove Theorem~\ref{thm:mainApprox}. 
By Lemma~\ref{lemma:H_proper} there is a proper, pushing $c_{\OPT}$ embedding of $G$ into a quasi-subgraph $H^q$ of $H$ with edge weight function $w^q$. Furthermore, by subdividing each edge of $H$ sufficiently many times, for any $\epsilon > 0$ any $c$-embedding of $G$ into $(H^q, w^q)$ can be turned into an $(c+\epsilon)$-embedding of $G$ into a subdivision of $H$.

The weighted quasi-subgraph $(H^q, w^q)$ of $H$ is a subdivision of a subgraph $H_{sub}$ of $H$. Since $H$ only has $2^{|E(H)+V(H)|}$ subgraphs our algorithm can guess $H_{sub}$. Thus, for the purposes of our approximation algorithm, it is sufficient to find an embedding of $G$ into a weighted subdivision $(H_{ALG}, w_{ALG})$ of $H_{sub}$ under the assumption that a  proper, pushing $c_{\OPT}$ embedding of $G$ into some weighted subdivision of $H_{sub}$ exists. Furthermore, any proper and pushing embedding is non-contracting and has contraction exactly equal to $1$. Such an embedding $f$ is a $c$-embedding if and only if for every edge 

\begin{align}\label{eqn:distDefnByEdge}
uv \in E(G), d_{(H,w)}(f(u), f(v)) \leq c.
\end{align}
Thus, to prove that our output embedding is a $c$-embedding (for some $c$) we will prove that it is proper, pushing and that~(\ref{eqn:distDefnByEdge}) is satisfied. Thus, the main technical result of this section is encapsulated in the following lemma.

\begin{lemma}\label{lem:mainApproxH}
There is an algorithm that takes as input a graph $G$ with $n$ vertices, a graph $H$ and an integer $c$, runs in time $2^h \cdot n^{\cO(1)}$ and either correctly concludes that there is no $c$-embedding of $G$ into a weighted subdivision of $H$, or produces a proper, pushing $c'$-embedding of $G$ into a weighted subdivision of a subgraph $H'$ of $H$, where $c_{\ALG} = \cO(c^{24}h^9)$. 
\end{lemma}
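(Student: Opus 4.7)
The plan is to implement the strategy sketched in Section~3. By the reduction in the paragraph preceding the lemma, we may assume that the target is a weighted subdivision of a fixed subgraph $H_{sub}$ of $H$ (obtained by guessing among $2^{|V(H)|+|E(H)|}$ subgraphs, absorbing the $2^h$ factor) and that we look for a proper, pushing embedding. Such an embedding is automatically non-contracting with contraction equal to $1$, so by~(\ref{eqn:distDefnByEdge}) the task reduces to ensuring that for every $uv\in E(G)$ the images $f(u), f(v)$ lie within distance $c_{ALG}$ in the output. The local density bound of Lemma~\ref{lemma:local_density} will be the main quantitative tool, capping how many vertices of $G$ can be crammed into any region of the target.

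Next I would set a threshold $\ell = \Theta(c^3)$ and guess, for each edge of $H_{sub}$, whether its subdivision in the unknown optimal target has length at least $\ell$ (``long'') or less (``short''); this absorbs another $2^{|E(H)|}$ factor. The short edges partition $H_{sub}$ into connected \emph{clusters}; a cluster is \emph{interesting} if at least $3$ long edges are incident to it and \emph{boring} otherwise. The key structural fact is that chains of boring clusters joined by long edges embed into a line with distortion $\cO(c)$, so a BFS in $G$ cannot locally distinguish them from a single long edge, whereas around an interesting cluster the graph $G$ must ``grow in at least $3$ directions'' on a ball of radius $\Omega(\ell)$, which is BFS-detectable. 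Using BFS-based line-embedding tests in the spirit of B\u{a}doiu \etal~\cite{BadoiuDGRRRS05}, I compute a set $F\subseteq V(G)$ with $|F|\leq |V(H)|$ such that every interesting cluster has some vertex of $F$ mapped near it in the optimum, and every vertex of $F$ lies near some interesting cluster; by Lemma~\ref{lemma:local_density}, the $R$-neighborhood of $F$ in $G$ for a chosen $R=\poly(c,h)$ contains only $\poly(c,h)$ vertices.

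Given $F$, I build the embedding in two phases. First, for each vertex in the $R$-ball around $F$ I place it at an arbitrary spot near the corresponding interesting cluster of $H_{ALG}$; local density ensures the crowding only contributes a $\poly(c,h)$ factor to the expansion. The remaining vertices of $G$ split into connected components, and each such component must, in the optimum, lie on a single long edge or on a chain of boring clusters joined by long edges, i.e.\ on a line-like sub-arc of the target. For each component I apply the line-embedding approximation of \cite{BadoiuDGRRRS05} to produce an $\cO(c)$-embedding into a path, and stitch its endpoints to the previously placed boundary vertices by choosing the corresponding edge weights of $H_{ALG}$ to absorb any slack. Properness and the pushing property are enforced by construction: edges of $H_{ALG}$ not traversed by any image pair are deleted, and each edge whose endpoints correspond to two consecutive $G$-vertices is shrunk to their $d_G$-distance. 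Verifying~(\ref{eqn:distDefnByEdge}) for each $uv\in E(G)$ splits into the case where $uv$ is inside one line-like component (distortion $\cO(c)$) and the case where one endpoint lies in the $R$-ball of $F$ (distortion $\poly(c,h)$); propagating these bounds through boring-cluster chains, whose lengths can themselves be polynomial in $c$ and $h$, gives $c_{ALG}=\cO(c^{24}h^{9})$.

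The main obstacle, and the source of the large polynomial in the guarantee, is correctly identifying $F$ and the interesting clusters using only BFS data in $G$, without direct access to the optimal embedding. As the overview warns, a boring cluster can contain up to $|E(H)|$ short edges, each nearly as long as a long edge, so a long chain of boring clusters may superficially resemble a branching region. Overcoming this requires choosing $\ell$ and $R$ as carefully balanced polynomials in $c$ and $h$: large enough that the BFS certificate of $3$-way branching around an interesting cluster cannot be faked by any chain of boring clusters, yet small enough that local density still controls the size of the $R$-neighborhood of $F$. Threading this needle, and then carrying the resulting slack through the stitching of the line pieces to the interesting-cluster placement, is exactly what forces the exponent $24$ in $c$ and $9$ in $h$ in the final bound.
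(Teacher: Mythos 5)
Your proposal is a faithful paraphrase of the paper's Section~3 overview and follows the same global plan (guess the subgraph, classify edges as long/short with $\ell=\Theta(c^3)$, locate the preimages of interesting clusters via a covering set $F$, embed the remaining pieces into lines with the algorithm of B\u{a}doiu~\etal, and stitch). But the heart of the paper's proof is exactly the step you leave as a black box: the construction of $F$. You write that you ``compute a set $F$ such that every interesting cluster has some vertex of $F$ mapped near it,'' and then yourself flag this as ``the main obstacle,'' resolving it only by the remark that $\ell$ and $R$ must be ``carefully balanced polynomials.'' The paper spends the bulk of Section~\ref{sec:H_approx} on precisely this point: the SEARCH procedure (maintaining the left/right BFS partition $X_i^L,X_i^R$), Lemma~\ref{lem:classifyStrong} showing the partition tracks the two sides of the cluster-chain, Lemma~\ref{lem:leaveThenHalt} showing that leaving the inner chain forces success or failure within a bounded number of iterations, Lemma~\ref{lem:closeCluster} showing any success vertex is within $2\ell h$ of a cluster, and the COVER algorithm whose $h$ non-deterministic bits (whether the start vertex is $4c^2$-far from every cluster) produce the $2^h$ family and guarantee termination with $|F|\leq h$ covering all interesting clusters. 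Note also that in the paper the long/short classification is purely an analysis device relative to the unknown $w_{\OPT}$; your extra $2^{|E(H)|}$ guess of which edges are long is harmless but does not substitute for this machinery, since it gives no information about where in $G$ the cluster preimages sit.

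There are secondary gaps in the stitching phase. You assert that each remaining component ``must, in the optimum, lie on a single long edge or on a chain of boring clusters,'' and that each can be assigned to an edge of the target; the paper needs Lemmas~\ref{lem:shallowGrave}, \ref{lem:onChain} and~\ref{lem:noOnSame} (shallow components attach to a single ball component; every deep component lies on one cluster-chain; no two deep components lie on the same chain) to conclude that the graph $H'$ built from the component structure is a topological subgraph of $H$ and that the assignment of deep components to edges is consistent. You also claim distortion $\cO(c)$ inside a line-like component, whereas a component lying on a boring-cluster chain only admits a line embedding of distortion roughly $(\ell h)^2 c$, whence the $(\ell h c)^4$ bound of Lemma~\ref{lem:deepCompEmbed}, and the stitching needs the additional property that vertices with outside neighbors land within $(\ell h c)^6$ of the endpoints of that line embedding. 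Without these pieces the final accounting leading to $c_{\ALG}=\cO(c^{24}h^9)$ (Lemma~\ref{lem:appxSuccess}) cannot be carried out, so as written the proposal restates the intended strategy but does not constitute a proof.
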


\paragraph{Definitions.} To prove Lemma~\ref{lem:mainApproxH} we need a few definitions. Throughout the section we will assume that there exists a weighted subdivision $(H_{\OPT}, w_{\OPT})$ and a $c$-embedding $f_{\OPT} : V(G) \rightarrow V(H_{\OPT})$. This embedding is unknown to the algorithm and will be used for analysis purposes only. Every edge $e=uv$ in $H$ corresponds to a path $P_e$ in $H_{\OPT}$ from $u$ to $v$. Based on the embedding $f_{\OPT} : V(G) \rightarrow V(H_{\OPT})$ we define the {\em embedding pattern} function $\hat{f}_{\OPT} : V(G) \rightarrow V(H) \cup E(H)$ as follows. For every vertex $v \in V(G)$ such that $f_{\OPT}$ maps $v$ to a vertex of $H_{\OPT}$ that is also a vertex of $H$, $\hat{f}_{\OPT}$ maps $v$ to the same vertex. In other words if $f_{\OPT}(v) = u$ for $u \in V(H)$, then $\hat{f}_{\OPT}(v) = u$. Otherwise $f_{\OPT}$ maps $v$ to a vertex $u$ on a path $P_e$ corresponding to an edge $e \in E(H)$. In this case we set $\hat{f}_{\OPT}(v) = e$. 

We will freely make use of the ``inverses'' of the functions $f_{\OPT}$ and $\hat{f}_{\OPT}$. For a vertex set $C \subseteq V(H_{\OPT})$ we define $f^{-1}_{\OPT} = \{v \in V(G) :  f_{\OPT}(v) \in C\}$. We will also naturally extend functions that act on elements of a universe to subsets of that universe. For example, for a set $F \subseteq E(H_{\OPT})$ we use $w_{\OPT}(F)$ to denote $\sum_{e \in F} w_{\OPT}(e)$. We further extend this convention to write $w_{\OPT}(P_e)$ instead of $w_{\OPT}(E(P_e))$ for a path (or a subgraph) of $H_{\OPT}$.
We extend the distance function to also work for distances between sets


Throughout the section we will use the following parameters, for now ignore the parenthesized comments to the definitions of the parameters, these are useful for remembering the purpose of the parameter when reading the proofs.
\begin{itemize}\setlength\itemsep{-.5mm}
\item $h = |E(H)|$ (the number of edges in $H$),
\item $c$ (the distortion of $f_{\OPT}$),
\item $\ell = 20c^3$ (long edge threshold)
\item $r = 5\ell h$ (half of covering radius)
\item $c_{ALG} = 64 \cdot 10^6 \cdot c^{24} (h+1)^9$ (distortion of output embedding) 
\end{itemize}

Using the parameter $\ell$ we classify the edges of $H$ into short and  long edges. An edge $e \in E(H)$ is called {\em short} if $w_{\OPT}(P_e)  \leq \ell$ and it is called {\em long} otherwise. The edge sets $E_{short}$ and $E_{long}$ denote the set of short and long edges in $H$ respectively. A {\em cluster} in $H$ is a connected component $C$ of the graph $H_{short} = (V(H), E_{short})$. We abuse notation and denote by $C$ both the connected component and its vertex set. The {\em long edge degree} of a cluster $C$ in $H$ is the number of long edges in $H$ incident to vertices in $C$. Here a long edges whose both endpoints are in $C$ is counted twice. A cluster $C$ of long edge degree at most $2$ is called {\em boring}, otherwise it is {\em interesting}. Most of the time when discussing clusters, we will be speaking of clusters in $H$. However we overload the meaning of the word cluster to mean something else for vertex sets of $G$. A {\em cluster in $G$} is a set $C$ such that there exists a cluster $C_H$ of $H$ such that $C = \{v \in V(G) ~:~ \hat{f}_{\OPT}(v) \in V(C_H) \cup E(C_H)\}$. Thus there is a one to one correspondence between clusters in $G$ and $H$.

The following lemma is often useful when considering embeddings into the line, or ``line-like structures''. We will need this lemma to analyze the parts of the graph $G$ that the embedding $f_{\OPT}$ maps to long edges of $H$

\begin{lemma} \label{lem:funnyIneq}
Suppose there exists a $c$-embedding $f_{\OPT}$ of $G$ into $(H_{\OPT}, w_{\OPT})$, and let $a$, $b$ and $v$ be vertices of $G$ such that $d_G(a, v) = d_G(b, v)$ and a shortest path from $a$ to $v$ in $G$ contains a vertex $w$ such that $d_H(f_{\OPT}(w), f_{\OPT}(b)) \leq c$. Then $d_G(a, b) \leq 2c$.
\end{lemma}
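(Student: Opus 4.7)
The plan is to prove this by two applications of the triangle inequality together with the non-contracting property of $f_{\OPT}$. The key observation is that the hypothesis $d_H(f_{\OPT}(w), f_{\OPT}(b)) \le c$ controls the distance in $G$ between $w$ and $b$, precisely because every embedding considered in this section is (or can be assumed to be) non-contracting: $d_G(w,b) \le d_H(f_{\OPT}(w), f_{\OPT}(b)) \le c$.

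First I would set $D = d_G(a,v) = d_G(b,v)$ and use the fact that $w$ lies on a shortest $a$-to-$v$ path in $G$ to write $d_G(a,w) + d_G(w,v) = D$. Second, the triangle inequality applied to $b$, $w$, $v$ gives
\[
d_G(w,v) \ge d_G(b,v) - d_G(w,b) \ge D - c.
\]
Combining these two lines yields $d_G(a,w) \le c$. Third, applying the triangle inequality once more to $a$, $w$, $b$ gives
\[
d_G(a,b) \le d_G(a,w) + d_G(w,b) \le c + c = 2c,
\]
which is the desired bound.

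There is essentially no hard step here; the proof is a direct calculation. The only subtlety worth flagging is where the assumption of non-contraction enters: the hypothesis only controls $d_H(f_{\OPT}(w), f_{\OPT}(b))$, and to convert that into control over $d_G(w,b)$ one needs the lower bound $d_G \le d_H \circ f_{\OPT}$. This is justified because, as established earlier in the section (via Lemma~\ref{lemma:H_proper} and the scaling argument for proper, pushing embeddings), we may assume $f_{\OPT}$ is non-contracting with contraction exactly $1$. No further assumptions on properness, pushing, or the structure of $H_{\OPT}$ are needed for this lemma.
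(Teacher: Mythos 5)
Your proof is correct and follows essentially the same route as the paper: use non-contraction to get $d_G(w,b)\le c$, combine the shortest-path decomposition $d_G(a,w)+d_G(w,v)=d_G(a,v)$ with the triangle inequality at $b,w,v$ to cancel $d_G(w,v)$ and conclude $d_G(a,w)\le c$, then one more triangle inequality gives $d_G(a,b)\le 2c$. Your write-up is in fact cleaner than the paper's (which contains typos such as $d(w,s)$ for $d_G(w,v)$), and your remark that non-contraction is the only assumption really used matches how the lemma is applied in the section.
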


\begin{proof}
We have that 
$d_G(b, v) = d_G(a, v) = d_G(a, w) + d_G(w, s)$
And also that $d_G(b, v) \leq d_G(b, w) + d(w, s)$, but $f_{\OPT}$ is non-contracting so 
$d_G(b, w) \leq d_H(f_{\OPT}(w), f_{\OPT}(b)) + d(w, s) \leq c + d(w, s)$.
We conclude that $d_G(a, w) + d_G(w, s) \leq c + d(w, s)$, and cancelling $d(w, s)$ on both sides yields $d_G(a, w) \leq c$. Finally we have that $d_G(a, b) \leq d_G(a, w) +  d_G(w, b) \leq c + d_H(f_{\OPT}(w), f_{\OPT}(b)) \leq 2c$, concluding the proof.
\end{proof}

A {\em cluster-chain} is a sequence $C_1, e_1, C_2, e_2, \ldots, e_{t-1}, C_t$ such that the following conditions are satisfied. First, the $C_i$'s are distinct clusters in $H$, except that possibly $C_1 = C_t$. Second, $C_1$ and $C_t$ are interesting, while $C_2, \ldots C_{t-1}$ are boring. Finally, for every $i < t$ the edge $e_i$ is a long edge in $H$ connecting a vertex of $C_i$ to a vertex of $C_{i+1}$.

\subsection{Using Breadth First Search to Detect Interesting Clusters}
In this subsection we prove a lemma that is the main engine behind Lemma~\ref{lem:mainApproxH}. Once the main engine is set up, all we will need to complete the proof of Lemma~\ref{lem:mainApproxH} will be to complete the embedding by running the approximation algorithm for embedding into a line for each cluster-chain of $H$, and stitching these embeddings together.

Before stating the lemma we define what it means for a vertex set $F$ in $G$ to cover a cluster. We say that a vertex set $F \subseteq V(G)$ $r$-{\em covers} a cluster $C$ in $G$ if some vertex in $F$ is at distance at most $r$ from at least one vertex in $C$. A vertex set $F \subseteq V(G)$ covers a cluster $C$ in $H$ if $F$ covers the cluster $C_G$ corresponding to $C$ in $G$.

\begin{lemma}[Interesting Cluster Covering Lemma]\label{lem:interestingCluster}
There exists an algorithm that takes as input $G$, $H$ and $c$, runs in time $2^{h}n^{\cO(1)}$ and halts. If there exists a proper $c$-embedding $\hat{f}_{\OPT}$ from $G$ to a weighted subdivision of $H$, the algorithm outputs a family ${\cal F}$ such that $|{\cal F}| \leq 2^h$, every set $F \in {\cal F}$ has size at most $h$, and there exists an $F \in {\cal F}$ that $2r$-covers all interesting clusters of $H$.
\end{lemma}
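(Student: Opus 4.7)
The plan has three ingredients: enumeration of candidate short/long labellings of $E(H)$, a BFS-based detector that flags vertices of $G$ mapped near vertices of $H$ of long-edge degree at least three, and greedy selection of representatives. First, iterate over all $2^h$ subsets $E_{\text{long}}\subseteq E(H)$. Each subset determines a candidate cluster decomposition of $H$ (components of $H$ after removing the guessed long edges) and a candidate set $\mathcal{I}$ of interesting clusters; a handshake argument gives $|\mathcal{I}|\leq \lfloor 2h/3\rfloor \leq h$. Only the guess matching the partition induced by $w_\OPT$ and the threshold $\ell$ needs to succeed, and the others are harmless. For a fixed guess, for each $v\in V(G)$ compute $B_v:=\ball_G(v,\rho)$ for a BFS radius $\rho$ of order $\ell$, and run the polynomial-time $\cO(1)$-approximation algorithm of B\u{a}doiu \etal\ for line embedding on the induced metric. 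Declare $v$ \emph{positive} if the algorithm fails to produce a line embedding of distortion at most $\beta c$ for an appropriate constant $\beta$; then greedily extract a maximal $F$ from the positive vertices subject to $d_G(v_i,v_j)>2r$ for all distinct $v_i,v_j\in F$.

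Correctness under the correct guess will hinge on two dual claims. \emph{Completeness}: for every $C\in\mathcal{I}$ some $u\in C_G$ is positive. Pick $u$ with $f_\OPT(u)$ close to a vertex $x$ of $C$ having long-edge degree at least three. Because $f_\OPT$ is non-contracting and each of the three long edges incident to $x$ has $w_\OPT$-length exceeding $\ell$, $B_u$ contains three ``arms'' of length $\Omega(\rho/c)$ that meet only through (a neighborhood of) $u$. The standard lower bound on line-embedding a three-pronged star forces any line embedding of $B_u$ to have distortion $\Omega(\rho/c)$, which exceeds $\beta c$ because $\rho\gtrsim \ell = 20c^3$. \emph{Soundness}: every positive $v$ is within $G$-distance roughly $r-\diam_G(C_G)/2$ of some $C\in\mathcal{I}$. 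Contrapositively, if $f_\OPT(v)$ is farther than $r$ (in $H_\OPT$) from every interesting cluster, then $B_v$ is pulled back from a path-like region of $H_\OPT$---either the interior of a long edge or a sub-chain of boring clusters---on which Lemma~\ref{lem:funnyIneq} forces BFS layers to propagate in only two directions while Lemma~\ref{lemma:local_density} caps each layer's width by $\cO(c)$. The induced metric on $B_v$ is therefore line-embeddable with distortion $\cO(c)$, and the line-embedding approximation certifies $v$ as negative.

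Combining the two claims gives the lemma: by maximality of greedy plus completeness, every $C\in\mathcal{I}$ has a positive $u\in C_G$ within $G$-distance $2r$ of some $v_i\in F$, so $F$ is $2r$-covering; by soundness together with $\diam_G(C_G)\leq \ell h$ (a consequence of $f_\OPT$ being non-contracting and the $H_\OPT$-diameter of a cluster being at most $\ell h$), any two $v_i,v_j\in F$ covering the same interesting cluster would satisfy $d_G(v_i,v_j)\leq 2r$, contradicting the greedy rule, so each $v_i$ covers a distinct interesting cluster and $|F|\leq |\mathcal{I}|\leq h$. The total running time is $2^h\cdot n^{\cO(1)}$, since each guess incurs one polynomial-time round of BFS, line-embedding approximation, and greedy extraction. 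The main obstacle will be to calibrate $\rho$, $\beta$, and $r$ so that simultaneously (a) the three-pronged-star lower bound for $B_u$ around an interesting cluster beats the detector's threshold, (b) $B_v$ remains line-embeddable with low distortion whenever $v$ lies in a path-like region of $H_\OPT$, and (c) two positive vertices near a common interesting cluster are guaranteed to lie within $2r$ of each other; the published values $\ell = 20c^3$ and $r = 5\ell h$ are exactly the calibration that makes these constraints hold in concert.
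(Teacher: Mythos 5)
Your detector is the wrong primitive, and the two correctness claims it rests on both fail. The \emph{soundness} claim --- that if $f_{\OPT}(v)$ is far from every interesting cluster then $(\ball_G(v,\rho),d_G)$ is line-embeddable with distortion $\cO(c)$ --- is false. A \emph{boring} cluster is only constrained to have long-edge degree at most $2$; internally it may be, say, a star of three short edges, each of $w_{\OPT}$-length close to $\ell=20c^3$ (or a cycle of short edges). Since $f_{\OPT}$ is proper and non-contracting with expansion at most $c$, the preimage of such a cluster contains three arms of $G$-length about $\ell/c=20c^2$ meeting in a small blob, and any line embedding of this set has distortion $\Omega(c^2)\gg\beta c$; the paper's own Lemma~\ref{lem:deepCompEmbed} only achieves $(\ell\cdot h\cdot c)^{\cO(1)}$ on boring chains, not $\cO(c)$. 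Such a cluster can sit arbitrarily far from every interesting cluster, so your ``positive'' vertices are not confined to neighborhoods of interesting clusters. This breaks the size bound ($|F|\leq|\mathcal{I}|\leq h$ was argued by assigning each greedy pick to a distinct \emph{interesting} cluster) and shows that no calibration with threshold $\Theta(c)$ can work; the assertion that $\ell=20c^3$, $r=5\ell h$ ``are exactly the calibration'' is therefore unsupported and, for this detector, false. The \emph{completeness} claim also fails with $\rho=\cO(\ell)$: the three long edges of an interesting cluster need not share a vertex --- their attachment points can be pairwise up to $\ell h$ apart in $H_{\OPT}$ --- so a ball of radius $\cO(\ell)$ around a vertex mapped near the cluster may contain only one arm and look perfectly line-like. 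This is precisely the ``a cluster can be quite large compared to a long edge'' complication the paper singles out; repairing it forces $\rho=\Omega(\ell h)$ and a much larger threshold, which in turn pushes the covering radius beyond the required $2r$.

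The paper avoids both problems by not attempting to certify line-embeddability of balls at all. Its SEARCH routine is a BFS that splits each layer into a left and a right part and declares success only when the search provably grows in three directions for a long time (a side exceeding $2c^2$ vertices, two vertices of a side at distance more than $2c$, or the two sides meeting), and Lemma~\ref{lem:closeCluster} only guarantees that the reported vertex is within $2\ell h$ of \emph{some} cluster --- boring clusters are tolerated, and the count and the $2r$-covering come from the packing structure of COVER (new starts are outside $\ball_G(F,r)$, SEARCH fails upon approaching $F$) together with the final-stage argument. Note also that the paper's $2^h$ does not come from guessing $E_{long}$ --- your guess is never used by your detector or greedy, so your family is effectively a single set --- but from a per-stage binary choice in COVER between adding the start vertex $v$ or SEARCH's output $\hat{u}$, needed because SEARCH's analysis is only valid when $v$ is far from all clusters. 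In short, your approach is genuinely different, but as proposed the soundness/completeness dichotomy it needs does not hold, and the missing ideas are exactly the left/right BFS certificate and the $v$-versus-$\hat{u}$ nondeterminism of the paper.
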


Towards the proof of Lemma~\ref{lem:interestingCluster} we will design an algorithm that iteratively adds vertices to a set $F$. During the iteration the algorithm will make some non-deterministic steps, these steps will result in the algorithm returning a family of sets ${\cal F}$ rather than a single set $F$.

\subsection{The SEARCH algorithm}
We now describe a crucial subroutine of the algorithm of Lemma~\ref{lem:interestingCluster} that we call the SEARCH algorithm. The algorithm takes as input $G$, $c$, a set $F \subseteq V(G)$ and a vertex $v$. The algorithm explores the graph, starting from $v$ with the aim of finding a local structure in $G$ that on one hand, can not be embedded into the line with low distortion, while on the other hand is far away from $F$. It will either output {\sf fail}, meaning that the algorithm failed to find a structure not embeddable into the line, or {\sf success} together with a vertex $\hat{u}$, meaning that the algorithm succeeded to find a structure not embeddable into the line, and that $u$ is close to this structure. We begin with describing the algorithm, we will then prove a few lemmata describing the behavior of the algorithm. 

\paragraph{Description of the SEARCH algorithm}
The algorithm takes as input $G$, $c$, a set $F \subseteq V(G)$ and a vertex $v$. It performs a breadth first search (BFS) from $v$ in $G$. Let $X_1$, $X_2$, etc. be the BFS layers starting from $v$. In other words $X_i = \{x \in V(G) ~:~ d_G(v,x) = i\}$. The algorithm inspects the BFS layers $X_1, X_2, \ldots$ one by one in increasing order of $i$.

For $i < 2c^2$ the algorithm does nothing other than the BFS itself. For $i = 2c^2$ the algorithm proceeds as follows. It picks an arbitrary vertex $v_L \in X_i$ and picks another vertex $v_R \in X_i$ at distance at least $2c+1$ from $v_L$ in $G$. Such a vertex $v_R$ might not exist, in this case the algorithm proceeds without picking $v_R$.
The algorithm partitions $X_i$ into $X_i^L$ and $X_i^R$ in the following way. For every vertex $x \in X_i$, if $d_G(x, v_L) \leq 2c$ then $x$ is put into $X_i^L$. If $d_G(x, v_R) \leq 2c$ then $x$ is put into $X_i^R$. If some vertex $x \in X_i$ is put {\em both} into $X_i^L$ and in $X_i^R$, or {\em neither} into $X_i^L$ nor into $X_i^R$  the algorithm returns {\sf success} together with $\hat{u} = v$.

For $i > 2c^2$ the algorithm proceeds as follows. If any vertex in $X_i$ is at distance at most $r$ from any vertex in $F$ (in the graph $G$), the algorithm outputs {\sf fail} and halts. Otherwise, the algorithm partitions $X_i$ into $X_i^L$ and $X_i^R$. The vertex $x \in X_i$ is put into $X_i^L$ if $x$ has a neighbor in $X_{i-1}^L$ and into  $X_i^R$ if $x$ has a neighbor in $X_{i-1}^R$. Note that $x$ has at least one neighbor in $X_{i-1}$, and so $x$ will be put into at least one of the sets  $X_{i}^L$ or $X_i^R$. If $x$ is put into both sets  $X_{i}^L$ and $X_i^R$, the algorithm outputs {\sf success} with $\hat{u} = x$ and halts. If $|X_i^L| > 2c^2$ or if two vertices in $X_i^L$ have distance at least $2c+1$ from each other in $G$ the algorithm picks a vertex $x \in X_i^L$ and returns {\sf success} with $\hat{u} = x$. Similarly, if $|X_i^R| > 2c^2$ or if two vertices in $X_i^R$ have distance at least $2c+1$ from each other in $G$ the algorithm picks a vertex $x \in X_i^R$ and returns {\sf success} with $\hat{u} = x$. If the BFS stops, (i.e $X_i = \emptyset$), the algorithm outputs {\sf fail}.

\paragraph{Properties of the SEARCH algorithm.}
We will only give guarantees on the behavior of the SEARCH algorithm provided that there exists a $c$-embedding $f_{\OPT}$ of $G$ into $(H_{\OPT}, w_{\OPT})$, and that $v$ is at distance at least $4c^2$  from every cluster $C$ in $G$. Therefore {\em within this subsection, $v$ refers to the vertex SEARCH is started from, and all lemmas assume that these two conditions are satisfied.} In this case we have that $\hat{f}_{\OPT}(v) = e$ for a long edge $e \in E(H)$. The edge $e$ belongs to a unique cluster-chain $C_1, e_1, C_2, e_2, \ldots, e_{t-1}, C_t$. For some $q \leq t-1$ we have that $e_q = e$. 

The vertex $v$ splits the cluster-chain in two parts, $C_1, e_1, C_2, e_2, \ldots, C_{q-1}$ and $C_q, e_{q+1}, \ldots, C_{t}$, we may think of these as the ``left'' and the ``right'' part of the chain. The edge $e = e_q$ is ``split down the middle'' in the following sense, the path $P_{e}$ is divided in two parts $P_e^L$, defined as the sub-path of $P_e$ from the endpoint in $C_{q-1}$ to $f_{\OPT}(v)$, and $P_e^R$, defined as the sub-path of $P_e$ from $f_{\OPT}(v)$ to the endpoint in $C_{q}$. We now define the left and the right part of the chain:
\begin{eqnarray*}
L & = & \left(\bigcup_{i \leq q-1} \hat{f}_{\OPT}^{-1}(C_i) \cup \hat{f}_{\OPT}^{-1}(e_i) \right) \cup \hat{f}_{\OPT}^{-1}(C_q) \cup f_{\OPT}^{-1}(P_e^L) \\
R & = & f_{\OPT}^{-1}(P_e^R) \cup \left(\bigcup_{i \geq q+1} \hat{f}_{\OPT}^{-1}(C_i) \cup \hat{f}_{\OPT}^{-1}(e_i) \right) 
\end{eqnarray*}
Note that $L$ and $R$ are vertex sets in $G$. The sets $L$ and $R$ intersect only in $v$, unless $C_1 = C_t$, in which case both $L$ and $R$ contain $\hat{f}_{\OPT}^{-1}(C_1) = \hat{f}_{\OPT}^{-1}(C_t)$. No other vertices are common to $L$ and $R$. We define $\zeta = L \cup R$ to be the set of all vertices of $G$ on the chain. 

We will say that the {\em left and right side of the search met in iteration $i$} if SEARCH put some vertex $x \in X_i$ both in $X_i^L$ and in $X_i^R$. In this case the algorithm outputs {\sf success} and halts in this iteration. We also say that the algorithm {\em left-succeeded (right-succeeded)} if it output {\sf success} with $\hat{u} \in X_i^L$ ($\hat{u} \in X_i^R$) for some $i$.

The focus of our analysis is on how SEARCH explores $\zeta$. We say that  SEARCH {\em leaves $\zeta$ in iteration $i$} if $i$ is the lowest number $X_i \setminus \zeta \neq \emptyset$. We say that {\em the inner part of $\zeta$} is $\zeta_{inner} = \zeta \setminus (\hat{f}_{\OPT}^{-1}(C_1) \cup \hat{f}_{\OPT}^{-1}(C_t))$. We say that SEARCH {\em leaves the inner part of $\zeta$ in iteration $i$} if $i$ is the lowest number such that $X_i \setminus \zeta_{inner} \neq \emptyset$. The next lemma shows that that $X_i^L$ and $X_i^R$ correctly classify the vertices of $\zeta$ into $L$ and $R$ as long as SEARCH has not yet left $\zeta$, and as long as  the left and right side of the search have not met.

\begin{lemma}\label{lem:classifyStrong}
if $i \geq 2c^2$, and SEARCH does not halt or leave $\zeta$ in any iteration $j \leq i$, and $v_L \in L$ then
$$X_i^L = X_i \cap L \mbox{ and } X_i^R = X_i \cap R.$$
If $v_L \in R$ then
$$X_i^L = X_i \cap R \mbox{ and } X_i^R = X_i \cap L.$$
\end{lemma}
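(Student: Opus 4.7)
The plan is to prove Lemma~\ref{lem:classifyStrong} by induction on $i \geq 2c^2$, focusing on the case $v_L \in L$; the case $v_L \in R$ is symmetric. The crucial tool underlying both the base case and the inductive step is a ``separation'' property: if $x$ and $y$ lie on opposite sides of the chain (one in $L$, the other in $R$, both distinct from $v$) and belong to BFS layers $X_j, X_{j'}$ with $|j - j'| \leq 1$ and $j, j' \leq i$, then the non-contracting $c$-embedding property forces $d_{H_{\OPT}}(f_{\OPT}(x), f_{\OPT}(y))$ to be large. The ``natural'' route between them in $H_{\OPT}$ passes through $f_{\OPT}(v)$ and has length at least $2i - 1$; for path chains ($C_1 \neq C_t$) this is the only route. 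For cycle chains ($C_1 = C_t$) there is a second route, a shortcut around the cycle of length at least $\ell - 2ci$, and I would argue that if this shortcut were $\leq c$ then BFS from $v$ would have wrapped around the cycle by some iteration $j \leq i$, forcing the two sides of SEARCH to meet and halt, contradicting the no-halt hypothesis.

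Base case $i = 2c^2$. Because $v$ is at distance at least $4c^2$ from every cluster of $G$ (the standing subsection assumption), every $x \in X_i$ satisfies $\hat{f}_{\OPT}(x) = e$, and moreover $f_{\OPT}(x)$ lies on $P_e$ at chain-distance in $[i, ci] = [2c^2, 2c^3]$ from $f_{\OPT}(v)$. Applying separation, any $x \in X_i \cap L$ and $y \in X_i \cap R$ satisfy $d_G(x, y) \geq d_{H_{\OPT}}(f_{\OPT}(x), f_{\OPT}(y))/c \geq 2i/c \geq 4c > 2c$. Since $d_G(v_L, v_R) \geq 2c + 1$, this forces $v_R$ (if picked) to lie in $R$. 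Every $x \in X_i \cap L$ then has $d_G(x, v_R) > 2c$ so $x \notin X_i^R$, and since SEARCH does not halt $x$ must lie in $X_i^L$; the argument is symmetric for $R$. If $v_R$ is not picked, the same separation yields $X_i \cap R = \emptyset$ and $X_i = X_i^L$. In either case $X_i^L = X_i \cap L$ and $X_i^R = X_i \cap R$.

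Inductive step $i > 2c^2$. Assume the claim for $i - 1$. Each $x \in X_i$ has a BFS-predecessor $y \in X_{i-1}$. Applying separation to $(x,y)$, they must lie on the same side of the chain: otherwise $d_{H_{\OPT}}(f_{\OPT}(x), f_{\OPT}(y)) \geq 2i - 1 > c$, contradicting $\{x,y\} \in E(G)$, which gives $d_{H_{\OPT}}(f_{\OPT}(x), f_{\OPT}(y)) \leq c$. By the same argument, no BFS-predecessor of $x$ can lie on the side opposite to $x$. Combined with the inductive hypothesis $X_{i-1}^L = X_{i-1} \cap L$ and $X_{i-1}^R = X_{i-1} \cap R$, this shows $x$ has a neighbor in $X_{i-1}^L$ if and only if $x \in L$, and analogously for $R$. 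Hence $X_i^L = X_i \cap L$ and $X_i^R = X_i \cap R$.

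The main obstacle is the cycle-chain case ($C_1 = C_t$), where separation requires one to carefully invoke the no-halt hypothesis to rule out a short shortcut around the cycle; the point is to show that before the shortcut becomes short enough to matter, the two sides of SEARCH must have already met. A minor subtlety is handling vertices that technically lie in both $L$ and $R$ (namely $v$ itself, and in the cycle case images of $C_1 = C_t$): these do not cause trouble because $v \notin X_i$ for $i \geq 1$ and cluster vertices are excluded by combining the distance bound $d_G(v, \text{clusters}) \geq 4c^2$ with the no-leave-$\zeta$ hypothesis.
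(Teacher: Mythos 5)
Your overall scheme (induction on $i$, with the inductive step propagating membership in $L$/$R$ from BFS predecessors) mirrors the paper's, but your base case has a genuine gap at the point where you conclude that $v_R\in R$. Your only metric tool is the cross-side separation bound (if $x\in L$ and $y\in R$ in nearby layers then $d_G(x,y)\geq 4c$); from this together with $d_G(v_L,v_R)\geq 2c+1$ you cannot infer that $v_L$ and $v_R$ lie on opposite sides of the chain --- ``far apart'' does not imply ``cross-side.'' What is actually needed is the complementary statement that any two vertices of $X_{2c^2}\cap L$ are within distance $2c$ of each other, and this is exactly what the paper establishes via Lemma~\ref{lem:funnyIneq} (one of the two shortest paths to $v$ passes, in $H_{\OPT}$, within $c$ of the other vertex's image, forcing the two vertices to be at distance at most $2c$); your proposal never invokes or replaces that lemma. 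The same-side closeness is also how the paper places every $x\in X_i\cap L$ into $X_i^L$; your alternative (cross-side farness shows $x\notin X_i^R$, then the no-halt assumption forces $x\in X_i^L$) is a legitimate complementary argument, but it only works once $v_R\in R$ is secured. One could instead rule out $v_R\in L$ by a halting argument, but that requires showing $X_i\cap R\neq\emptyset$ (the paper does this via properness); your write-up contains neither ingredient, so the base case does not go through as written.

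A second issue is that the ``separation'' property, which is the engine of both your base case and your inductive step, is only asserted. For path chains you claim the route through $f_{\OPT}(v)$ is the only one, but $C_1$ and $C_t$ can also be joined through the portion of $H_{\OPT}$ outside the chain; such detours are harmless only because any of them must fully traverse some long edge (length $>\ell=20c^3$), an argument you would need to make explicit. For cycle chains you explicitly defer the key point (``before the shortcut becomes short enough to matter, the two sides must already have met''), which is precisely what the induction needs at large $i$, so the central claim of your proof remains unproven. To be fair, the paper is itself rather informal about these geometric points, but the decisive difference is that its argument is anchored on Lemma~\ref{lem:funnyIneq}, and that is the piece missing from your proposal; also note that your base-case assertion that every $x\in X_{2c^2}$ has $\hat{f}_{\OPT}(x)=e$ is not justified (such an $x$ need only lie in $\zeta$ away from the clusters of $G$, which is what the no-leave hypothesis gives you), though this particular point does not affect your logic.
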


\begin{proof}
We show the lemma when $v_L \in L$, the case when $v_L \in R$ is symmetric. We first show the statement of the lemma for $i = 2c^2$, and start by proving that $v_R \in R$. Suppose not, then either the shortest path from $v_R$ to $v$ in $G$ contains a vertex $w$ such that $d_{H_{\OPT},w_{\OPT}}(f_{\OPT}(w), f_{\OPT}(v_L)) \leq c$ or the shortest path from $v_L$ to $v$ in $G$ contains a vertex $w$ such that $d_{H_{\OPT},w_{\OPT}}(f_{\OPT}(w), f_{\OPT}(v_R)) \leq c$. In either case, Lemma~\ref{lem:funnyIneq} shows that $d_G(v_L, v_R) \leq 2c$, contradicting the choice of $v_R$. We conclude that $v_R \notin L$ and therefore that $v_R \in R$ (if it exists). 

We have that $X_i \cap R \neq \emptyset$ because the embedding $f_{\OPT}$ is proper. Furthermore we have that 
\[
d_{H_{\OPT},w_{\OPT}}( f_{\OPT}(v_L), f_{\OPT}(v)) \geq 2c^2
\]
 and that for any $x \in X_i \cap R$ we have that 
\[ 
 d_{H_{\OPT},w_{\OPT}}( f_{\OPT}(x), f_{\OPT}(v)) \geq 2c^2.
\]
Thus 
\[
d_{H_{\OPT},w_{\OPT}}( f_{\OPT}(x), f_{\OPT}(v_L)) \geq 4c^2
\]
implying that $d_G(x, v_L) \geq 4c$. Thus, the SEARCH algorithm does indeed pick a vertex $v_R \in R$.
 
Now, for any vertex $v_L' \in X_i \cap L$ we have that either the shortest path from $v_L'$ to $v$ in $G$ contains a vertex $w$ such that $d_{H_{\OPT},w_{\OPT}}(f_{\OPT}(w), f_{\OPT}(v_L)) \leq c$ or the shortest path from $v_L$ to $v$ in $G$ contains a vertex $w$ such that $d_{H_{\OPT},w_{\OPT}}(f_{\OPT}(w), f_{\OPT}(v_L')) \leq c$. In either case, Lemma~\ref{lem:funnyIneq} shows that $d_G(v_L, v_L') \leq 2c$, implying that $v_L' \in X_i^L$. An identical argument shows that every $v_R' \in X_i \cap R$ is in $X_i^R$. Since $X_i^L$ and $X_i^R$ form a partition of $X_i$ this proves the statement of the lemma for $i = 2c^2$.

Suppose now that the statement of the lemma holds for every $i' < i$ (with $i' \geq 2c^2$), we prove the lemma for $i$. If SEARCH halts in iteration $i$ there is nothing to prove, so assume that SEARCH does not halt in iteration $i$. Then the left and right side of the search did not meet in iteration $i$. This means that every vertex $u$ in $X_i$ either has a neighbor $u'$ in $X_{i-1}^L = X_i \cap L$ or in $X_{i-1}^R = X_i \cap R$, but not both.

If $u' \in X_{i-1}^L$ then SEARCH puts $u$ into $X_i^L$. Furthermore we have that $u'$ is in $L$, and 
\[
d_{H_{\OPT},w_{\OPT}}(f_{\OPT}(u'),f_{\OPT}(v)) \geq 2c^2,
\]
while $d_{H_{\OPT},w_{\OPT}}(f_{\OPT}(u'),f_{\OPT}(u)) \leq c$. Thus we conclude that $u \in L$.
By an identical argument, if $u' \in X_{i-1}^R$ then SEARCH puts $u$ into $X_i^R$ and $u \in R$. Since both $X_i^L, X_i^R$ and $X_i \cap L, X_i \cap R$ form partitions of $X_i$ the lemma  follows.
\end{proof}

\begin{lemma}\label{lem:leaveThenHalt}
If SEARCH leaves the inner chain in iteration $i$, then before reaching iteration $i + c + \ell \cdot h + 4c^2$, SEARCH either succeeds or fails by finding a vertex within distance $r$ from $F$.
\end{lemma}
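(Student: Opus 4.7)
Proof plan: Assume without loss of generality that SEARCH leaves $\zeta_{inner}$ at iteration $i$ by entering the interesting cluster $C_1 \subseteq L$; the case of entering $C_t \subseteq R$ is symmetric. Set $j^{\star} = i + c + \ell h + 4c^2$ and argue by contradiction: suppose SEARCH has not halted through iteration $j^{\star}$. Then for every $j \in [2c^2, j^{\star}]$, the sets $X_j^L$ and $X_j^R$ are disjoint, have size at most $2c^2$, have $G$-diameter at most $2c$, and no vertex in $X_j$ lies within $G$-distance $r$ of $F$; the goal is to exhibit two vertices in a common $X_j^L$ at $G$-distance exceeding $2c$.

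First, the BFS visits $\hat{f}_{\OPT}^{-1}(C_1)$ entirely within iterations $[i, i + \ell h]$, placing every such vertex in $X^L$. Indeed, the $H_{\OPT}$-subgraph corresponding to $C_1$ has at most $h$ edges, each of length at most $\ell$, so its diameter in $H_{\OPT}$ is at most $\ell h$, and non-contraction of $f_{\OPT}$ gives $\diam_G(\hat{f}_{\OPT}^{-1}(C_1)) \leq \ell h$. The classification extends Lemma~\ref{lem:classifyStrong} by induction: each newly discovered $C_1$-preimage lies in $L$, its BFS predecessor was already placed in $X^L$, and no BFS predecessor lies in $R$ (by the $L/R$-consistency maintained up through the chain).

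Next, exploit interestingness: since $C_1$ has long-edge degree at least three and (in the case $C_1 \neq C_t$) only $e_1$ is in the chain, there exist two long edges $a, b$ of $C_1$ exiting to clusters outside the chain, with $C_1$-endpoints $u_a, u_b \in V(H)$. By properness there exist $G$-vertices $\tilde u_a, \tilde u_b$ with $d_{H_{\OPT}}(f_{\OPT}(\tilde u_k), u_k) \leq c$ for $k \in \{a, b\}$; both are visited by iteration at most $i + \ell h + c$ and are classified $X^L$. Properness combined with pushing further forces consecutive $G$-vertices along each of $P_a, P_b$ to sit at $G$-distance at most $2c$ apart, so $G$-vertices on each exit are densely distributed.

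Pick $y_k \in \hat{f}_{\OPT}^{-1}(k)$ at $G$-distance $\delta_k$ from $\tilde u_k$. Because every $H_{\OPT}$-path from $f_{\OPT}(y_a)$ to $f_{\OPT}(y_b)$ traverses $C_1$, one has $d_{H_{\OPT}}(f_{\OPT}(y_a), f_{\OPT}(y_b)) \geq (\delta_a - c) + (\delta_b - c)$, so $d_G(y_a, y_b) \geq (\delta_a + \delta_b - 2c)/c$; thus $\delta_a + \delta_b \geq 2c^2 + 3c$ yields $d_G(y_a, y_b) \geq 2c + 1$. To place $y_a, y_b$ in a common layer $X_j^L$ with $j \leq j^{\star}$, set $\Delta = d_G(v, \tilde u_b) - d_G(v, \tilde u_a) \in [-(\ell h + c),\, \ell h + c]$ and require $\delta_a = \delta_b + \Delta$; the constraints $\delta_b \leq j^{\star} - d_G(v, \tilde u_b) \geq 4c^2$ and $\delta_b \geq c^2 + 3c/2 - \Delta/2$ leave a nonempty window since $c^2 + 3c/2 \leq 4c^2$, and the $2c$-spacing along each exit supplies actual $G$-vertices matching these target $G$-distances within $c$, with the remaining slack absorbed by the budget. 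Hence SEARCH triggers the ``two vertices at $G$-distance $> 2c$'' success at some $j \leq j^{\star}$, a contradiction. The case $C_1 = C_t$ is handled analogously: $\hat{f}_{\OPT}^{-1}(C_1) \subseteq L \cap R$ forces the BFS approaching $C_1$ from both chain-directions to eventually produce a shared vertex with BFS predecessors in both $X^L$ and $X^R$, triggering SUCCESS within the same budget. The principal technical obstacle is this alignment calculation, which hinges on the density and arrival-time estimates above.
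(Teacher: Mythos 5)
Your setup (BFS absorbs all of $\hat{f}_{\OPT}^{-1}(C_1)$ within roughly $\ell h + c$ further iterations; interestingness of $C_1$ supplies long edges other than $e_1$; the goal is two vertices in one layer $X_j^L$ at $G$-distance more than $2c$) matches the paper, but the step that is supposed to produce those two vertices --- your ``alignment calculation'' --- has a genuine gap. You place $y_a,y_b$ in a common BFS layer by prescribing $\delta_k=d_G(\tilde u_k,y_k)$ and requiring $\delta_a=\delta_b+\Delta$ with $\Delta=d_G(v,\tilde u_b)-d_G(v,\tilde u_a)$. This implicitly assumes $d_G(v,y_k)\approx d_G(v,\tilde u_k)+\delta_k$, i.e.\ that the BFS layer of $y_k$ is (up to additive $O(c)$) the sum of the distance to the cluster entrance and the distance along the exit edge. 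That is exactly what a $c$-embedding does \emph{not} give you: non-contraction only yields $d_G(v,y_k)\le d_G(v,\tilde u_k)+\delta_k$, while the lower bound is $d_{(H_{\OPT},w_{\OPT})}(f_{\OPT}(v),f_{\OPT}(y_k))/c$, so the true layer of $y_k$ can undershoot your target by a multiplicative factor up to $c$ (and independently so on the two edges). Prescribing $\delta_a,\delta_b$ therefore does not put $y_a,y_b$ in the same layer, and the additive slack of order $c$ you invoke cannot absorb a multiplicative-$c$ discrepancy. The same objection applies to choosing a vertex ``at $G$-distance $\delta_k$ from $\tilde u_k$'' via the $\le c$ spacing of consecutive preimages: that spacing is along the edge order, not along $G$-distance from $v$, which need not grow monotonically or linearly.

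The paper sidesteps this alignment problem entirely: it takes $j$ to be the \emph{first} iteration at which the BFS discovers a vertex $x$ with $\hat{f}_{\OPT}(x)=e^x\neq e_1$ and $d_{(H_{\OPT},w_{\OPT})}(f_{\OPT}(x),C_1)\approx 4c^2$. By this minimality, no vertex mapped that far out along the second long edge $e^y$ has been discovered yet, so the BFS frontier in the direction of $e^y$ still contains, within $c$ further layers, a vertex $y^*$ whose image lies in $C_1\cup P_{e^y}$ close to $C_1$; then $d_{(H_{\OPT},w_{\OPT})}(f_{\OPT}(y^*),f_{\OPT}(x))\ge 4c^2$ forces $d_G(y^*,x)\ge 4c$, and the BFS ancestor $y'$ of $y^*$ in layer $j$ satisfies $d_G(y',x)>2c$, triggering success without ever needing to predict which layer a given far-out vertex lands in. To repair your argument you would need to replace the distance-prescription step with some such ``first time the frontier reaches $H$-distance $4c^2$ on one edge, the other direction is still near $C_1$'' argument; as written, the alignment step fails.
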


\begin{proof}
In iteration $i$, SEARCH visits a vertex $u \in X_i \setminus \zeta_{inner}$, $u$ has a neighbor $u' \in \zeta_{inner} \cap X_{i-1}$. We have that $u'$ is either in $L$ or in $R$, without loss of generality we have that $u' \in L$. Since $u \notin \zeta$ it follows that $\hat{f}_{\OPT}(u') = e_1$. Since $u'u \in E(G)$ and $u \notin \zeta_{inner}$ it follows that $d_{H_{\OPT},w_{\OPT}}(f_{\OPT}(u'), C_1) \leq c$. In $H_{\OPT}$ the distance between all vertices of $C_1$ is at most $\ell \cdot h$. Since $f_{\OPT}$ is non-contracting a BFS (and thus SEARCH) will visit all of $f_{\OPT}^{-1}(C_1)$ by iteration $i + c + \ell \cdot h$.

At this point, either the left and right side of the search have already met (in which case the algorithm succeeded), the algorithm encountered a vertex at distance at most $r$ from $F$ (in which case it failed), or it leaves $\zeta$ within iteration $i + c + \ell \cdot h + 1$. Since the embedding is proper, we have that for some iteration $j \leq i + c + \ell \cdot h + 4c^2$, the search visits a vertex $x \in X_j$ such that $\hat{f}_{\OPT}(x) = e^x$ for $e^x \neq e_1$ and $4c^2 \geq d_{H_{\OPT},w_{\OPT}}(f_{\OPT}(x), C_1) \leq 4c^2 + c$. Let $j$ be the first iteration such that this event occurs, and $x$ and $e_x$ as defined above for this iteration $j$. We remark that technically $e^x$ might not be an edge different from $e_1$ but rather the other endpoint of $e_1$. This does not affect the proof other than in notation, so we will treat $e^x$ and $e_1$ as different edges.

We claim that unless SEARCH already has halted, in iteration $j$, $X_j^L$ contains a vertex $y'$ at distance more than $2c$ from $x$, making SEARCH succeed. This is all that remains to prove in order to prove the statement of the lemma. 

Since $C_1$ is an interesting cluster, $C_1$ is incident to at least one more long edge $e^y$ distinct from $e_1$ and $e^x$. Again, technically $e_y$ could be the other endpoint of the $e^x$ or $e_1$, however this does not affect the proof and thus we treat them as separate edges. Let $y$ be a vertex in $G$ such that $\hat{f}_{\OPT}(y) = e^y$ and $4c^2 \geq d_{H_{\OPT},w_{\OPT}}(f_{\OPT}(y), C_1) \leq 4c^2 + c$. We have that $d_G(u', y) \leq d_{H_{\OPT},w_{\OPT}}(f_{\OPT}(u'), f_{\OPT}(y)) \leq c + \ell \cdot h + 4c^2$. By the choice of $x$ we have that $y$ is not discovered by SEARCH before $x$ is.

The subgraph of $H_{\OPT}$ corresponding to the cluster $C_1$ and the sub-path of $P_{e^y}$ from $C_1$ to $f_{\OPT}(y)$ is connected, and therefore there is an index $j' \geq j$ such that $j' \leq j+c$ such that $X_{j'}^L$ contains a vertex $y^*$ such that $\hat{f}_{\OPT}(y^*) \in C_1 \cup \{e^y\}$. We have that $d_{H_{\OPT},w_{\OPT}}(f_{\OPT}(y^*), f_{\OPT}(x)) \geq 4c^2$, hence $d_{G}(y^*, x) \geq 4c$. However there exists a predecessor $y'$ of $y^*$ in the BFS such that $y' \in X_j^L$ and $d_G(y, y') \leq c$. The triangle inequality yields that $d_G(y', x) > 4c - c > 2c$ and the statement follows.
\end{proof}

Finally we show that whenever SEARCH succeeds, the vertex it outputs is near a cluster.

\begin{lemma}\label{lem:closeCluster}
If SEARCH outputs {\sf success} and a vertex $\hat{u}$, then there exists a cluster $C$ in $G$ such that $d_G(\hat{u}, C) \leq 2c + \ell \cdot h + 4c^2 \leq 2\ell h$.
\end{lemma}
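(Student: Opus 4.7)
My plan is to case-split on how SEARCH halts with success, and in each case to trace the triggering BFS feature back to a nearby cluster in $G$. The guiding principle is that every success rule of SEARCH detects a local deviation from BFS-on-a-line behaviour, and such a deviation can only be created by a vertex of $G$ mapped into the ``thickened'' cluster region of $H_{\OPT}$.

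The first case is that SEARCH halts at iteration $j$ having already left the inner chain $\zeta_{inner}$ at some iteration $i' \leq j$. By Lemma~\ref{lem:leaveThenHalt} we have $j \leq i' + c + \ell h + 4c^2$, and its proof exhibits a vertex $x$ at layer $j$ with $\hat{f}_{\OPT}(x)$ on a long edge incident to an end-cluster $C_1$ of the chain and $d_{(H_{\OPT}, w_{\OPT})}(f_{\OPT}(x), C_1) \leq 4c^2 + c$. Since the cluster $C_1$ has $H$-diameter at most $\ell h$, for the cluster $C$ in $G$ corresponding to $C_1$ (which contains some vertex $w$, because the long edge entering $C_1$ is proper and forces vertices of $G$ to sit near $C_1$) we obtain $d_{(H_{\OPT}, w_{\OPT})}(f_{\OPT}(x), f_{\OPT}(w)) \leq 4c^2 + c + \ell h$, and non-contractivity of $f_{\OPT}$ turns this into $d_G(x, w) \leq 4c^2 + c + \ell h$. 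An extra $c$ step for the sub-case where $\hat{u}$ is the BFS-predecessor $y'$ of $x$ (rather than $x$ itself) gives $d_G(\hat{u}, C) \leq 4c^2 + 2c + \ell h$, matching the target.

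The remaining case is that SEARCH halts at some iteration $j \geq 2c^2$ without ever leaving $\zeta_{inner}$. Lemma~\ref{lem:classifyStrong} applied at iteration $j - 1$ gives (after possibly swapping $L$ and $R$) that $X_{j-1}^L = X_{j-1} \cap L$ and $X_{j-1}^R = X_{j-1} \cap R$, and these two sets are disjoint because $L \cap R \setminus \{v\}$ lies outside $\zeta_{inner}$. The ``left--right meet'' trigger cannot fire: any $x \in X_j$ with neighbours in both $X_{j-1}^L$ and $X_{j-1}^R$ would force two $H$-images at distance at least $2(j-1) \geq 4c^2 - 2$ on opposite sides of $f_{\OPT}(v)$ in $P_{e_q}$ to be $G$-adjacent through $x$, violating expansion at most $c$. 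The surviving triggers ($|X_j^L| > 2c^2$, or two vertices of $X_j^L$ at $G$-distance at least $2c+1$, and the $X_j^R$ analogues) cannot fire when $X_j$ sits entirely inside a long edge either: Lemma~\ref{lem:funnyIneq} forces every pair of same-side BFS-layer vertices whose images lie on a common long edge to be within $G$-distance $2c$, and Lemma~\ref{lemma:local_density} bounds their number. Hence these triggers require $X_j$ to intersect a boring cluster of $\zeta_{inner}$, so $\hat{u}$ lies in (or is $G$-adjacent to) some vertex of that cluster, giving $d_G(\hat{u}, C) \leq \ell h$. The sub-case $j = 2c^2$ with $\hat{u} = v$ is analogous: the iteration-$2c^2$ trigger (``some $x \in X_{2c^2}$ is placed in both or in neither of $X_{2c^2}^L, X_{2c^2}^R$'') is also incompatible with a purely line-like BFS, so by iteration $2c^2$ the BFS must have already reached a cluster in $G$ at $G$-distance at most $2c^2 + c + \ell h$ from $v$.

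The main obstacle I expect is carefully ruling out the ``many vertices of $X_j^L$'' triggers when $X_j$ sits entirely on long edges: this requires combining Lemma~\ref{lem:funnyIneq} (pairwise distances on one long edge) with Lemma~\ref{lemma:local_density} (total count), and handling the subtlety that a BFS layer may straddle the boundary between a long edge and an adjacent boring cluster. Once these technical bounds are in place, all cases yield $d_G(\hat{u}, C) \leq 4c^2 + 2c + \ell h = 2c + \ell h + 4c^2$, and with $\ell = 20c^3$ and $h \geq 1$ this is at most $2\ell h$, concluding the proof.
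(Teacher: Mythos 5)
Your overall skeleton (Lemma~\ref{lem:leaveThenHalt} when the search exits the inner chain, a Lemma~\ref{lem:funnyIneq}-based contradiction when it stays on long edges) is the same as the paper's, but two steps in your second case do not hold. The claim that the left--right meet trigger cannot fire while the search stays inside $\zeta_{inner}$ is false when the cluster-chain is a cycle, i.e.\ $C_1=C_t$: the two sides can approach the common end cluster along $e_1$ and $e_{t-1}$, and an $L$-image and an $R$-image can then be within $H_{\OPT}$-distance $c$ of each other via a route through or past $C_1$, even though no visited vertex lies in $\hat{f}_{\OPT}^{-1}(C_1)$. Your lower bound of $2(j-1)$ implicitly assumes both images lie on $P_{e_q}$ and that the only path between them passes through $f_{\OPT}(v)$; neither is true once the BFS has moved beyond $e_q$. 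So this success mode cannot be dismissed: you must handle it as the paper does, by observing that a meet forces the search to have visited a vertex within $H_{\OPT}$-distance $c$ of $C_1$ and then invoking Lemma~\ref{lem:leaveThenHalt} to bound $d_G(\hat{u},C_1)$.

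In the analysis of the remaining triggers, the inference from ``$X_j$ intersects a boring cluster'' to ``$\hat{u}$ lies in (or is $G$-adjacent to) that cluster'' does not follow: SEARCH outputs an \emph{arbitrary} vertex of $X_j^L$, which may be far (on a long edge) from the cluster vertex that caused the trigger. The missing ingredient, which the paper uses, is that since SEARCH did not succeed at iteration $j-1$, all vertices of $X_{j-1}^L$ are pairwise within $2c$, hence all of $X_j^L$ lies within $G$-distance $2c+2$ of $\hat{u}$; this is what either ties $\hat{u}$ to the cluster or, in the contrapositive (assuming $\hat{u}$ far from every cluster), forces all of $X_j^L$ onto a single long edge so that Lemma~\ref{lem:funnyIneq} applies to every pair. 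Relatedly, Lemma~\ref{lemma:local_density} cannot rule out the size trigger $|X_j^L|>2c^2$: it only yields a bound of order $c^2 h$. The relevant count comes from non-contraction along one subdivided edge, where the images are pairwise at distance at least $1$ inside a window of length $O(c^2)$.
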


\begin{proof}
SEARCH can succeed either because the left and right side of the search meet, or because SEARCH left-succeeds or because it right-succeeds. If the left and right side of the search meet in iteration $i$, it means that in this or one of the previous iterations the search has visited a vertex $u$ such that $d_{H_{\OPT}, w_{\OPT}}(f_{\OPT}(u), C_1) \leq c$. By Lemma~\ref{lem:leaveThenHalt} it follows that SEARCH halts within $2c + \ell \cdot h + 4c^2$ iterations and outputs a vertex $\hat{u}$ within distance $2c + \ell \cdot h + 4c^2$ from $C_1$.

Suppose now that SEARCH left-succeeds in iteration $i$, and assume for contradiction that $d_G(\hat{u}, C) > 2c + \ell \cdot h + 4c^2$ for every cluster $C$ in $G$. If the output vertex  $\hat{u}$ is not in $\zeta_{inner}$, then Lemma~\ref{lem:leaveThenHalt} again yields that $d_G(\hat{u}, \hat{f}_{\OPT}^{-1}(C_1)) \leq c + \ell \cdot h + 4c^2$. Therefore, assume that $\hat{u} \in \zeta_{inner}$.
In this case there is an edge $e_p$ on the cluster-chain of $v$ such that $\hat{f}_{\OPT}(\hat{u}) = e_p$. The edge $e_p$ connects the clusters $C_{p}$ and $C_{p+1}$. Since $d_G(\hat{u}, \hat{f}_{\OPT}^{-1}(C_{p} \cup C_{p+1})) \geq 10c^2$ we have that $d_{H_{\OPT}, w_{\OPT}}(f_{\OPT}(\hat{u}), C_{p} \cup C_{p+1}) \geq 10c^2$.
Let $u'$ be the predecessor of $\hat{u}$ in the BFS, we have that $u' \in X_{i-1}^L$. Since SEARCH did not succeed in iteration $i-1$ we have that $d_G(u', u'') \leq 2c$ for every $u'' \in X_{i-1}^L$.
Since every vertex in $X_i^L$ has a predecessor in $X_{i-1}^L$ we conclude that every vertex in $X_i^L$ is at distance at most $2c+2$ from $\hat{u}$ in $G$. Thus, every vertex in $f_{\OPT}(X_i^L)$ is at distance at most $2c^2 + 2c$ in $(H_{\OPT}, w_{\OPT})$ from $f_{\OPT}(\hat{u})$. Most importantly $\hat{f}_{\OPT}(u'') = e_p$ for every $u'' \in X_i^L$.
Therefore, for every pair of vertices $u''$ and $u^*$ in $X_i^L$, either the shortest path from $u''$ to $v$ contains a vertex $w$ such that $d_{H_{\OPT}, w_{\OPT}}(f_{\OPT}(w), f_{\OPT}(u^*)) \leq c$ or the shortest path from $u^*$ to $v$ contains a vertex $w$ such that $d_{H_{\OPT}, w_{\OPT}}(f_{\OPT}(w), f_{\OPT}(u'')) \leq c$. It follows from Lemma~\ref{lem:funnyIneq} that $d_G(u^*, u'') \leq c$. Since this holds for every pair of vertices $u''$ and $u^*$ in $X_i^L$ this contradicts that the algorithm left-succeeded in iteration $i$. The proof if the algorithm right-succeeded is symmetric.
\end{proof}

\paragraph{The COVER algorithm}
We are now almost in position to prove Lemma~\ref{lem:interestingCluster}. We begin by describing the COVER algorithm, and then prove that it satisfies the conditions of Lemma~\ref{lem:interestingCluster}. We will describe the COVER algorithm as a {\em non-deterministic} algorithm that takes as input $G$, $H$ and $c$, runs in time polynomial time, and outputs a single set vertex set $F \subseteq V(G)$ of size at most $h$.  If there exists a proper $c$-embedding $\hat{f}_{\OPT}$ from $G$ to a weighted subdivision of $H$, then in at least one of the computation paths of the algorithm, the output set $F$ $2r$-covers all interesting clusters of $H$. The algorithm will use only $h$ non-deterministic bits. By defining ${\cal F}$ to be the family containing all sets $F$ output by the computation paths of COVER, the family ${\cal F}$ satisfies the conditions of Lemma~\ref{lem:interestingCluster}.

The COVER algorithm proceeds as follows, given $G$, $H$ and $c$, it initializes $F = \emptyset$. It then proceeds in stages. In stage $i$ the algorithm loops over all choices of a vertex $v \in V(G) \setminus \ball_G(F,r)$, and runs SEARCH on $G$, starting from $v$, with the set $F$. If SEARCH fails for all choices of $v$ the COVER algorithm terminates and outputs $F$. Otherwise, let $v$ be the first vertex that made SEARCH succeed in stage $i$, and let $\hat{u}$ be the vertex output by SEARCH. The algorithm makes a non-deterministic choice: in one computation path $v$ is added to $F$, in the other computation path $\hat{u}$ is added to $F$. Then the algorithm proceeds to stage $i+1$. If the algorithm reaches stage $h+1$ it terminates without outputting any set. This concludes the description of the algorithm.

\begin{proof}[Proof of the Interesting Cluster Covering Lemma (Lemma~\ref{lem:interestingCluster})]
Each stage of the COVER algorithm ends when SEARCH started from a vertex $v$ succeeds and outputs a vertex $\hat{u}$. The entire analysis of SEARCH is only valid if $v$ is at distance at least $4c^2$ from every cluster $C$ in $G$. The non-deterministic guess of the COVER algorithm is whether this assumption is valid; i.e whether $G_G(v,C) \geq 4c^2$ for every cluster $C$. We proceed analyzing the computation path where the non-deterministic guess is correct.

If $v$ is at distance at least $4c^2$ from every cluster $C$ in $G$, the COVER algorithm adds $\hat{u}$ to $F$, otherwise COVER adds $v$ to $F$. In either case the vertex added to $F$ is at least at distance $r+1$ from every other vertex in $F$. Furthermore, if COVER adds $v$ to $F$ then $v$ is within distance $4c^2$ from some cluster $C$ in $G$. On the other hand, if COVER adds $\hat{u}$ to $F$ then, by Lemma~\ref{lem:closeCluster} there exists a cluster $C$ in $G$ such that $d_G(\hat{u}, C) \leq 2c + \ell \cdot h + 4c^2 \leq 2\ell h$.

Since every pair of vertices in a cluster $C$ are at distance at most $\ell \cdot h$ apart in $G$, every vertex in $F$ is at distance at most $2c + \ell \cdot h + 4c^2 \leq 2\ell h$ away from a cluster, and every pair of vertices in $F$ are at distance at least $r \geq 5\ell h$ apart, we have that in the computation path that makes the correct guesses the algorithm terminates and outputs a set $F$ of size at most $h$ before reaching stage $h+1$.

To complete the proof we need to show that every cluster in $C$ is $2r$-covered  by $F$. Suppose not, and consider an un-covered cluster $C$ in the last stage of the COVER algorithm. In this stage, SEARCH failed when starting from every vertex $v$ of $G$. Let $v$ be a vertex at distance exactly $4c^2$ from $C$ such that such that $\hat{f}_{\OPT}(v)$ is a long edge incident to the cluster in $H$ corresponding to $C$. By Lemma~\ref{lem:leaveThenHalt}, starting SEARCH from $v$ will result in the algorithm halting within $4c^2 + c + \ell \cdot h \ 4c^2 \leq 2\ell h$ iterations. Furthermore, if the algorithm does not succeed (which it does not, since this was the last stage of COVER), it finds a vertex $u$ at distance at most $r$ from $F$. But then $d_G(F, C) \leq d_G(F, u) + d_G(u, C) \leq r + 2\ell h + 4c^2 \leq 2r$, completing the proof.
\end{proof}

\subsection{STITCHing Together Approximate Line Embeddings}
We now describe the STITCH algorithm. This algorithm takes as input $G$, $H$, $c$ and $F \subseteq V(G)$, runs in polynomial time and halts. We will prove that if there exists a $c$-embedding $f_{\OPT}$ of $G$ into a weighted subdivision $(H_{\OPT}, w_{\OPT})$ of $H$ such that all $F$ $2r$-covers all interesting clusters of $G$, the algorithm produces a $c_{\ALG}$-embedding $f_{\ALG}$ of $G$ into a weighted subdivision $(H_{\ALG}, w_{\ALG})$ of a subgraph $H'$ of $H$. Throughout this section we will assume that such an embedding $f_{\OPT}$ exists. 

The STITCH algorithm starts by setting $R = 4r$, $\Delta = 4r$ and then proceeds as follows. As long as there are two vertices $v_i$ and $v_j$ in $F$ such that $2R \leq d_G(u, v) \leq  2R + \Delta$, the algorithm increases $R$ to $R + \Delta$. Note that this process will stop after at most ${|F| \choose 2}$ iterations, and therefore when it terminates we have $R \leq 4r \cdot h^2 \leq 400c^3 h^3$. Define $B = \ball_G(F,R)$, and ${\cal B}$ to be the family of connected components of $G[B]$. Notice that the previous process ensures that for any $B_1, B_2 \in {\cal B}$ we have $d_G(B_1, B_2) \geq \Delta$. Notice further that for every interesting cluster $C$ in $H$ we have that $\ball_G(\hat{f}_{\OPT}^{-1}(C), r) \subseteq G$. 

We now classify the connected components of $G - B$.  A component $Z$ of $G - B$ is called {\em deep} if it contains at least one vertex at distance(in $G$) at least $\frac{\Delta}{2}$ from $F$, and it is {\em shallow} otherwise. The shallow components are easy to handle because they only contain vertices close to $F$.

\begin{lemma}\label{lem:shallowGrave}
For every shallow component $Z$ of $G - B$, there is at most one connected component $B_1 \in {\cal B}$ that contains neighbors of $Z$
\end{lemma}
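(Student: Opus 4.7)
The plan is to proceed by contradiction: suppose $Z$ is a shallow component of $G - B$ but two distinct components $B_1, B_2 \in {\cal B}$ each contain a vertex adjacent to $Z$. Fix $u_1 \in B_1$ and $u_2 \in B_2$ together with $z_1, z_2 \in Z$ such that $u_1 z_1, u_2 z_2 \in E(G)$. The first task is to extract the exact separation guaranteed by the outer \emph{while}-loop of STITCH: after the loop terminates, no two $F$-vertices have $G$-distance in the interval $[2R, 2R+\Delta]$. Since every component of ${\cal B}$ meets $F$ (the shortest path in $G$ from any $b \in B$ to the nearest $F$-vertex has length $\le R$ and stays in $B$), I may pick $f_1 \in F \cap B_1$ and $f_2 \in F \cap B_2$. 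If $d_G(f_1, f_2) \le 2R$, then the shortest $f_1$--$f_2$ path lies entirely inside $B$ (each internal vertex is within $R$ of an endpoint), which would place $f_1$ and $f_2$ in the same component of ${\cal B}$; hence $d_G(f_1, f_2) > 2R + \Delta$, and the triangle inequality yields $d_G(b, b') > \Delta$ for every $b \in B_1$ and $b' \in B_2$.

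The heart of the proof is a projection argument along a path inside $Z$. Since $Z$ is connected, fix a path $y_0, y_1, \ldots, y_k$ in $G[Z]$ with $y_0 = z_1$ and $y_k = z_2$. The degenerate case $z_1 = z_2$ immediately gives $d_G(u_1, u_2) \le 2 < \Delta + 1$, contradicting the separation above, so assume $k \ge 1$. Shallowness of $Z$ supplies for every internal $y_i$ a witness $\phi(y_i) \in B$ with $d_G(y_i, \phi(y_i)) < \Delta/2$; I additionally set $\phi(y_0) = u_1$ and $\phi(y_k) = u_2$, which is a legitimate choice since $d_G(y_0, u_1) = d_G(y_k, u_2) = 1 < \Delta/2$. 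For consecutive $y_i, y_{i+1}$ on the path, the triangle inequality gives
\[
d_G(\phi(y_i), \phi(y_{i+1})) \le d_G(\phi(y_i), y_i) + 1 + d_G(y_{i+1}, \phi(y_{i+1})) < \Delta + 1,
\]
and since $G$-distances are integers this forces $d_G(\phi(y_i), \phi(y_{i+1})) \le \Delta$.

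Combining the two ingredients, the strict separation $d_G(b, b') > \Delta$ between vertices in distinct components of ${\cal B}$ forbids $\phi(y_i)$ and $\phi(y_{i+1})$ from lying in different components, so a straightforward induction on $i$ shows that all of $\phi(y_0), \ldots, \phi(y_k)$ belong to one common component of ${\cal B}$. But $\phi(y_0) = u_1 \in B_1$ and $\phi(y_k) = u_2 \in B_2$, forcing $B_1 = B_2$ and giving the desired contradiction. The main delicacy I anticipate is keeping the inequalities strict throughout: the bound ``$<\Delta+1$'' coming from shallowness has to be matched against the separation ``$>\Delta$'' of the components, and any loosening on either side would let $\phi(y_i)$ and $\phi(y_{i+1})$ straddle two components and break the projection argument.
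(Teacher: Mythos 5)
Your proof is correct and follows essentially the same route as the paper: shallowness bounds the distance from vertices of $Z$ to $B$ by roughly $\Delta/2$, and the triangle inequality across an edge of $Z$ then contradicts the $\ge\Delta$ separation between distinct components of ${\cal B}$ guaranteed by the while-loop of STITCH. The only difference is that you spell out explicitly (via the path projection $\phi$ and the derivation of the $>\Delta$ separation from the loop's termination condition) steps that the paper asserts implicitly, which is a fine elaboration rather than a new approach.
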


\begin{proof}
Suppose not, then there are two vertices $v_1$ and $v_2$ in $Z$ that are neighbors, such that the closest vertex in $B$ to $v_1$   is in $B_1$ while the closest vertex in $B$ to $v_2$   is in $B_2$, for distinct components $B_1$ and $B_2 \in {\cal B}$. The distance from $v_1$ to $B_1$ is at most $\frac{\Delta}{2}-1$, the distance from $v_2$ to $B_2$ is at most $\frac{\Delta}{2}-1$, and hence, by the triangle inequality, the distance between $B_1$ and $B_2$ is at most $\Delta - 1$, contadicting the choice of $R$.
\end{proof}
  
The next sequence of lemmas allows us to handle deep components. We say that a component  $Z$ in $G - B$ {\em lies on} the cluster-chain $\chi = C_1, e_1, \ldots, C_t$ if 
$$Z \subseteq \left( \bigcup_{i \leq t} \hat{f}_{\OPT}^{-1}(C_i) \cup \hat{f}_{\OPT}^{-1}(e_i) \right) \setminus \hat{f}_{\OPT}^{-1}(C_1 \cup C_t).$$

\begin{lemma}\label{lem:onChain} Every component $Z$ of $G - B$ lies on some cluster-chain. \end{lemma}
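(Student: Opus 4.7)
The plan is to combine two ingredients: (i) that $Z$ contains no vertex mapped into an interesting cluster (so $Z$'s image ``stays away from branch points'' of $H$), and (ii) that connectedness of $Z$ in $G$ forces its image in a suitably ``contracted'' view of $H$ to be connected. These two will together pin $Z$ down to a single cluster-chain.

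First I would establish (i). Every cluster $C_H$ of $H$ has $H_{\OPT}$-diameter at most $\ell \cdot h$ (at most $h$ short edges, each of $H_{\OPT}$-weight $\leq \ell$), so by non-contraction of $f_{\OPT}$ the corresponding $G$-cluster has $G$-diameter at most $\ell h$. Combining with the hypothesis that $F$ $2r$-covers every interesting cluster yields $\hat{f}_{\OPT}^{-1}(C) \subseteq \ball_G(F, 2r + \ell h)$, and since $\ell h = r/5$ and $R \geq 4r$, we get $\ball_G(\hat{f}_{\OPT}^{-1}(C), r) \subseteq \ball_G(F, 3r + \ell h) \subseteq \ball_G(F,R) = B$. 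Therefore no vertex of $Z$, which is disjoint from $B$, is mapped to a vertex or short edge of an interesting cluster.

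Next I would pass to the cluster-contracted multigraph $H^\star$ obtained from $H$ by contracting each cluster to a single vertex; its edges are exactly the long edges. Assign each $v \in V(G)$ a super-location $\sigma(v) \in V(H^\star) \cup E(H^\star)$: the unique cluster containing $\hat{f}_{\OPT}(v)$ when $\hat{f}_{\OPT}(v)$ is a vertex or short edge, and $\hat{f}_{\OPT}(v)$ itself when it is a long edge. For any edge $uw \in E(G)$, the fact that $d_{H_{\OPT}}(f(u),f(w)) \leq c < \ell$ (since $\ell = 20c^3$) means the shortest path between $f(u)$ and $f(w)$ in $H_{\OPT}$ cannot traverse any long-edge subdivision $P_e$ in its entirety; a short case analysis (two long edges, a long edge and a cluster, two clusters) then shows that $\sigma(u)$ and $\sigma(w)$ are either equal or incident in $H^\star$. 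Applying this along any walk inside $Z$, connectedness of $Z$ propagates to yield that the image $\sigma(Z)$, together with its incidences, is a connected subgraph of $H^\star$; by (i) this subgraph uses only boring clusters as vertices.

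Finally I would exploit the structural decomposition of $H^\star$. Deleting the interesting clusters from $H^\star$ gives a graph of maximum degree $2$ (every boring cluster has long-edge-degree at most $2$), hence a disjoint union of paths and cycles. Re-attaching, at each dangling long-edge end, the interesting cluster that was deleted produces precisely the cluster-chains of $H$. The connected subgraph $\sigma(Z)$ must lie entirely in one component of the deleted graph; extending by the interesting cluster(s) at its dangling ends yields a cluster-chain $C_1, e_1, \ldots, e_{t-1}, C_t$ whose interior clusters and long edges contain $\sigma(Z)$, with $C_1=C_t$ allowed in the case where the component is a cycle closed off through a single interesting cluster. Since $Z$ contains no preimage of $C_1$ or $C_t$ by (i), this certifies $Z \subseteq \bigl( \bigcup_{i \leq t} \hat{f}_{\OPT}^{-1}(C_i) \cup \hat{f}_{\OPT}^{-1}(e_i) \bigr) \setminus \hat{f}_{\OPT}^{-1}(C_1 \cup C_t)$, exactly the ``lies on'' condition. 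The main obstacle in the plan is the case analysis in step two verifying that a single $G$-edge cannot jump by more than one incidence step in $H^\star$; this relies crucially on the quantitative gap $c < \ell$, which rules out traversing any long edge in the image of a single edge of $G$.
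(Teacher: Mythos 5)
Your strategy is essentially the paper's (exclude interesting clusters, argue a single edge of $G$ cannot jump far in the image, then use connectedness and the degree-$\le 2$ structure of boring clusters), but there is a genuine gap at the decisive step. Your case analysis in step two only uses $c<\ell$ to show that for an edge $uw$ of $G$ the image path cannot traverse a long edge in full; in the ``two long edges'' case this leaves open exactly the situation where $\sigma(u)$ and $\sigma(w)$ are two distinct long edges meeting at a common cluster, and nothing you have said prevents that shared cluster from being \emph{interesting}. Statement (i) only rules out vertices of $Z$ mapped \emph{into} an interesting cluster; it does not rule out vertices of $Z$ mapped onto long edges with images within distance $c$ of an interesting cluster. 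Consequently the connected structure you build in $H^\star$ may be connected only \emph{through} an interesting cluster (two long edges belonging to different cluster-chains, bridged by a single edge of $Z$ near their common interesting endpoint), and your assertion in step three that $\sigma(Z)$ lies in a single component of the graph with the interesting clusters deleted does not follow. This branching case is precisely what the paper's short proof addresses: it argues that every vertex of $Z$ is far from every interesting cluster, so two adjacent vertices mapped to different cluster-chains would have images at distance at least $2c$ in $H_{\OPT}$, impossible for an edge of $G$.

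The fix is available from what you already derived in step (i). Since $Z\cap B=\emptyset$ and $\hat{f}_{\OPT}^{-1}(C)\subseteq \ball_G(F,2r+\ell h)$ for every interesting cluster $C$, every $u\in Z$ satisfies $d_G(u,\hat{f}_{\OPT}^{-1}(C))> R-2r-\ell h> c+\ell h$. Now suppose an edge $uw$ of $Z$ had $\sigma(u)$ and $\sigma(w)$ on two long edges meeting at an interesting cluster $C$; then the image path of length at most $c$ passes through the region of $C$, so $d_{H_{\OPT},w_{\OPT}}(f_{\OPT}(u),C)\le c$. Taking any vertex $v'$ with $\hat{f}_{\OPT}(v')\in V(C)\cup E(C)$ (such a vertex exists under the covering guarantee for interesting clusters), non-contraction gives $d_G(u,v')\le c+\ell h$, contradicting the lower bound above. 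With this observation your step three is justified: all clusters through which $\sigma(Z)$ is connected are boring, the deleted graph has maximum degree $2$, and the containment of $Z$ in a single cluster-chain (and its disjointness from $\hat{f}_{\OPT}^{-1}(C_1\cup C_t)$) follows as you describe. Without it, the key containment is unproved.
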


\begin{proof}
$Z$ does not contain any vertices in interesting clusters, or even within distance $r$ of interesting clusters. No two vertices that (a) are at least $c$ from all interesting clusters and (b) are mapped by $f_{\OPT}$ on different cluster-chains can be adjacent, because the distance between their $f_{\OPT}$ images in $H$ is at least $2c$. The lemma follows.
\end{proof}

\begin{lemma}\label{lem:noOnSame} No two deep components $Z_1$, $Z_2$ of $G - B$ can lie on the same cluster-chain $\chi$ \end{lemma}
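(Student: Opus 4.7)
The plan is to suppose for contradiction that $Z_1$ and $Z_2$ are two distinct deep components of $G - B$ both lying on $\chi = C_1, e_1, C_2, \ldots, e_{t-1}, C_t$, fix witnesses $z_1 \in Z_1$, $z_2 \in Z_2$ with $d_G(z_i, F) \geq \Delta/2$, and derive a contradiction by exhibiting a $z_1$-$z_2$ path in $G - B$.

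First I will exploit the $1$-dimensional structure of the middle of $\chi$. Because every boring cluster $C_i$ ($2 \leq i \leq t-1$) has long-edge degree at most $2$, with both slots used by $e_{i-1}$ and $e_i$, there is a natural ``spine'' path $\pi$ in $H_{\OPT}$ from a vertex of $C_1$ to a vertex of $C_t$ traversing, in order, $P_{e_1}, C_2, P_{e_2}, \ldots, C_{t-1}, P_{e_{t-1}}$. Using properness of $f_{\OPT}$, every interior vertex of $\pi$ lies on the $H_{\OPT}$-image of some edge of $G$; walking along $\pi$ and recording the corresponding $G$-edges, then inserting shortest paths of length at most $c$ between consecutive walk-vertices, yields a $G$-walk $W'$ that starts near $z_1$ and ends near $z_2$ and whose vertices all have $f_{\OPT}$-images within $H_{\OPT}$-distance $c$ of $\pi$.

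The technical heart of the proof will show that $W' \cap B$ is confined to narrow endpoint neighborhoods of $\pi$. Suppose $w \in W' \cap B$; then $d_G(w, v) \leq R$ for some $v \in F$, and by the COVER invariant (Lemma~\ref{lem:closeCluster}) $v$ is within $G$-distance $2r$ of some interesting cluster $C$, giving $d_{H_{\OPT}}(f_{\OPT}(w), C) \leq cR + 2cr$. If $C \in \{C_1, C_t\}$, then $f_{\OPT}(w)$ lies within $cR + 2cr$ of an endpoint of $\pi$ along the spine. Otherwise, I will use the structural observation that any $H_{\OPT}$-path from $C$ to a point of the middle of $\chi$ must enter $\chi$ via $C_1$ or $C_t$: it cannot use a long edge into a boring $C_i$ (whose long-edge budget is saturated by $\chi$), and short edges cannot cross cluster boundaries. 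Combined with $d_{H_{\OPT}}(C, \{C_1, C_t\}) \geq \ell$ (distinct clusters are separated by at least one long edge), this yields the bound $d_\pi(f_{\OPT}(w), \{C_1, C_t\}) \leq cR + 2cr - \ell$, so $f_{\OPT}(w)$ again lies in a spine-neighborhood of an endpoint.

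With $W' \cap B$ contained in two endpoint buffers of bounded spine-length, a parameterization argument along $\pi$ will show that the central stretch of $W'$ lies entirely in $G - B$ and reaches both $z_1$ and $z_2$ (both of which are at $G$-distance at least $\Delta/2 > R$ from $F$, hence from $B$), contradicting $Z_1 \neq Z_2$. I expect the main obstacle to be handling the ``side-branch'' vertices of the boring clusters $C_2, \ldots, C_{t-1}$, whose pre-images are not on $W'$: showing that each such pre-image attaches to $W'$ within $G - B$ will combine the above bound on $W' \cap B$ with the local density lemma (Lemma~\ref{lemma:local_density}) to control how far into the side branches the ball around a nearby $F$-vertex can reach.
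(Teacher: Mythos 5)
Your key step---that every vertex of $W'\cap B$ is mapped near an interesting cluster and hence into a short endpoint buffer of the spine---is not supported by the lemmas you invoke, and this is where the argument breaks. Lemma~\ref{lem:closeCluster} (and the analysis of COVER) only guarantees that each vertex placed in $F$ is within roughly $2\ell h$ of \emph{some} cluster, which may perfectly well be a \emph{boring} cluster in the middle of the very chain $\chi$; the covering guarantee of Lemma~\ref{lem:interestingCluster} goes in the opposite direction (every interesting cluster is near some vertex of $F$, not every vertex of $F$ near an interesting cluster). If an $F$-vertex sits near a mid-chain boring cluster, $B$ meets $W'$ in the middle of the spine and your ``central stretch'' is severed. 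Even where your proximity claim does hold, the quantities do not close: your buffers have spine-length of order $cR+2cr$, while deepness only guarantees the witnesses are at distance $\Delta/2=2r$ from $B$ (equivalently, at $G$-distance greater than $R$ from $F$), which translates into spine-clearance from the chain ends of only about $R$. Since the embedding may stretch $G$-distances by a factor $c$, a vertex of $B$ can be mapped \emph{past} $f_{\OPT}(z_1)$ along the spine even though $z_1\notin B$, so the portion of $W'$ between $z_1$'s position and the centre may still meet $B$; the conclusion that the central stretch lies in $G-B$ and reaches both $z_1$ and $z_2$ does not follow.

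The paper's proof avoids both problems by arguing locally at the witnesses instead of globally along the spine. It picks $v_1\in Z_1$, $v_2\in Z_2$ at distance at least $\Delta/4$ from $B$ and at least $2c^2$ from every cluster, removes $f_{\OPT}(v_1),f_{\OPT}(v_2)$ from $H_{\OPT}$, and considers the middle component $\zeta$ not containing $C_1,C_t$. Properness yields a path inside $G[f_{\OPT}^{-1}(\zeta)]$ whose endpoints are within $c^2$ of $v_1$ and of $v_2$, hence lie in $Z_1$ and $Z_2$; such a path must meet $B$, so some component $B_i$ of $G[B]$ sends a vertex into $\zeta$. Since $B_i$ also meets an interesting cluster and is connected, its image must cross one of the two cut points, i.e.\ $B_i$ contains a vertex within $H_{\OPT}$-distance (hence $G$-distance) $c$ of $v_1$ or $v_2$, contradicting $d_G(\{v_1,v_2\},B)\geq\Delta/4$. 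Note that this only requires clearance of order $c$ around the witnesses, not of order $cR$, which is why the possible interleaving of $B$ along the spine is harmless there. To salvage your route you would need to prove, not cite, that COVER never places an $F$-vertex deep inside a chain, and to choose witnesses whose spine-clearance exceeds $cR+2cr$; neither is available from the stated lemmas, whereas the paper's cut-at-the-witnesses argument needs neither.
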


\begin{proof}
Suppose to such deep components exist. Because $\Delta = 4r$ and $r > \ell \cdot h$, and every cluster of $G$ has at most $\ell \cdot h$ vertices, it follows that $Z_1$ contains a vertex $v_1$ such that the distance from $v_1$ to any cluster in $G$ is at least $2c^2$ and $d_G(v_1, B) \geq \Delta/4$. Thus $\hat{f}_{\OPT}(v_1) = e_i$ for an edge $e_i$ on the cluster chain $\chi$. By an identical argument there is a vertex $v_2$ in $Z_2$ such that the distance from $v_2$ to any cluster in $G$ is at least $2c^2$ and $d(v_2, B) \geq \Delta/4$. Thus $\hat{f}_{\OPT}(v_2) = e_j$ for an edge $e_j$ on the cluster chain $\chi$.

Without loss of generality $i \leq j$ and if $i = j$ then $f_{\OPT}(v_1)$ is closer than $f_{\OPT}(v_2)$ to the endpoint of $P_{e_i}$ that lies in $C_i$.

The graph $H_{\OPT} \setminus f_{\OPT}(\{v_1, v_2\})$ has two connected components, one that contains $C_1$ and $C_t$, and one that does not. Consider the connected component $\zeta$ that does not. Because the embedding $f_{\OPT}$ is proper, $G[f_{\OPT}^{-1}(\zeta)]$ contains a path $P$ with one endpoint within distance at most $c^2$ from $v_1$, and the other within distance at most $c^2$ from $v_2$. Since $d(\{v_1 \cup v_2\},B) \geq \Delta/4$ we have that one endpoint of $P$ is in $Z_1$ and the other is in $Z_2$. But any path from $Z_1$ to $Z_2$ (and in particular $P$) must contain a vertex from $B$. This implies that $\zeta \cap B \neq \emptyset$. 

This yields a contradiction: we have that the component $B_i$ of $G[B]$ that has non-empty intersection with $\zeta$ also has non-empty intersection with an interesting cluster. It follows that $B_i$ contains a vertex within distance at most $c$ from either $v_1$ or $v_2$, contradicting the choice of $\{v_1,v_2\}$.
\end{proof}

\begin{lemma}\label{lem:deepCompEmbed} There is a polynomial time algorithm that given $G$, $B$ and a component $Z$ of $G - B$ computes an embedding of $Z$ components of $G - B$ into the line with distortion at most $(\ell \cdot h \cdot c)^4$. Furthermore, all vertices in $Z$ with neighbors outside $Z$ are mapped by this embedding within distance $(\ell \cdot h \cdot c)^6$ from the end-points. 
\end{lemma}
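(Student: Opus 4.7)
The plan is to reduce to the polynomial-time $\cO(c_{opt})$-approximation algorithm for line embedding of unweighted graphs due to~\cite{BadoiuDGRRRS05}, applied to $G[Z]$. It suffices to exhibit a line embedding of $G[Z]$ with distortion at most $\cO((\ell h c)^3)$; the approximation algorithm then produces one with distortion $\cO((\ell h c)^4)$. The existence of such a low-distortion embedding is obtained constructively by post-composing $f_{\OPT}|_Z$ with an explicit linear parameterization of the cluster chain on which $Z$ lies.

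By Lemma~\ref{lem:onChain}, $f_{\OPT}(Z)$ lies on a cluster chain $C_1, e_1, C_2, \ldots, e_{t-1}, C_t$ with $t \leq h+1$, only $C_1$ and $C_t$ interesting, and each boring cluster of $H_{\OPT}$-diameter at most $\ell h$ (it has at most $h$ short edges, each of $w_{\OPT}$-length at most $\ell$). I parameterize this chain as a linear path $\pi$ in $\mathbb{R}$ in which each long edge $e_i$ contributes a segment of length $w_{\OPT}(P_{e_i})$ and each cluster contributes a segment of length $\Theta(c\ell h^2)$. By Lemma~\ref{lemma:local_density}, each cluster contains at most $\cO(c\ell h^2)$ vertices of $Z$, so I can distribute the vertices of $\hat{f}_{\OPT}^{-1}(C_i) \cap Z$ inside their cluster segment at pairwise separation at least $1$. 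The embedding $g : Z \to \mathbb{R}$ sends each $v \in Z$ to the coordinate of $\pi(f_{\OPT}(v))$ if $f_{\OPT}(v)$ lies on a long edge, and to its assigned within-cluster position otherwise.

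A case analysis on whether $u, v \in Z$ share a cluster shows that $|g(u) - g(v)|$ differs from $d_{H_{\OPT}}(f_{\OPT}(u), f_{\OPT}(v))$ by at most an additive $\cO(c \ell h^2)$; together with $d_G(u,v) \leq d_{H_{\OPT}}(f_{\OPT}(u), f_{\OPT}(v)) \leq c \cdot d_G(u,v)$ and the $g$-separation lower bound of $1$, this bounds $\distortion(g)$ by $\cO((\ell h c)^3)$. For the endpoint property, a boundary vertex $v \in Z$ has a neighbor in $B = \ball_G(F, R)$, so $d_G(v, F) \leq R + 1 = \cO(\ell h^3)$; since $F$ $2r$-covers the interesting clusters, $d_G(v, \hat{f}_{\OPT}^{-1}(C_1 \cup C_t)) = \cO(\ell h^3)$. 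Hence $g$ places $v$ within $\cO(c \ell h^3 \cdot c \ell h^2) = \cO((\ell h c)^5)$ of an endpoint of $\pi$, and propagating through the line approximation algorithm's $\cO(c_{opt})$ guarantee keeps boundary vertices within $(\ell h c)^6$ of an endpoint of the output.

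The main obstacle I anticipate is ensuring that the line approximation algorithm's output respects the endpoint property, since as a black box it may permute orientations freely and does not know which vertices of $Z$ are ``boundary''. I plan to handle this either by (i) augmenting $G[Z]$ with pendant paths attached to boundary vertices, weighted so that any low-distortion embedding of the augmented graph must place them near the line's endpoints, while the constructed $g$ extends to an $\cO((\ell h c)^3)$-embedding of the augmented graph as well; or (ii) exploiting the BFS-based nature of the algorithm of~\cite{BadoiuDGRRRS05}, running it seeded with each boundary vertex as BFS root and returning an output satisfying the endpoint property---which must exist since our constructed $g$ does, and which can be verified directly.
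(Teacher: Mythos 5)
Your overall route is the same as the paper's: use Lemma~\ref{lem:onChain} to see that $Z$ lies on a single cluster-chain, read off from $f_{\OPT}$ an explicit low-distortion line embedding of $Z$ (the paper turns the chain order into a pushing, non-contracting embedding of distortion at most $(\ell h)^2c$; your parameterization $g$ is an essentially equivalent construction), then run the approximation algorithm of~\cite{BadoiuDGRRRS05} and transfer the endpoint property to its output. That part, including your two suggested mechanisms (pendant paths or seeding the BFS at boundary vertices) for making the black box respect the endpoint behaviour, is reasonable and close in spirit to the paper's remark that any two low-distortion line embeddings of the same space place the same vertices near the ends.

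The gap is in the step ``since $F$ $2r$-covers the interesting clusters, $d_G(v,\hat{f}_{\OPT}^{-1}(C_1\cup C_t))=\cO(\ell h^3)$''. Covering goes the wrong way: it guarantees that every interesting cluster is within $2r$ of some vertex of $F$, not that the vertex of $F$ near a given boundary vertex $v$ is near an interesting cluster. The COVER algorithm may put into $F$ vertices that are only guaranteed (by Lemma~\ref{lem:closeCluster}, or in the branch where $v$ itself is added) to be within $\cO(\ell h)$ of \emph{some} cluster, possibly a boring cluster in the interior of the chain $\chi$. In that case a ball of $B$ truncates $\chi$ in its interior, $Z$ is only a middle portion of the chain, and its boundary vertices are far from $\hat{f}_{\OPT}^{-1}(C_1\cup C_t)$ and from the endpoints of your parameterization $\pi$, so the claim that $g$ places $v$ within $\cO((\ell h c)^5)$ of an endpoint of $\pi$ fails. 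What the lemma actually needs (and what the paper argues, via the first/last positions of the ordering of $Z$) is proximity to the endpoints of the embedding of $Z$ itself, i.e.\ to the extremes of $g(Z)$: on each side the extreme of $g(Z)$ occurs where $Z$ abuts $B$, a boundary vertex is adjacent to $B$ there, and a local-density count bounds how many vertices of $Z$ (and hence how much length) can lie between $g(v)$ and the nearest extreme. Replacing your $C_1\cup C_t$ argument with this bound, the rest of your proof goes through.
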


\begin{proof}
Let $Z$ be a component of $G - B$. By Lemma~\ref{lem:onChain}, $Z$ lies on a cluster-chain $\chi = C_1, e_1, \ldots, C_t$. Define a following total ordering of the vertices in $Z$: 
If $\hat{f}_{\OPT}(a) \in C_i \cup \{e_i\}$ and $\hat{f}_{\OPT}(b) \in C_j \cup \{e_j\}$ and $i < j$, then $a$ comes before $b$.
If $\hat{f}_{\OPT}(a) \in C_i$ and $\hat{f}_{\OPT}(b) = e_i$ then $a$ comes before $b$.
If $\hat{f}_{\OPT}(a) = \hat{f}_{\OPT}(b) = e_i$ and ${f}_{\OPT}(a)$ is closer than ${f}_{\OPT}(b)$ to $C_{i-1}$, then $a$ comes before $b$.
If $\hat{f}_{\OPT}(a) \in C_i$ and $\hat{f}_{\OPT}(b) \in  C_i$ break ties arbitrarily.

At most $\ell \cdot h$ vertices are mapped to any boring cluster $C_i$, and the distance between any two vertices in the same boring cluster in $H_{\OPT}$ is at most $\ell \cdot h$. Thus the distance (in $G$) between any two consecutive vertices in this ordering is at most $\ell \cdot h \cdot c$. The number of vertices appearing in the ordering between the two endpoints of an edge is at most $\ell \cdot h$ (all the vertices of a boring cluster). Thus, if the ordering is turned into a pushing, non-contracting embedding into the line, the distortion of this embedding is at most $(\ell \cdot h)^2 \cdot c$. Using the known polynomial time approximation algorithm for embedding into the line~\cite{BadoiuDGRRRS05} we can find an embedding of $Z$ into the line with distortion at most $(\ell \cdot h \cdot c)^4$ in polynomial time.

Because $B$ is a union of at most $h$ balls, it follows that at most $c^2 \cdot h^2$ vertices in $Z$ have neighbors in $G$, and that all of these vertices are among the $\ell \cdot h$ first or last ones in the above ordering. Since any two low distortion embeddings of a metric space into the line map the same vertices close to the end-points, it follows that all vertices in $Z$ with neighbors outside $Z$ are mapped by this embedding within distance $(\ell \cdot h \cdot c)^6$ from the end-points. 
\end{proof}

The STITCH algorithm builds the graph $H'$ as follows. Every vertex of $H'$ corresponds to a connected component $B \in {\cal B}$. Every deep component $Z$ of $G - B$ corresponds to an edge between the (at most two) sets $B_1$ and $B_2 \in {\cal B}$ that have non-empty intersection with $N_G(Z)$. Note that the graph $H'$ is a multi-graph because it may have multiple edges and self loops. However, since each set $B \in {\cal B}$ has a connected image in $H$ under $\hat{f}$, Lemmata~\ref{lem:onChain} and~\ref{lem:noOnSame} imply that $H'$ is a topological subgraph of $H$. Hence any weighted subdivision of $H'$ is a weighted subdivision of a subgraph of $H$.

The STITCH algorithm uses Lemma~\ref{lem:deepCompEmbed} to compute embeddings of each deep connected component $Z$ of $G \setminus B$. Further, for each component $B_i \in {\cal B}$ the algorithm computes the set $B_i^\star$ which contains $B_i$, as well as the vertex sets of all shallow connected components whose neighborhood is in $B_i$. By Lemma~\ref{lem:shallowGrave} the $B_i^\star$'s together with the deep components of $G - B$ form a partition of $V(G)$.

What we would like to do is to map each set $B_i^\star$ onto the vertex of $H'$ that it corresponds to, and map each deep connected component $Z$ of $G-B$ onto the edge of $H'$ that it corresponds to. When mapping $Z$ onto the edge of $H$ we use the computed embedding of $Z$ into the line, and subdivide this edge appropriately. 

The reason this does not work directly is that we may not map all the vertices of $B_i^\star$ onto the single vertex $v_i$ in $H'$ that corresponds to $B_i$. Instead, STITCH picks one of the edges incident to $v_i$, sub-divides the edge an appropriate number of times, and maps all the vertices of $B_i^\star$ onto the newly created vertices on this edge. The order in which the vertices of $B_i^\star$ are mapped onto the edge is chosen arbitrarily, however all of these vertices are mapped closer to $v_i$ than any vertices of the deep component $Z$ that is mapped onto the edge. This concludes the construction of $H_{\ALG}$ and 
$f_{\ALG}$. The STITCH algorithm defines a weight function $w_{\ALG}$ on the edges of $H_{\ALG}$, such that the embedding is pushing and non-contracting.

\begin{lemma}\label{lem:appxSuccess}
$f_{\ALG}$ is a $c_{\ALG}$-embedding of $G$ into $(H_{\ALG}, w_{\ALG})$.
\end{lemma}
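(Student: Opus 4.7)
The plan is to check the bound via a case analysis on $uv \in E(G)$. By construction of $w_{\ALG}$ the embedding $f_{\ALG}$ is pushing and non-contracting, hence has contraction exactly $1$; by~(\ref{eqn:distDefnByEdge}) it therefore suffices to show $d_{(H_{\ALG},w_{\ALG})}(f_{\ALG}(u),f_{\ALG}(v)) \leq c_{\ALG}$ for every $uv \in E(G)$.

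Before the case analysis I would quantify the key sizes. From $R \leq 4r\cdot h^2 \leq 400c^3h^3$ and $\Delta = 4r = O(c^3h)$, and the fact that by the stopping rule of the while-loop the vertices of $F$ are pairwise at distance more than $2R+\Delta$ in $G$, each $B_i$ is contained in a single ball $\ball_G(v_j,R)$; the local density Lemma~\ref{lemma:local_density} then gives $|B_i^\star| = O(c^4h^4)$ and $\diam_G(B_i^\star) \leq 2R + \Delta = O(c^3h^3)$. For each deep component $Z$, Lemma~\ref{lem:deepCompEmbed} supplies a line embedding of distortion at most $(\ell h c)^4$ in which every vertex of $Z$ having a neighbor outside $Z$ lies within distance $(\ell h c)^6 = (20c^4h)^6$ from one of the two endpoints of $Z$'s image.

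The crux is the case analysis for $uv \in E(G)$. I would first rule out the ``mixed'' cases: $u \in B_i^\star$, $v \in B_j^\star$ with $i \neq j$ is impossible because the $B_i$ are distinct components of $G[B]$ and, by Lemma~\ref{lem:shallowGrave}, each shallow component borders a unique $B_i$; and $u,v$ in distinct deep components is impossible because deep components are distinct components of $G-B$. The three remaining cases are handled as follows.
\textbf{(i)} $u,v \in B_i^\star$: both images lie on the single edge of $H_{\ALG}$ incident to $v_i$ chosen for $B_i^\star$, so since the embedding is pushing, $d_{H_{\ALG}}(f_{\ALG}(u),f_{\ALG}(v)) \leq |B_i^\star|\cdot\diam_G(B_i^\star) = O(c^7h^7)$. \textbf{(ii)} $u,v$ in a single deep component $Z$: the bound is $d_G(u,v)$ times the line-distortion, i.e.\ at most $(\ell h c)^4 = O(c^{16}h^4)$. \textbf{(iii)} $u \in B_i^\star$, $v \in Z$ with $uv \in E(G)$: then $v \in N(B_i)$, so $Z$ is attached to $B_i$ and $v$ is a boundary vertex of $Z$, placed within distance $(\ell h c)^6$ of the endpoint of $Z$'s edge corresponding to $v_i$; routing either along the $B_i^\star$-section and then along $Z$'s edge (through $v_i$ if the two edges differ) yields
\[
d_{H_{\ALG}}(f_{\ALG}(u),f_{\ALG}(v)) \leq |B_i^\star|\cdot\diam_G(B_i^\star) + (\ell h c)^6 = O(c^{24}h^6).
\]

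Since $(20c^4h)^6 = 64\cdot 10^6\cdot c^{24}h^6$, the dominant term in case (iii) fits inside $c_{\ALG} = 64\cdot 10^6\cdot c^{24}(h+1)^9$, and the slack factor of $(h+1)^3$ absorbs the lower-order terms and the negligible cost of traversing $v_i$. I expect case (iii) to be the main obstacle, because there one has to argue carefully that the shortest path in $H_{\ALG}$ between $f_{\ALG}(u)$ and $f_{\ALG}(v)$ is genuinely bounded by the ``natural'' path through $v_i$ (shortcuts via other cycles in $H'$ only improve the upper bound, but one must confirm that the $B_i^\star$-section is arranged so that the path to $v$ across $Z$'s edge is as claimed), and that $H'$ is indeed a topological subgraph of $H$ so that $(H_{\ALG},w_{\ALG})$ is a weighted subdivision of a subgraph of $H$, as required.
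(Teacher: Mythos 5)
Your overall route is the same as the paper's: reduce to bounding the stretch of each edge $uv\in E(G)$ (legitimate since $f_{\ALG}$ is pushing and non-contracting), bound the total length of the path onto which $B_i^\star$ is mapped, use Lemma~\ref{lem:deepCompEmbed} both for edges inside a deep component and for the placement of boundary vertices of $Z$ near the end-points, and finish with the three-case analysis; the dominant $(\ell h c)^6$ term and the slack in $c_{\ALG}$ are handled as in the paper.

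There is, however, one step that fails as stated. You claim that ``by the stopping rule of the while-loop the vertices of $F$ are pairwise at distance more than $2R+\Delta$,'' and hence that each $B_i$ is contained in a single ball $\ball_G(v_j,R)$. The loop only terminates when no pair of vertices of $F$ has distance in the band $[2R,\,2R+\Delta]$; pairs at distance strictly less than $2R$ are allowed to remain (and typically do, since COVER only guarantees pairwise distance at least $r+1$ while $R\geq 4r$), and the balls around such vertices merge into a single connected component of $G[B]$. The purpose of the loop is only to guarantee $d_G(B_1,B_2)\geq\Delta$ for \emph{distinct} components. The correct structural statement, used in the paper, is that each $B_i$ is a union of at most $h$ balls of radius $R$, which gives $\diam_G(B_i^\star)\leq 2R(h+1)$, $|B_i^\star| = \cO(Rc\,h(h+1))$ by Lemma~\ref{lemma:local_density}, and an image path length of at most $8R^2(h+1)^3$. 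This is weaker than your single-ball bounds by polynomial factors in $h$, but it still sits well below $c_{\ALG}=64\cdot 10^6\cdot c^{24}(h+1)^9$ even when added to $(\ell h c)^6$ in your case (iii), so the rest of your argument goes through unchanged once this step is replaced.
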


\begin{proof}
It suffices to show that the distance in $(H_{\ALG}, w_{\ALG})$ between the image of two endpoints of an edge $uv \in E(G)$ is never more than $c_{\ALG}$. To that end, the main observation is every $B_i \in {\cal B}$ is the union of at most $h$ balls of radius $R$. Every vertex of $B_i^\star$ is within distance $\Delta/2 \leq R$ from some vertex in $B_i$. Hence, for any two vertices $a, b \in B_i^\star$ we have that $d_G(a, b) \leq 2 \cdot R \cdot (h+1)$. Thus, by Lemma~\ref{lemma:local_density} we have that $|B_i| \leq 4Rh(h+1)$. Therefore, for any $B_i^\star$, the embedding $f_{\ALG}$ embeds $B_i^\star$ on a path of length at most $8R^2h(h+1)^2 \leq 8R^2(h+1)^3$ in $(H_{\ALG}, w_{\ALG})$.

Every edge with both endpoints in $B_i^\star$ is therefore stretched at most $8R^2(h+1)^3$ by $f_{\ALG}$. By Lemma~\ref{lem:deepCompEmbed}, every edge with both endpoints in a deep component $Z$ of $G \setminus B$ is stretched at most $(\ell \cdot h \cdot c)^4$. Furthermore, by Lemma~\ref{lem:deepCompEmbed}, any edge with one endpoint in  $B_i^\star$  and the other in $Z$ is stretched at most $8R^2(h+1)^3 + (\ell \cdot h \cdot c)^6$. Hence every edge is stretched at most $c_{\ALG}$ completing the proof.
\end{proof}

\subsection{The Approximation Algorithm}
We are now ready to prove Lemma~\ref{lem:mainApproxH}, for convenicence we re-state the lemma here.

\medskip
\noindent {\bf Lemma~\ref{lem:mainApproxH}.} {\em There is an algorithm that takes as input a graph $G$ with $n$ vertices, a graph $H$ and an integer $c$, runs in time $2^h \cdot n^{\cO(1)}$ and either correctly concludes that there is no $c$-embedding of $G$ into a weighted subdivision of $H$, or produces a proper, pushing $c_{\ALG}$-embedding of $G$ into a weighted subdivision of a subgraph $H'$ of $H$, where $c_{\ALG} = 64 \cdot 10^6 \cdot c^{24} (h+1)^9$.}

\begin{proof}
The algorithm runs the COVER algorithm, to produce a collection ${\cal F}$, such that  $|{\cal F}| \leq 2^h$, every set in ${\cal F}$ has size at most $h$, and such that if $G$ has a $c$-embedding $f_{\OPT}$ of into a weighted subdivision of $H$, then some $F \in {\cal F}$ $2r$-covers all interesting clusters (of $f_{\OPT}$) in $G$. For each $F \in {\cal F}$ the algorithm runs the STITCH algorithm, which takes polynomial time. If STITCH outputs a $c_{\ALG}$-embedding of $G$ into a weighted subdivision of a subgraph $H'$ of $H$, the algorithm returns this embedding.

By Lemma~\ref{lem:appxSuccess}, for the choice of $F \in {\cal F}$ that $2r$-covers all interesting clusters, the STITCH algorithm does output a $c_{\ALG}$-embedding of $G$ into a weighted subdivision of a subgraph $H'$ of $H$. This concludes the proof.
\end{proof}

The discussion prior to the statement of Lemma~\ref{lem:mainApproxH} immediately implies that Lemma~\ref{lem:mainApproxH} is sufficient to give an approximation algorithm for finding a low distortion (not necessarily pushing, proper or non-contracting) embedding $G$ into a (unweighted) subdivision of $H$. The only overhead of the algorithm is the guessing of the subgraph $H_{sub}$ of $H$, this incurs an additional factor of $2^{|V(H)| + |E(H)|} \leq 4^h$ in the running time, yielding the following theorem.

\begin{theorem}\label{thm:mainApprox}
There exists a $8^hn^{\cO(1)}$ time algorithm that takes as input an $n$-vertex graph $G$, a graph $H$ on $h$ vertices, and an integer $c$, and either correctly concludes that there is no $c$-embedding from $G$ to a subdivision of $H$, or produces a $c_{\ALG}$-embedding of $G$ into a subdivision of $H$, with $c_{\ALG} \leq  64 \cdot 10^6 \cdot c^{24} (h+1)^9$.
\end{theorem}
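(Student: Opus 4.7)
The plan is to derive Theorem~\ref{thm:mainApprox} from Lemma~\ref{lem:mainApproxH} by handling two remaining reductions: (i) passing from an arbitrary $c$-embedding of $G$ into an (unweighted) subdivision of $H$ to a proper, pushing, non-contracting $c$-embedding into a weighted subdivision of a subgraph of $H$; and (ii) converting the weighted subdivision output by Lemma~\ref{lem:mainApproxH} back into an unweighted subdivision of $H$.

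First, I would invoke Lemma~\ref{lemma:H_proper}: if any $c$-embedding of $G$ into a subdivision of $H$ exists, then there exists a proper, pushing, non-contracting $c$-embedding of $G$ into a weighted subdivision $(H^q,w^q)$ of some quasi-subgraph of $H$. In particular $H^q$ is a weighted subdivision of some subgraph $H_{sub}$ of $H$, and since $H$ has at most $2^{|V(H)|+|E(H)|}\le 4^h$ subgraphs, the algorithm may enumerate all candidates for $H_{sub}$ with an overhead of $4^h$ in its running time.

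Second, for each guess of $H_{sub}$, I would call the algorithm of Lemma~\ref{lem:mainApproxH} with $H_{sub}$ in place of $H$. This call runs in time $2^{|E(H_{sub})|}\cdot n^{\cO(1)}\le 2^h\cdot n^{\cO(1)}$ and either correctly reports that $G$ admits no $c$-embedding into a weighted subdivision of $H_{sub}$, or outputs a proper, pushing $c_{\ALG}$-embedding of $G$ into a weighted subdivision of some subgraph $H'$ of $H_{sub}$, with $c_{\ALG}\le 64\cdot 10^6\cdot c^{24}(h+1)^9$. If for any guess $H_{sub}$ we receive an embedding, we proceed with it; otherwise we may correctly conclude that no $c$-embedding exists.

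Third, I would convert the weighted output into an (unweighted) subdivision of $H$. Given the weighted host $(H_{\ALG},w_{\ALG})$, I would scale all weights by a large integer and round up, losing only a $(1+\epsilon)$ factor in distortion which is absorbed into the polynomial slack of $c_{\ALG}$. Each integer-weight edge of length $k$ is then replaced by a path of $k$ unit-length edges, giving an unweighted subdivision $\tilde H$ of the subgraph $H'$ of $H$. To lift this to a genuine subdivision of $H$, I would attach, for each missing edge $e\in E(H)\setminus E(H')$, a path subdividing $e$ of arbitrary length at the appropriate endpoints; no vertex of $G$ is mapped onto these new vertices, so distances among the images of $V(G)$ are unchanged and the distortion of $f_{\ALG}$ is preserved.

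The total running time is the product of the $4^h$ subgraph enumerations and the $2^h\cdot n^{\cO(1)}$ time per invocation of Lemma~\ref{lem:mainApproxH}, i.e.\ $8^h\cdot n^{\cO(1)}$, as required. The main obstacle is not conceptual but bookkeeping: one must verify that the three reductions (arbitrary embedding $\to$ proper pushing embedding via Lemma~\ref{lemma:H_proper}; guessing the subgraph $H_{sub}$; rounding from weighted to unweighted subdivisions) together inflate the distortion by only a constant factor that is already absorbed into the stated $c_{\ALG}=\cO(c^{24}(h+1)^9)$ bound, and that each reduction is compatible with the ``proper, pushing, non-contracting'' hypothesis needed to feed into the next step.
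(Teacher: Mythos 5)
Your proposal is correct and follows essentially the same route as the paper: reduce to proper, pushing embeddings via Lemma~\ref{lemma:H_proper}, guess the subgraph $H_{sub}$ of $H$ (a $4^h$ factor), invoke Lemma~\ref{lem:mainApproxH} (a $2^h\cdot n^{\cO(1)}$ factor), and then convert the weighted output back into a subdivision of $H$, which the paper handles by the same ``subdivide sufficiently many times'' observation you make explicit via scaling and rounding. The only small caution is that the paths you attach for the missing edges of $E(H)\setminus E(H')$ must be chosen sufficiently long (not merely of arbitrary length) so that they create no shortcuts between images of $V(G)$; with that wording fixed, the argument matches the paper's.
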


Finally, we remark that at a cost of a potentially higher running time in terms of $h$, one may replace the $(h+1)^9$ factor with $c^9$. If $c \geq h+1$ we have that $c_{\ALG} \leq 64 \cdot 10^6 \cdot c^{33}$. On the other hand, if $c \leq h+1$ we may run the algorithm of Theorem~\ref{theorem:fpt_H} in time $f(H)n^{\cO(1)}$ instead and solve the problem optimally.

\section{A FPT algorithm for embedding into arbitrary graphs}\label{sec:H_fpt}
In this section we design our FPT algorithm for embedding into 
arbitrary graph. In particular we show the following result in this section.

\begin{theorem} \label{theorem:fpt_H}
Given an integer $c>0$ and graphs $G$ and $H$, it is possible in time $f(H, c) \cdot n^{\cO(1)}$ to either find a non-contracting $c$-embedding of $G$ into some subdivision of $H$, or correctly determine that no such embedding exists.
\end{theorem}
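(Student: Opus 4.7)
The plan is to adapt the structural decomposition from the approximation algorithm (Lemma~\ref{lem:mainApproxH}) together with the FPT line-embedding technique of Fellows \etal/Badoiu \etal. First, by Lemma~\ref{lemma:H_proper}, it suffices to look for a proper, pushing, non-contracting $c$-embedding into a weighted subdivision of some quasi-subgraph of $H$; there are only $2^{\cO(|V(H)|+|E(H)|)}$ such quasi-subgraphs, so we enumerate them and reduce to the proper-pushing case for a fixed target.

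Next, for each fixed target, I would guess (in time $2^{|E(H)|}$) which edges of $H$ correspond to ``long'' paths ($P_e$ with $w_{\OPT}(P_e)>\ell$, $\ell=\Theta(c^3)$) and which are ``short'', and, as in Section~\ref{sec:H_approx}, form the clusters together with the partition into \emph{interesting} clusters (long-degree $\geq 3$) and chains of \emph{boring} clusters joined by long edges. Then I would invoke the Interesting Cluster Covering Lemma (Lemma~\ref{lem:interestingCluster}) to obtain a family ${\cal F}$ of size at most $2^h$ such that some $F\in{\cal F}$ covers every interesting cluster to within distance $2r$. For each candidate $F$, let $S$ be the set of vertices of $G$ within distance (say) $\Theta(c^2\cdot\ell\cdot h)$ from $F$; by the Local Density Lemma (Lemma~\ref{lemma:local_density}) we have $|S|\leq g(H,c)$. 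I would then exhaustively guess the embedding of $S$ into the subdivision of $H$ restricted to the interesting clusters together with an $\cO(c^2)$-length prefix of each long edge leaving an interesting cluster; because all distances inside this guessed region are bounded by $g(H,c)$, there are only $g'(H,c)$ many guesses.

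Having fixed the embedding near the interesting clusters, the remaining vertices of $G$ partition (by Lemmas~\ref{lem:onChain} and~\ref{lem:noOnSame}, as in STITCH) into pieces, one per cluster-chain (or long edge) between two interesting clusters; each such piece must be embedded into a path cluster whose endpoints are already fixed by the guess. The pieces behave like ``line-like'' metrics: boring clusters are small and long edges are paths, so, as in Lemma~\ref{lem:deepCompEmbed}, each piece has a canonical near-linear order with bounded local width. At this point I would run an adapted version of the FPT line-embedding dynamic program of Fellows \etal/Badoiu \etal~\cite{Fellows:2013:DFP:2539126.2489789,BadoiuDGRRRS05}: sweep along the path cluster and maintain, for each reached position, the bounded-size ``frontier'' of already-embedded vertices within distance $c$, together with the partial edge weights chosen so far. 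The boring clusters only add a constant (depending on $h$ and $c$) to the width of this frontier, so the DP runs in time $f(H,c)\cdot n^{\cO(1)}$, and its correctness hinges on the embedding being pushing so that local consistency implies global non-contraction.

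The main obstacle I anticipate is making the DP honor the boundary conditions imposed by the already-guessed images around each interesting cluster while simultaneously accommodating the boring clusters: unlike embedding into a pure line, the state of the DP must encode not just a sliding window but also the identity and orientation of the current short-edge component and how many more vertices are committed to it. Showing that this state space has size bounded by $f(H,c)$ (independent of $n$) is the crux; once this is done, combining the per-chain DPs with the guessed embedding of $S$ produces a non-contracting $c$-embedding if one exists, and the total running time is $f(H,c)\cdot n^{\cO(1)}$ as required.
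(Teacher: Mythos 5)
Your high-level plan matches the paper's in outline (reduce to proper, pushing embeddings into quasi-subgraphs via Lemma~\ref{lemma:H_proper}, guess long/short edges, locate and exhaustively handle the neighbourhood of the interesting clusters, then solve each path cluster by a Fellows-et-al-style sweep), but it leaves the genuinely hard steps unresolved. The crux you yourself name --- that the sweep's state space over a cluster-chain is bounded by $f(H,c)$ --- is exactly what the paper has to build machinery for, and it is not just a matter of a slightly larger sliding window: a boring cluster is an arbitrary small subgraph of $H$ made of short edges (it may contain branch vertices and cycles), and an exact $c$-embedding into it requires choosing real-valued edge weights, in particular the free offsets between branch vertices of $H$ and the nearest images of vertices of $G$, which govern non-contraction for pairs routed through the cluster. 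The paper resolves this with the \textsf{CLUSTER} subroutine, which enumerates the finite combinatorial data (assignment of vertices to edges, orderings, and the paths realizing shortest distances) and then fixes the weights by a linear program, and with the \textsf{PATH} algorithm, whose ``feasible partial embeddings'' carry, besides the $\cO(c^2)$-size window on a long edge, an entire \textsf{CLUSTER} solution for the upcoming boring cluster (Lemmas~\ref{lemma:feasible_limit} and~\ref{lemma:feasible_partial_embeddings_fopt}). The same weight issue already bites your earlier step: ``exhaustively guess the embedding of $S$'' near the interesting clusters is not a finite enumeration, since the host metric has continuous parameters; only the combinatorial pattern can be guessed, and the weights must then be certified, e.g.\ by such an LP.

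A second concrete gap is the case in which there is no interesting cluster at all (e.g.\ the optimum maps $G$ into a subdivided cycle or path consisting only of boring clusters). Then the covering set $F$ from Lemma~\ref{lem:interestingCluster} provides no anchors, your sweep has no boundary conditions to start from, and on a cycle it has no start or end; note that embedding into a cycle is precisely the case the paper highlights as previously open. The paper handles this degenerate situation explicitly in the proof of Theorem~\ref{theorem:fpt_H} by attaching a clique $K_k$ gadget to $G$ and (in all possible ways) to $H$, forcing an interesting cluster to exist before running the main algorithm. Without an argument for this case, and without the bounded-state-space and weight-selection arguments above, the proposal is an outline of the right strategy rather than a proof.
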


The proof of Theorem \ref{theorem:fpt_H} will come at the end of the section. Using Lemma \ref{lemma:H_proper}, for the rest of this section we shall assume w.l.o.g.~that $f_\OPT$ is a proper, pushing, non-contracting $c_\OPT$-embedding of $G$ into $(H_\OPT, w_\OPT)$, where $H_\OPT$ is a subdivision of $H^q$, some quasi-subgraph of $H$, and $w_\OPT : E(H^q) \rightarrow \mathbb{R}^{>0}$.

\begin{definition}
For any $e \in E(H^q)$, we say $e$ is \emph{short} if $w^q(e) \leq 16(c_\OPT)^4$. Otherwise, $e$ is \emph{long}.
\end{definition}

Based on this definition of short and long edges, we define the following notions of clusters in $H^q$.

\begin{definition}
Let $\mathcal{C}$ the set of connected components of $H^q \setminus \{e \in E(H) : \mbox{$e$ is long}\}$. We say that $C \in \mathcal{C}$ is an \emph{interesting cluster of $H^q$} if there exist at least 3 paths leaving $C$ in $H^q$. Let 
\[
\mathcal{C}^{\geq3} = \{C \in \mathcal{C} : C \mbox{is an interesting clusters of } H^q\}
\]
and let
\[
\mathcal{C}^{<3} = \mathcal{C} \setminus \mathcal{C}^{\geq3}.
\]
\end{definition}

\begin{definition}
For each connected component $C$ of $H^q \setminus \mathcal{C}^{\geq3}$, we say $C$ is a \emph{path cluster of $H^q$}.
Let $\mathcal{P}$ be the set of path clusters of $H^q$.
\end{definition}

The following lemma describes the 3 categories these path clusters may fall into.

\begin{lemma}
For all $P \in \mathcal{P}$, one of the following cases holds:
\begin{description}
\item{\textbf{Case 1.}} There exists $k>0$ and a sequence 
\[
e_1, C_1, \ldots, e_k, C_k
\]
such that 
\begin{enumerate}
\setlength{\itemsep}{-2pt}
\item $e_1, \ldots, e_k$ are long edges of $H^q$. 
\item $C_1, \ldots, C_k \in \mathcal{C}^{<3}$.
\item $P = e_1 \cup C_1 \cup \ldots \cup e_k \cup C_k$.
\item There exists $C \in \mathcal{C}^{\geq3}$ such that $C \cap e_1 \neq \emptyset$.
\end{enumerate}
\item{\textbf{Case 2.}} There exists $k>0$ and a sequence 
\[
e_1, C_1, \ldots, e_k, C_k, e_{k+1}
\]
such that
\begin{enumerate}
\setlength{\itemsep}{-2pt}
\item $e_1, \ldots, e_{k+1}$ are long edges of $H^q$. 
\item $C_1, \ldots, C_k \in \mathcal{C}^{<3}$.
\item $P = e_1 \cup C_1 \cup \ldots \cup e_k \cup C_k$.
\item There exists $C \in \mathcal{C}^{\geq3}$ such that $C \cap e_1 \neq \emptyset$ and for all $C' \in \mathcal{C}^{\geq3}$, $C' \cap e_{k+1} = \emptyset$.
\end{enumerate}
\item{\textbf{Case 3.}} There exists $k>0$ and a sequence 
\[
e_1, C_1, \ldots, e_k, C_k, e_{k+1}
\]
such that
\begin{enumerate}
\setlength{\itemsep}{-2pt}
\item $e_1, \ldots, e_{k+1}$ are long edges of $H^q$. 
\item $C_1, \ldots, C_k \in \mathcal{C}^{<3}$.
\item $P = e_1 \cup C_1 \cup \ldots \cup e_k \cup C_k$.
\item There exists $C, C' \in \mathcal{C}^{\geq3}$ such that $C \cap e_1 \neq \emptyset$ and $C' \cap e_{k+1} \neq \emptyset$.
\end{enumerate}
\end{description}
\end{lemma}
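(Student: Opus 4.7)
The plan is to analyze the structure of $P$ via a ``cluster-level'' auxiliary multigraph $G_{\mathrm{cl}}$ whose vertices are the clusters of $H^q$ (both those in $\mathcal{C}^{\geq 3}$ and those in $\mathcal{C}^{<3}$) and whose edges are the long edges of $H^q$, with each long edge joining the two clusters that contain its endpoints. By the definition of ``interesting'', every vertex of $G_{\mathrm{cl}}$ in $\mathcal{C}^{\geq 3}$ has degree at least $3$, and every vertex in $\mathcal{C}^{<3}$ has degree at most $2$. Since $P$ is a connected component of $H^q \setminus \mathcal{C}^{\geq 3}$, the set of non-interesting clusters that $P$ meets, together with the long edges internal to $P$, forms a connected subgraph of $G_{\mathrm{cl}}$ all of whose vertices have degree at most $2$; I would conclude that this subgraph is a simple path $C_1, C_2, \ldots, C_k$ (ruling out the cycle case, as explained below).

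Next I would classify $P$ by the two ``free'' ends of this cluster-path. Each endpoint $C_1$ and $C_k$ has at most one extra long edge of $H^q$ leaving it that is not one of the internal edges between consecutive $C_i$'s, and I would call these $e_1$ and $e_{k+1}$ respectively. For each such extra edge there are exactly two possibilities relevant to the lemma: either its far endpoint lies in some interesting cluster $C \in \mathcal{C}^{\geq 3}$ (which has been removed from $P$), or it does not --- in the latter case the ``extra'' edge is either absent (the cluster has only the one internal long edge out of it) or it leads to a pendant leaf of $H^q$ produced by rule~3 of Definition~\ref{def:H_quasi}. Pairing these possibilities over the two ends yields exactly three symmetry classes of ``shapes'' for $P$: exactly one end attached to an interesting cluster with the other end a dead-end (Case~1); one end attached to an interesting cluster with the other end an extra non-interesting exit (Case~2); and both ends attached to interesting clusters (Case~3). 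The proof then reduces to matching each shape to the literal formula $P = e_1 \cup C_1 \cup \cdots \cup e_k \cup C_k$ and reading the additional clauses about $e_{k+1}$ as specifying the status of the second extra exit.

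The main obstacle will be cleanly treating two subtleties. First, I would need to rule out or handle the possibility that the cluster-path forms a cycle: this would correspond to a ``closed loop'' path cluster with no long edge to any interesting cluster, which one can rule out by observing that such a component is disconnected from every interesting cluster in $G_{\mathrm{cl}}$ and therefore contradicts the embedding assumption if any image vertex of $G$ lies in it, or it can be folded into the case analysis by a bespoke labelling. Second, I would need to justify the slightly loose use of $\cup$ in the statement: namely, $e_1$ and $e_{k+1}$ may have one endpoint outside $P$ (in an interesting cluster or at a pendant leaf), so $e_1, e_{k+1} \not\subseteq P$ strictly as edges of $H^q$, but the relevant identity is that $P$ equals the portion of $H^q$ formed by the internal edges plus the portions of $e_1$ and $e_{k+1}$ lying in $P$. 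Once this bookkeeping is settled, the case analysis follows directly from the degree constraints in $G_{\mathrm{cl}}$.
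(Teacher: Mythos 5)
Your proposal is correct and essentially the paper's own argument: the paper contracts the short edges of $H^q$ so that boring clusters become vertices of degree at most $2$ and interesting clusters vertices of degree at least $3$, deletes the latter, and uses connectivity of $H^q$ to conclude each remaining component is a path attached to a degree-$\geq 3$ vertex through one or both endpoints --- exactly your cluster multigraph $G_{\mathrm{cl}}$ argument with the same end-classification. The cycle subtlety you flag can only occur when $H^q$ contains no interesting cluster at all (a degree-$2$ cycle component has no edge leaving it, so by connectivity it would be all of $H^q$), an edge case the paper's proof also implicitly excludes and handles later in the proof of Theorem~\ref{theorem:fpt_H} by grafting a $K_k$ gadget to guarantee an interesting cluster exists.
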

\begin{proof}
Let $H'$ be the graph which results from contracting all short edges of $H^q$. In $H'$, each $C \in \mathcal{C}^{<3}$ is expressed as a vertex of degree 1 or 2, and each $C' \in \mathcal{C}^{\geq3}$ as a vertex of degree 3 or more. If all vertices of degree 3 are removed, the remaining components must be paths. Since $H^q$ is a connected graph, each path component was connected to some vertex of degree 3 through one or both of the endpoints of the path.
\end{proof}

To find an embedding, it will be necessary to partition the vertices of $G$ into those which must be embedded near an interesting cluster, and those which do not. The following definition defines which vertices these will be. The theorem and lemma following the definition show that finding these vertices is a tractable problem.

\begin{definition} \label{def:important_vertex}
Let $v\in V(G)$ and $\alpha\geq 1$.
We say that $v$ is \emph{$\alpha$-interesting} if the metric space $(\ball_G(v, \alpha), d_G)$ does not admit a $c_\OPT$-embedding into the line.
\end{definition}

\begin{theorem}[Fellows \etal~\cite{FellowsFLLSR09}]\label{thm:weighted_line}
There exists an algorithm which given a weighted graph $\Gamma$, with weights in $\{1,\ldots,W\}$, and some $c\geq 1$, decides whether $\Gamma$ admits a $c$-embedding into the line in time $\cO(n (cW)^4 (2c+1)^{2cW})$.
\end{theorem}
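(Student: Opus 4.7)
The plan is a dynamic programming algorithm that processes the vertices of $\Gamma$ in the left-to-right order induced by a putative $c$-embedding, maintains a compact frontier of the last $cW$ line positions, and verifies edge constraints on the fly.

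First, I would reduce to a combinatorial search over orderings. The line is a subdivision of $K_2$, so an analogue of Lemma~\ref{lemma:H_proper} guarantees that if any $c$-embedding exists, then so does a pushing, non-contracting one. Such an embedding is completely determined by the vertex ordering $v_1,\ldots,v_n$ via $f(v_1)=0$ and $f(v_{i+1}) = f(v_i) + d_\Gamma(v_i,v_{i+1})$, and its only remaining validity condition is that for every edge $uv \in E(\Gamma)$ with $\pi(u) < \pi(v)$,
\[
\sum_{i=\pi(u)}^{\pi(v)-1} d_\Gamma(v_i, v_{i+1}) \le c \cdot w(uv).
\]
Since each consecutive gap is at least $1$, this in particular forces $\pi(v) - \pi(u) \le c \cdot w(uv) \le cW$ for every edge, which is a bandwidth-type constraint that will drive the whole DP.

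I would then build the ordering one vertex at a time. A vertex that has been overtaken by more than $cW$ units of line distance cannot participate in any remaining edge constraint, so the DP only needs to remember a frontier covering the last $cW$ line positions. I would define the state as a labelling of the $2cW$ unit intervals in this frontier, where each interval carries one of at most $2c+1$ labels describing its occupancy and the outstanding edge-demands of the vertex placed there. Lemma~\ref{lemma:local_density} applied to the line image bounds the number of vertices per unit interval by $2c+1$, so the number of distinct frontier configurations is at most $(2c+1)^{2cW}$. A transition picks which vertex of $\Gamma$ to append, computes its forced line position from the pushing rule, and checks both the non-contraction inequality and the edge-sum inequality against every frontier vertex; with precomputed pairwise shortest-path distances in $\Gamma$, each transition costs $O((cW)^4)$.

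Summing over $n$ insertions, $(2c+1)^{2cW}$ states, and $O((cW)^4)$ work per transition yields the desired $O(n(cW)^4(2c+1)^{2cW})$ total time. The main obstacle is the state definition itself: it must be coarse enough to hit the $(2c+1)^{2cW}$ bound, yet fine enough that any future edge violation is detectable from the state alone. The resolution is to record not the identities of frontier vertices but their edge-incidence ``types'', with identities recovered by DP traceback only at the very end when a witnessing embedding is reported. The weighted setting is precisely what inflates the window from $O(c)$ (as in the unweighted case) to $O(cW)$, explaining the appearance of $W$ in the exponent of the final running time.
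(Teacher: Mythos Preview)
This theorem is not proved in the paper: it is stated as a black-box citation of Fellows \etal~\cite{FellowsFLLSR09}, with no accompanying argument. There is therefore nothing in the present paper to compare your proposal against.

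That said, your sketch is a reasonable reconstruction of the Fellows \etal\ strategy in spirit --- reduce to pushing non-contracting embeddings, observe the $cW$ bandwidth bound, and run a frontier DP whose state encodes the last $O(cW)$ positions --- but several steps are too loose to stand as a proof. The appeal to Lemma~\ref{lemma:local_density} is misplaced (that lemma concerns embeddings into $H$-subdivisions, not the line metric you need here, though an analogous density bound does hold). More seriously, the claim that each frontier slot carries ``one of at most $2c+1$ labels'' and that this suffices to detect all future edge violations is asserted rather than argued: you need to specify exactly what information the label encodes and prove that two frontiers with identical label sequences are interchangeable with respect to every possible extension. Without that, the $(2c+1)^{2cW}$ state bound and the correctness of the DP are both unjustified. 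The $(cW)^4$ transition cost is likewise not derived from anything concrete in your description.
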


\begin{lemma}[Importance is tractable] \label{lemma:importance_tractable}
There exists an algorithm which given $v\in V(G)$ and $\alpha\geq 0$, decides whether $v$ is $\alpha$-interesting, in time $\cO(n(c_\OPT 2 \alpha)^4 (2c_\OPT+1)^{4c_\OPT \alpha})$. 
\end{lemma}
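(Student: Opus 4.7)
The plan is to reduce the question directly to Theorem \ref{thm:weighted_line} by representing $(\ball_G(v,\alpha), d_G)$ as the shortest-path metric of an explicit bounded-weight graph, and then invoking the Fellows et al.\ algorithm. The main work is to handle the representation correctly, since the metric we care about is $d_G$ restricted to $B := \ball_G(v,\alpha)$, not the distance function of the induced subgraph $G[B]$ (which can differ, if shortest paths in $G$ leave $B$).

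First I would compute $B$ by a single BFS from $v$ in $G$, in time $\cO(n+m)$. Next I would compute the pairwise distances $d_G(u,w)$ for all $u,w \in B$, e.g.\ by running a BFS from each vertex of $B$; since $|B| \leq n$ this takes $n^{\cO(1)}$ time. By the triangle inequality, for every $u,w \in B$ we have $d_G(u,w) \leq d_G(u,v) + d_G(v,w) \leq 2\alpha$, so all computed distances are integers in $\{1,\ldots,2\alpha\}$. I would then build the weighted complete graph $\Gamma$ on vertex set $B$ with edge weights $w(\{u,w\}) = d_G(u,w)$. Because every single edge in $\Gamma$ already realises the corresponding $d_G$-distance and the $d_G$-distances themselves satisfy the triangle inequality, the shortest-path metric of $(\Gamma,w)$ equals $(B,d_G)$ exactly. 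In particular, $(B,d_G)$ admits a $c_\OPT$-embedding into the line if and only if the weighted graph $(\Gamma,w)$ does.

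Finally I would run the algorithm of Theorem \ref{thm:weighted_line} on $(\Gamma,w)$ with distortion bound $c_\OPT$ and maximum edge weight $W = 2\alpha$, and declare $v$ to be $\alpha$-interesting precisely when the algorithm reports that no $c_\OPT$-embedding exists, as required by Definition \ref{def:important_vertex}. Since $|V(\Gamma)| = |B| \leq n$, the running time is
\[
\cO\bigl(|B|\,(c_\OPT \cdot 2\alpha)^4 (2c_\OPT+1)^{2c_\OPT \cdot 2\alpha}\bigr) \leq \cO\bigl(n\,(2c_\OPT\alpha)^4 (2c_\OPT+1)^{4c_\OPT\alpha}\bigr),
\]
which absorbs the polynomial preprocessing and matches the bound claimed in the lemma. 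The only subtle point — and the single place where one might err — is the distinction between $d_G$ and the shortest-path distance inside $G[B]$; feeding $\Gamma$ rather than $G[B]$ to Theorem \ref{thm:weighted_line} is what ensures the embedding question is unchanged.
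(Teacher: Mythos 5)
Your proposal is correct and follows essentially the same route as the paper: build the complete weighted graph on $\ball_G(v,\alpha)$ with edge lengths $d_G(x,y)$, observe by the triangle inequality that the weights are at most $2\alpha$, and invoke Theorem \ref{thm:weighted_line} with $W=2\alpha$. Your explicit remark that one must use the $d_G$-distances rather than the metric of the induced subgraph $G[\ball_G(v,\alpha)]$ is a nice clarification but the construction itself matches the paper's proof.
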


\begin{proof}
Let $\Gamma$ be the complete weighted graph with $V(\Gamma)=\ball_G(v,\alpha)$, and such that for all $\{x,y\}\in {V(\Gamma) \choose 2}$, the length of $\{x,y\}$ is set to $d_G(x,y)$.
By the triangle inequality, it follows that the maximum edge length in $\Gamma$ is at most $2\alpha$.
Thus, by Theorem \ref{thm:weighted_line} we can decide whether $\Gamma$ admits a $c_\OPT$-embedding into the line in time $\cO(n(c_\OPT 2 \alpha)^4 (2c_\OPT+1)^{4c_\OPT \alpha})$, as required.
\end{proof}

Our algorithm will proceed by finding partial embeddings of $G$ into the interesting and path clusters. Later, these partial embeddings will be ``stitched'' together to form a complete embedding. To aid in the stitching process, we define a notion of compatibility between partial embeddings on quasi-subgraphs of $H$.

\begin{definition}
Let $H_1, H_2$ be quasi-subgraphs of $H$ such that there exists $\{a, b\} \in E(H_1) \cap E(H_2)$ so that $a$ is a leaf node in $H_1$, and $b$ is a leaf node in $H_2$. 
Let $f_1$ and $f_2$ be pushing, non-contracting $c_\OPT$-embeddings of subgraphs $G_1, G_2$ of $G$ into $(H^q_1, w_1)$ and $(H^q_2, w_2)$. We say \emph{$f_1$ and $f_2$ are compatible on $\{a, b\}$} if 
\begin{enumerate}
\setlength{\itemsep}{-2pt}
\item For all $v \in V(G_1) \cap V(G_2)$, $f_1(v) \in \sub_{H_1}(\{a, b\})$ and $f_2(v) \in \sub_{H_2}(\{a, b\})$
\item For all $u, v \in V(G_1) \cap V(G_2)$, we have $u, v$ are consecutive w.r.t.~$\{a, b\}$ if and only if $u, v$ are consecutive w.r.t.~$\{a, b\}$.
\item There exists $u' \in V(G_1) \cap V(G_2)$ such that $f_1(u') = a$.
\item There exists $v' \in V(G_1) \cap V(G_2)$ such that $f_1(v') = b$.
\end{enumerate}
If $f_1$ and $f_2$ are compatible on $\{a, b\}$, then we can combine $f_1, f_2$ in the following way: 
\begin{enumerate}
\setlength{\itemsep}{-2pt}
\item For every $u \in V(G_1) \cap V(G_2)$, let $f_1(u) = f_2(u)$.
\item For any $u, v \in V(G_1) \cap V(G_2)$ consecutive w.r.t.~$\{a, b\}$, replace the shortest path in $G_1$ from $f_1(u)$ to $f_1(v)$ and the shortest path in $G_2$ from $f_2(u)$ to $f_2(v)$ with a single edge of weight $d_G(u, v)$. All other edges have their weight from $w_1$ or $w_2$.
\end{enumerate}
\end{definition}

The a parameter $\Delta$ will appear in several places within the algorithm. We set the value of $\Delta$ now.

\begin{definition} \label{def:Delta}
Let
\[
\Delta = \diam(H^q) + 8 \cdot (c_\OPT)^4.
\]
\end{definition}

Our algorithm will make use of two sub-procedures, CLUSTER and PATH.

\subsection{CLUSTER algorithm}
The CLUSTER algorithm will find embeddings restricted to the interesting clusters of $H^q$. Let $S \subseteq V(G)$, $C$ a subgraph of $H^q$.

\begin{definition}
Let $e_1, \ldots, e_{|E(C)|}$ be some fixed ordering of $E(C)$, and for each $i \in \{1, \ldots, |E(C)|\}$, let $e_i = \{h_{i,1}, h_{i,2}\}$.
\end{definition}

\begin{definition}
We say $f_C, (C', w')$ is a \emph{solution of $\cluster(S, C)$} if $C'$ is a subdivision of $C$, $w' : E(C) \rightarrow \mathbb{R}^{>0}$, and $f_C : S \rightarrow (C', w')$ such that:
\begin{enumerate}
\setlength{\itemsep}{-2pt}
\item For all $u, v \in S$,
\[
d_G(u, v) \leq d_{(C', w')}(f_C(u), f_C(v)) \leq c_\OPT \cdot d_G(u, v).
\]
\item For all $u, v \in S$, if $u$ and $v$ are consecutive then
\[
d_{(C', w')}(f_C(u), f_C(v)) = d_G(u, v).
\]
\end{enumerate}
\end{definition}

\begin{definition}
A \emph{configuration of $S, C$} consists of the following:
\begin{enumerate}
\setlength{\itemsep}{-2pt}
\item A partition $E_1, \ldots, E_{|E(C)|}$ of $S$.
\item An ordering $O_i = o_{i,1}, \ldots, o_{i,|E_i|}$ of each $E_i$. Let 
\[
|O_i| = \sum_{j = 1}^{|E_i|-1} d_G(o_{i,j}, o_{i,j+1}).
\]
\item Let 
\[
\chi(o_{i,1}) = h_{i,1},
\]
\[
\chi(o_{i,|E_i|}) = h_{i,|E_i|},
\]
and
\[
\Omega = \cup_{i=1}^{|E(C)|} \{o_{i,1}, o_{i,|E_i|}\}.
\]
For each $x, y \in \Omega$, the configuration has a simple path $P_{x, y}$ in $C$ from $\chi(x)$ to $\chi(y)$.
\end{enumerate}
\end{definition}

The following algorithm will be used to generate solutions to $\cluster(S, C)$:
\begin{description}
\item{\textbf{Step 1.}} For each choice of configuration of $S, C$:
\begin{description}
\item {\textbf{Step 1.1.}}Minimize $\sum_{i=1}^{|E(C)|} \alpha_i + \beta_i$ subject to the following constraints:
\begin{itemize}
\item For all $i \in \{1, \ldots, |E(C)|$, 
\[
\alpha_i \geq 0
\]
and 
\[
\beta_i \geq 0.
\]
For all $z \in \Omega$, if $z = o_{i,1}$ for some $i \in \{1 \ldots, |E(C)|\}$, then let
\[
\omega(z) = \alpha_i,
\]
and if $z = o_{i,|E_i|}$ for some $i$, then let
\[
\omega(z) = \beta_i.
\]
\item For all $a, b \in V(H)$, for each path $P$ from $a$ to $b$, let
\[
|P| = \sum_{e_i \in E(P)} (\alpha_i + \beta_i + |O_i|).
\]
For all $x, y \in \Omega$,
\[
\ell_{x,y} = \omega(x) + \omega(y) + |P_{x,y}| \geq d_G(x, y)
\]
and for all other simple paths $P$ from $\chi(x)$ to $\chi(y)$,
\[
\omega(x) + \omega(y) + |P| \geq \ell_{x,y}.
\]
\end{itemize}
\item{\textbf{Step 1.2.}} Define the subdivision $C'$. For each edge $e_i \in E(C)$, if 
\[
\alpha_i \neq 0 \mbox{ and } \beta_i \neq 0
\]
then subdivide $e_i$ $|E_i|$ many times.
Otherwise, if 
\[
\alpha_i + beta_i \neq 0
\]
then subdivide $e_i$ $|E_i| - 1$ many times.
Otherwise, subdivide $e_i$ $|E_i| - 2$ many times.
\item{\textbf{Step 1.3.}} Define an embedding $f_C$. 
For all $i \in \{1, \ldots, |E(C)|\}$, let $e_i = \{a, b\}$, $a < b$. If $\alpha_i = 0$ then let 
\[
f_C(o_{i,1}) = a,
\]
otherwise let $f_C(o_{i,1})$ be the first vertex in the subdivision of $\{a, b\}$. 
If $\beta_i = 0$ then let 
\[
f_C(o_{i,|E_i|}) = b,
\]
otherwise let $f_C(o_{i,1})$ be the last vertex in the subdivision of $\{a, b\}$.
For each $j \in \{2, \ldots, |E_i|-1\}$, let $f_C(o_{i,j})$ be the fist vertex on the path in the subdivision of $e_i$ from $f_C(o_{i,j-1})$ and $b$.
\item{\textbf{Step 1.4.}} Define the weight function $w'$. For all $e_i = \{a, b\} \in E(C)$, $a < b$, if $\alpha_i \neq 0$ let 
\[
w'(\{a, f_C(o_{i,1})\}) = \alpha_i,
\]
and if $\beta_i \neq 0$ let
\[
w'(\{a_k, f_C(o_{i,|E_i|})\}) = \beta_i,
\]
and for all $j \in \{1, \ldots, |E_i|-1\}$, let
\[
w'(\{f_C(o_{i,j}), f_C(o_{i,j+1})\}) = d_G(o_{i,j}, o_{i,j+1}).
\]
\end{description}
\end{description}

\begin{lemma} \label{lemma:cluster_alg_bound}
The above algorithm finds $\cO(|E(C)|^{|S|} \cdot |S|! \cdot (|V(C)| - 2)!)$ solutions to $\cluster(S, C)$.
\end{lemma}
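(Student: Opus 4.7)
The plan is to bound the number of solutions produced by the CLUSTER algorithm by counting the total number of configurations that Step 1 iterates over, since the optimization in Step 1.1 and the deterministic assignments in Steps 1.2--1.4 generate at most one solution $f_C, (C', w')$ per configuration (any infeasible LP contributes zero solutions).

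First, I would count the partitions: since a configuration specifies a partition $E_1, \ldots, E_{|E(C)|}$ of $S$ into $|E(C)|$ labeled parts, and each element of $S$ is independently assigned to one of the $|E(C)|$ parts, there are at most $|E(C)|^{|S|}$ such partitions. Next, for a fixed partition, the number of orderings $O_i$ of the parts is $\prod_{i=1}^{|E(C)|} |E_i|!$, and I would use the inequality $\prod_i |E_i|! \leq \left(\sum_i |E_i|\right)! = |S|!$ to bound this by $|S|!$.

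The third ingredient is the selection of the simple paths $P_{x,y}$ in $C$ from $\chi(x)$ to $\chi(y)$ for each ordered pair $x, y \in \Omega$. For any two fixed vertices of $C$, a simple path between them is determined by an ordered sequence of distinct internal vertices drawn from $V(C) \setminus \{\chi(x), \chi(y)\}$; there are at most $\sum_{k=0}^{|V(C)|-2} \frac{(|V(C)|-2)!}{k!} \leq e \cdot (|V(C)|-2)!$ such sequences, giving at most $\cO((|V(C)|-2)!)$ simple paths between any fixed pair. Since $|\Omega| \leq 2|E(C)|$ depends only on $H$, the number of pairs $(x,y)$ is bounded by a constant depending only on $H$, and raising $(|V(C)|-2)!$ to this constant power is absorbed into the big-$\cO$ notation together with $(|V(C)|-2)!$ itself, since $|V(C)|$ is also bounded by $H$.

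Multiplying the three counts yields at most $\cO(|E(C)|^{|S|} \cdot |S|! \cdot (|V(C)|-2)!)$ configurations, hence the same bound on the number of solutions output. The main obstacle I foresee is the bookkeeping for the path-count factor: one must argue that the exponent $|\Omega|^2$ on $(|V(C)|-2)!$ does collapse into the big-$\cO$, which is fine here because the lemma treats $|V(C)|$ and $|E(C)|$ as constants controlled by $H$ in the overall FPT analysis; otherwise, a tighter, more honest accounting (e.g., using that the path choices can be reduced to picking at most one path per pair of \emph{endpoint-labels} rather than per pair of $\Omega$-elements) would be required.
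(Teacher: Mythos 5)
Your proposal is correct and follows essentially the same route as the paper's proof: one solution per configuration, at most $|E(C)|^{|S|}$ partitions, at most $|S|!$ orderings (via $\prod_i |E_i|! \leq |S|!$), and a factorial bound on the number of simple paths. In fact you are slightly more careful than the paper, which counts the simple-path factor $(|V(C)|-2)!$ only once even though a configuration chooses a path $P_{x,y}$ for every pair $x,y\in\Omega$; your observation that the resulting exponent $|\Omega|^2$ collapses because $|V(C)|$ and $|E(C)|$ are bounded in terms of $H$ (which is all the FPT analysis in Lemma \ref{lemma:feasible_limit} needs) patches exactly the point the paper's proof glosses over.
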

\begin{proof}
The algorithm finds one solution for each configuration of $S, C$.
There are no more than
\[
|E(C)|^{|S|}
\]
possible partitions $E_1, \ldots, E_{|E(C)|}$.
Given $E_1, \ldots, E_{|E(C)|}$, there are no more than
\[
|S|!
\]
possible orderings $O_1, \ldots, O_{|E(C)|}$.
For any $x, y \in V(C)$, there are no more than
\[
(|V(C)| - 2)! 
\]
simple paths between $x$ and $y$.
Therefore, there are no more than
\[
|E(C)|^{|S|} \cdot |S|! \cdot (|V(C)| - 2)! 
\]
configurations of $S, C$.
\end{proof}

\subsection{PATH algorithm}

The PATH algorithm will find embeddings restricted to the path clusters of $H^q$.

Let $P$ be a path cluster of $H^q$ such that
\[
P = e_1, C_1, e_2, C_2, \ldots, e_j, C_j
\]
or 
\[
P = e_1, C_1, e_2, C_2, \ldots, e_j, C_j, e_{j+1}.
\]
Let $X \subset V(G)$, $S=s_1,\ldots,s_{4(c_\OPT)^2 + 1}$, $T=t_1,\ldots,t_{4(c_\OPT)^2 + 1}$ or $T = \emptyset$, be sequences of vertices such that $V(S) \cup V(T) \subseteq V(G) \setminus X$ and $V(S) \cap V(T) = \emptyset$.
Let 
\[
Z = X \cup V(S) \cup V(T)
\]
and
\[
Z^\ell = \{v \in X \cup V(S) \cup V(T) : |\ball_G(v, (4(c_\OPT)^2 + 1) \cdot c_\OPT))| \leq (4(c_\OPT)^2 + 1) \cdot (c_\OPT)^2\}.
\]

Here we adapt the idea of feasible partial embeddings from \cite{FellowsFLLSR09} to our needs.  

\begin{definition}
A \emph{partial embedding of $A \subseteq Z^\ell$} is a bijective function 
\[
f : A \rightarrow \{0, \ldots, 4(c_\OPT)^2 + 1\}.
\]
Let
\begin{enumerate}
\setlength{\itemsep}{-1pt}
\item $f_e$ be the embedding of $A_f$ into $(e', w')$ derived in the following way:
\begin{enumerate}
\setlength{\itemsep}{-1pt}
\item Let $e=\{a, b\} \in E(H^q)$, $a < b$.
\item Let $e'$ be the subdivision of $e$ with $4(c_\OPT)^2+1$ vertices. Let $v_1, v_2, \ldots, v_{4(c_\OPT)^2+1}$ be the sequence of vertices encountered when traversing $e'$ from $a$ to $b$.
\item For all $a \in A_f$, let $f_e(a) = v_{f(a)}$.
\item For all $i \in \{1, \ldots, 4(c_\OPT)^2+1\}$, let $w'(\{v_i, v_{i+1}\}) = d_G(f^{-1}(i), f^{-1}(i+1))$.
\end{enumerate}
\item $A^L_f = f^{-1}(\{0, \ldots, 2(c_\OPT)^2\})$.
\item $A^R_L = f^{-1}(\{2(c_\OPT)^2 + 1, \ldots, 4(c_\OPT)^2+1\})$.
\item $L(A)$ is the union of the vertex sets of all connected components $C$ of $Z^\ell \setminus A$ such that $C$ has a neighbor in $A^L_f$, and the union of the vertex sets of all connected components $C'$ of $Z^\ell \setminus A$ such that $\ball_G(C', c_\OPT) \cap C \neq \emptyset$.
\item $R(A)$ is the union of the vertex sets of all connected components $C$ of $Z^\ell \setminus A$ such that $C$ has a neighbor in $A^R_f$, and the union of the vertex sets of all connected components $C'$ of $Z^\ell \setminus A$ such that $\ball_G(C', c_\OPT) \cap C \neq \emptyset$.
\end{enumerate}
\end{definition}

\begin{definition}
A partial embedding $f$ of $A \subseteq Z^\ell$ is called \emph{feasible} if  
\begin{enumerate}
\setlength{\itemsep}{-2pt}
\item $f_e$ is a proper, pushing, non-contracting $c_\OPT$-embedding of $(A_f, d_G)$ into $(e', w')$.
\item $L(f) \cap R(f) = \emptyset$.
\item $\ball_G(f^{-1}(2(c_\OPT)^2), c_\OPT)$ is in $A$.
\item For all $i \in \{0, \ldots, 4(c_\OPT)^2\}$,
\[
d_G(f^{-1}(i), f^{-1}(i+1)) \leq c_\OPT.
\]
\end{enumerate}
\end{definition}

\begin{lemma} \label{lemma:feasible_vertex_bound}
The number of feasible partial embeddings of $Z^\ell$ is $n \cdot (c_\OPT)^{\cO(c_\OPT)}$.
\end{lemma}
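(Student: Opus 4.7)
The plan is to enumerate feasible partial embeddings by revealing their preimages one position at a time, using the fact that feasibility condition 4 forces consecutive preimages to be close in $G$, and the definition of $Z^\ell$ forces such local balls to be small.

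First, I would fix the preimage $f^{-1}(0)$. Since $f^{-1}(0) \in Z^\ell \subseteq V(G)$, there are at most $|V(G)| = n$ choices.

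Next, I would extend the embedding by choosing $f^{-1}(1), f^{-1}(2), \ldots, f^{-1}(4(c_\OPT)^2+1)$ in order. At each step $i \geq 1$, feasibility condition 4 requires
\[
d_G\!\left(f^{-1}(i-1),\, f^{-1}(i)\right) \le c_\OPT,
\]
so $f^{-1}(i) \in \ball_G\!\left(f^{-1}(i-1),\, c_\OPT\right)$. Because $f^{-1}(i-1)$ lies in $Z^\ell$, the defining property of $Z^\ell$ gives
\[
\left|\ball_G\!\left(f^{-1}(i-1),\, c_\OPT\right)\right| \;\le\; \left|\ball_G\!\left(f^{-1}(i-1),\, (4(c_\OPT)^2+1)\cdot c_\OPT\right)\right| \;\le\; (4(c_\OPT)^2+1)\cdot (c_\OPT)^2,
\]
so there are at most $\cO((c_\OPT)^4)$ candidates for $f^{-1}(i)$ (and one can further restrict to those not already used and not violating the injectivity of $f$). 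Multiplying over the $4(c_\OPT)^2+1$ extension steps yields a total of at most $n \cdot \bigl(\cO((c_\OPT)^4)\bigr)^{4(c_\OPT)^2+1} = n \cdot (c_\OPT)^{\cO(c_\OPT)}$ feasible partial embeddings, as claimed.

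The only subtlety, and the reason the argument is stated in terms of $Z^\ell$ rather than all of $Z$, is that the generic Local Density Lemma~\ref{lemma:local_density} gives a ball bound depending on $|E(H)|$, which would spoil the desired $H$-independent dependence on $c_\OPT$. The restriction to $Z^\ell$ sidesteps this: by construction, every vertex we place has the $H$-free polynomial ball bound baked in, so the main work of the proof is simply to invoke feasibility condition 4 together with this definition at each of the $4(c_\OPT)^2+1$ extension steps. No deeper argument appears to be needed.
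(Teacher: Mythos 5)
Your proof is correct and follows essentially the same route as the paper: the paper likewise fixes the starting vertex ($n$ choices), chains feasibility condition 4 to confine all preimages to a ball whose size the definition of $Z^\ell$ bounds by $(4(c_\OPT)^2+1)\cdot(c_\OPT)^2$, and then counts the possible sequences, so your position-by-position extension is just a slightly different bookkeeping of the same count. One remark: both your final step and the paper's actually yield $n\cdot(c_\OPT)^{\cO((c_\OPT)^2)}$ rather than the stated $n\cdot(c_\OPT)^{\cO(c_\OPT)}$, but this looseness is inherited from the paper's own proof and is harmless for how the lemma is used later.
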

\begin{proof}
For every feasible partial embedding starting with $v_0$, there exists a sequence $v_0, v_1, \ldots, v_{4(c_\OPT)^2 + 1}$ such that for all $i \in \{0, \ldots, v_{4(c_\OPT)^2}\}$ we have
\[
d_G(v_i, v_{i+1}) \leq c_\OPT,
\]
and therefore for all $i \in \{1, \ldots, v_{4(c_\OPT)^2+1}\}$ we have
\[
d_G(v_0, v_i) \leq (4(c_\OPT)^2 + 1) \cdot c_\OPT.
\]
Since for all $v \in Z^\ell$, we have that 
\[
|\ball_G(v, (4(c_\OPT)^2 + 1) \cdot c_\OPT))| \leq (4(c_\OPT)^2 + 1) \cdot (c_\OPT)^2,
\]
and so there are at most $(4(c_\OPT)^2 + 1) \cdot (c_\OPT)^2$ vertices which can be in any partial embedding starting with $v_0$.
Therefore, the number of possible such sequences is 
\[
{(4(c_\OPT)^2 + 1) \cdot (c_\OPT)^2 \choose 4(c_\OPT)^2} \leq (c_\OPT)^{\cO(c_\OPT)}
\]
for each $v_0 \in Z^\ell$.
\end{proof}

\begin{definition}
Let $f$ and $g$ be feasible partial embeddings of $Z^\ell$, with domains $A_f$ and $A_g$. We say $g$ succeeds $f$ if 
\begin{enumerate}
\setlength{\itemsep}{-2pt}
\item $A_f \setminus \{f^{-1}(0)\} = A_g \setminus \{g^{-1}(4(c_\OPT)^2 + 1)\} = A_f \cap A_g$.
\item For every $a \in A_f \cap a_g$, $f(a) = g(a) + 1$.
\item $\{g^{-1}(4(c_\OPT)^2 + 1)\} \subseteq R(f)$.
\item $\{f^{-1}(0)\} \subseteq L(g)$
\end{enumerate}
\end{definition}

\begin{definition}
A feasible partial embedding of $W \subseteq Z$ is a 3-tuple $F=(f, r, R)$ such that 
\begin{enumerate}
\setlength{\itemsep}{-2pt}
\item $f$ is a feasible partial embedding of $Z^\ell$
\item $r \in \{0, 1, \ldots, j\}$.
\item If $r=j$ then $R = \emptyset$.
\item If $r < j$ and $e_{j+1} \in P$ then $R$ is a solution to
\[
\cluster(\ball_{Z}(A_f, \Delta) \cap (A_f \cup R(f)), e_{r+1} \cup C_{r+1} \cup e_{r+2})
\]
such that $R$ and $f$ are compatible w.r.t.~$e_{r+1}$.
\item If $r < j$ and $e_{j+1} \notin P$ then $R$ is a solution to
\[
\cluster(\ball_{Z}(A_f, \Delta) \cap (A_f \cup R(f)), e_{r+1} \cup C_{r+1})
\]
such that $R$ and $f$ are compatible w.r.t.~$e_{r+1}$.
\end{enumerate}
\end{definition}

\begin{lemma} \label{lemma:feasible_limit}
There are at most $n \cdot (c_\OPT)^{\cO(c_\OPT)} \cdot 2|E(H)| \cdot |E(H)|^{\cO(|E(H)|^2)}$ feasible partial embeddings of $Z$.
\end{lemma}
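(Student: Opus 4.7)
The proof will bound the number of possible tuples $(f, r, R)$ by bounding each coordinate separately and multiplying. For $f$, a feasible partial embedding of $Z^\ell$, Lemma~\ref{lemma:feasible_vertex_bound} directly yields at most $n \cdot (c_\OPT)^{\cO(c_\OPT)}$ choices. For $r$, we have $r \in \{0, 1, \ldots, j\}$, and since $P$ is a path cluster of $H^q$ whose long-edge count $j$ is at most $|E(H)|$, there are at most $|E(H)| + 1 \leq 2|E(H)|$ choices.

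The nontrivial coordinate is $R$. When $r = j$, only $R = \emptyset$ is allowed. Otherwise $R$ is a solution to an instance $\cluster(S, C)$ where $S = \ball_Z(A_f, \Delta) \cap (A_f \cup R(f))$ and $C$ is either $e_{r+1} \cup C_{r+1}$ or $e_{r+1} \cup C_{r+1} \cup e_{r+2}$. Since $C$ is a subgraph of $H$ augmented by at most two extra long edges, we get $|E(C)| \leq |E(H)| + 2$ and $|V(C)| \leq |E(H)| + 3$, so Lemma~\ref{lemma:cluster_alg_bound} bounds the number of solutions by $|E(C)|^{|S|} \cdot |S|! \cdot (|V(C)|-2)!$. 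Once $|S|$ is bounded appropriately (see below), this collapses to $|E(H)|^{\cO(|E(H)|^2)}$, and multiplying the three bounds yields exactly $n \cdot (c_\OPT)^{\cO(c_\OPT)} \cdot 2|E(H)| \cdot |E(H)|^{\cO(|E(H)|^2)}$.

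To bound $|S|$, I exploit that the very existence of a cluster solution $R$ forces $S$ to admit a non-contracting $c_\OPT$-embedding into a subdivision of $C$. Every short edge of $C_{r+1}$ has weight at most $16(c_\OPT)^4$, and since a non-contracting image places any two $G$-vertices on the same edge at distance at least $1$, each short edge absorbs at most $16(c_\OPT)^4 + 1$ vertices of $S$; with at most $|E(H)|$ short edges in $C_{r+1}$, the cluster contributes $\cO(|E(H)|(c_\OPT)^4)$ to $|S|$. For vertices of $S$ embedded on the long edges $e_{r+1}$ or $e_{r+2}$, I would use the restriction $S \subseteq A_f \cup R(f) \subseteq Z^\ell$ together with the $Z^\ell$ local-density condition (which bounds balls of radius $\cO((c_\OPT)^3)$ around each point of $Z^\ell$ by $\cO((c_\OPT)^4)$), combined with the feasibility constraint $L(f) \cap R(f) = \emptyset$ that prevents $R(f)$ from extending arbitrarily far back past $A_f$. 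These together give $|S| = \cO(|E(H)|(c_\OPT)^{\cO(1)})$, with the $c_\OPT$-factors absorbed into the $(c_\OPT)^{\cO(c_\OPT)}$ term of the final bound.

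The main obstacle is this last step on $|S|$: because $\Delta \geq \diam(H^q)$, the ball $\ball_Z(A_f, \Delta)$ can in principle cover most of $Z$, so the bound cannot come from the ball alone. The argument must instead be driven by the structural restriction $S \subseteq A_f \cup R(f) \subseteq Z^\ell$ and by the fact that a cluster embedding of $S$ into the bounded-length graph $C$ must exist; making this rigorous will require tracking carefully which connected components of $Z^\ell \setminus A_f$ lie in $R(f)$ and how many $G$-vertices each can contribute along a long edge before they would be cut off from $A_f^R$ by $A_f$ itself.
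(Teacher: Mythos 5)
Your skeleton is the same as the paper's: count the tuples $(f,r,R)$ coordinatewise, using Lemma~\ref{lemma:feasible_vertex_bound} for the $n\cdot (c_\OPT)^{\cO(c_\OPT)}$ factor, the bound $r\le 2|E(H)|$, and Lemma~\ref{lemma:cluster_alg_bound} for the number of possible $R$'s. The paper's own proof does exactly this and, notably, simply plugs a bound of roughly $2c_\OPT|E(H)|^2$ on $|S|$ into Lemma~\ref{lemma:cluster_alg_bound} without justifying it, so you have correctly identified the one place where real work is needed. However, your proposal does not actually close that step: you flag it yourself as the ``main obstacle,'' and the route you sketch (capacity of short edges plus tracking which components of $Z^\ell\setminus A_f$ lie in $R(f)$ along long edges) is left at the level of intent, with the long-edge case --- the hard one, since long edges have no upper bound on their weight and can in principle host arbitrarily many vertices --- unresolved. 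That is a genuine gap.

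Moreover, your reason for rejecting the simple way to close it rests on a misreading. In Definition~\ref{def:Delta}, $\diam(H^q)$ is the hop-count diameter of the graph $H^q$ (the paper's convention is that a graph stands for the metric space $(V,d)$ of its unweighted shortest-path distance), so $\diam(H^q)\le |V(H^q)|\le 2|V(H)|$ and $\Delta$ is bounded by a function of $H$ and $c_\OPT$ alone. Under the section's standing assumption that $f_\OPT$ exists, the Local Density Lemma (Lemma~\ref{lemma:local_density}) bounds every ball of radius $\rho$ in $G$ by $\cO(\rho\, c_\OPT\, |E(H)|)$ vertices, independently of $n$; since $|A_f|\le 4(c_\OPT)^2+2$ and consecutive vertices of $A_f$ are at $G$-distance at most $c_\OPT$, the set $S=\ball_Z(A_f,\Delta)\cap(A_f\cup R(f))$ lies in a single ball of radius $\Delta+\cO((c_\OPT)^3)$, giving $|S|\le \poly(c_\OPT,|V(H)|,|E(H)|)$. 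Feeding this into Lemma~\ref{lemma:cluster_alg_bound} yields the $|E(H)|^{\cO(|E(H)|^2)}$-type factor (with the $c_\OPT$ factors absorbed exactly as the paper does), so the ball can never ``cover most of $Z$'' in the sense you feared, and the delicate component-tracking you propose is unnecessary.
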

\begin{proof}
By Lemma \ref{lemma:feasible_vertex_bound}, we have that there are 
\[
n \cdot (c_\OPT)^{\cO(c_\OPT)}
\]
many feasible partial embeddings of $Z^\ell$.
Since $P \subseteq H^q$, we have that $k \leq 2|E(H)|$, and thus 
\[
r \leq 2|E(H)|.
\]
Each of $C_1, \ldots, C_k$ are subgraphs of $H^q$, and $Z$ is a subgraph of $G$. Therefore, by Lemma \ref{lemma:cluster_alg_bound}, $R$ is one of at most 
\[
\cO(2|E(H)|^{2c_\OPT \cdot |E(H)|^2} \cdot (2c_\OPT \cdot |E(H)|^2)! \cdot (2|V(H)| - 2)!) = |E(H)|^{\cO(|E(H)|^2)}.
\] 
solutions.

Therefore, there are at most 
\[
n \cdot (c_\OPT)^{\cO(c_\OPT)} \cdot 2|E(H)| \cdot |E(H)|^{\cO(|E(H)|^2)}
\]
feasible partial embeddings of $Z$.
\end{proof}

\begin{definition}
Let $F_1=(f_1, r_1, R_1)$, $F_2=(f_2, r_2, R_2)$ be two feasible partial embeddings of $Z$. We say $F_2$ succeeds $F_1$ if either of the following conditions are met:
\begin{enumerate}
\setlength{\itemsep}{-2pt}
\item $r_1 = r_2$ and $f_2$ succeeds $f_1$.
\item $r_2 = r_1+1$, $e_{r_2} \in P$, and $f_2$, $R_1$ are compatible on $e_{r_2}$.
\end{enumerate}
\end{definition}

\begin{definition} \label{def:path_cluster_embedding}
Let $F_1, \ldots, F_t$ be a sequence of feasible partial embeddings of $Z$ such that $L(F_1) = \emptyset$, $R(F_t) = \emptyset$, and for all $i \in \{2, \ldots, t\}$, we have that $F_i = (f_i, r_i, R_i)$, and $F_i$ succeeds $F_{i-1}$.
Let $f_P, (P', w')$ be the embedding of $Z$ derived from the sequence in the following way:
\begin{enumerate}
\setlength{\itemsep}{-2pt}
\item{\textbf{Step 1.}} While the $r$ values do not change, proceed through the sequence in order, while building $f_P, P', w'$ in the obvious way, so that $f_P$ is a pushing embedding.
\item{\textbf{Step 2.}} If a value $i$ is reached so that $r_i > r_{i-1}$, use $R_i$ to find the subdivision, embedding, and weights for $C_i$. Advance to edge $e_{r_i}$ where $R_i$ left off, and return to Step 1.
\end{enumerate}
\end{definition}

\begin{lemma} \label{lemma:feasible_partial_embeddings_fopt}
Let $P$ be a path cluster of $H^q$, let 
\[
I^P = \{v \in V(H^q) : v \mbox{ connects P to some interesting cluster} \}
\]
and let 
\[
Z_P = \{v \in V(G) : f_\OPT(v) \in \sub_{(H_\OPT, w_\OPT)}(P) \mbox{ and } d_{(H_\OPT, w_\OPT)}(v, I^P) \geq (4(c_\OPT)^2 + 1) \cdot c_\OPT\}
\]
For any path cluster of $H^q$, there is a sequence $F_1, f_2, \ldots, F_k$ of feasible partial embeddings of $Z_P$ such that $L(F_1) = \emptyset$, $R(F_k) = \emptyset$, and for all $i \in \{2, \ldots, k\}$, we have that $F_i$ succeeds $F_{i-1}$.
\end{lemma}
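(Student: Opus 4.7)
The plan is to construct the sequence of feasible partial embeddings by ``sliding a window'' along the path cluster $P$, using the optimal embedding $f_\OPT$ as a guide. First I would fix a linear orientation of $P$ (determined by its defining sequence $e_1, C_1, \ldots, e_j, C_j$), which induces a total order on the vertices of $Z_P$ via their $f_\OPT$-images: for $u, v \in Z_P$, $u$ precedes $v$ if $f_\OPT(u)$ appears earlier along $P$ (breaking ties using the given total order on $V(G)$). Since $f_\OPT$ is proper, pushing, and non-contracting with distortion $c_\OPT$, consecutive vertices in this ordering have distance at most $c_\OPT$ in $G$, which is the key quantitative property that lets us define valid windows.

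Next, for each index $i$ I would form $F_i = (f_i, r_i, R_i)$ as follows. Place a window of $4(c_\OPT)^2 + 2$ vertices from $Z^\ell$ whose $f_\OPT$-images lie on the current long edge $e_{r_i}$ (shifted by one from the previous $F_{i-1}$), and define $f_i$ by mapping the $k$-th window vertex to position $k-1$. Then set $R_i$ to be $f_\OPT$ restricted to $\ball_{Z_P}(A_{f_i}, \Delta) \cap (A_{f_i} \cup R(f_i))$, viewed as a $\cluster$ solution on $e_{r_i+1} \cup C_{r_i+1}\;(\cup\, e_{r_i+2})$; the weights for $R_i$ come from $(H_\OPT, w_\OPT)$, with any pair of vertices that become newly consecutive inside the restriction merged so that the edge weight equals $d_G$ between them (consistent because $f_\OPT$ is itself pushing and non-contracting). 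Each $f_i$ is proper, pushing, and non-contracting because $f_\OPT$ has these properties; the remaining conditions ($L(f_i) \cap R(f_i) = \emptyset$, the ball-in-$A$ condition, and the $c_\OPT$-step condition) follow from $f_\OPT$'s distortion bound and pushing property, combined with the fact that consecutive ordered vertices are within distance $c_\OPT$. When the window finishes one long edge and begins the next, I would increment $r_i$; the previous $R_{i-1}$ already described the embedding across $C_{r_{i-1}+1}$ and onto $e_{r_i}$, so the new $f_i$ can pick up where $R_{i-1}$ left off.

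The succession relation $F_i$-succeeds-$F_{i-1}$ is then immediate in both cases: if $r_i = r_{i-1}$, $f_i$ is literally the shift-by-one of $f_{i-1}$; if $r_i = r_{i-1}+1$, compatibility of $f_i$ with $R_{i-1}$ on $e_{r_i}$ holds because both are derived from the same $f_\OPT$ on the overlapping region around $e_{r_i}$, so the relevant vertices, their images, and their consecutivity relations match. The boundary conditions $L(F_1) = \emptyset$ and $R(F_k) = \emptyset$ follow from the definition of $Z_P$: the exclusion of vertices within distance $(4(c_\OPT)^2+1) \cdot c_\OPT$ of $I^P$ guarantees that the first and last windows, placed at the extremes of the ordering, have no connected components of $Z_P$ on their ``outside'' side that could populate $L(F_1)$ or $R(F_k)$.

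The main obstacle I expect is verifying that the ``left/right'' sets $L(f_i)$ and $R(f_i)$ remain disjoint as the window slides, and that the transition from one long edge to the next through a boring cluster is handled coherently. The delicate point is that $L$ and $R$ are defined not only via direct adjacency to $A_{f_i}^L$ and $A_{f_i}^R$ but also via a $c_\OPT$-radius fattening of components of $Z^\ell \setminus A_{f_i}$; ensuring these fattenings do not cross requires using that $f_\OPT$ is non-contracting so that any $G$-path between a ``left'' component and a ``right'' component has image in $(H_\OPT, w_\OPT)$ of length at least the window's span, and exploiting the $4(c_\OPT)^2$ length of the window to absorb the $c_\OPT$-fattening on each side with room to spare. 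A parallel argument, using that boring clusters have bounded diameter in $(H_\OPT, w_\OPT)$, ensures that placing $R_i$ as $f_\OPT$ restricted to the $\Delta$-ball of the window yields a valid $\cluster$ solution that is compatible with the next window on the subsequent long edge.
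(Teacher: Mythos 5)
Your proposal matches the paper's proof in essence: both slide a window of consecutive $f_\OPT$-images along each long edge of $P$ (the paper's sequences $Z_e$ and shifts $g_i^j$), derive feasibility and the succession relation directly from $f_\OPT$ being proper, pushing, and non-contracting, obtain the $R$-component by restricting $f_\OPT$ to the boring cluster and the start of the next long edge, and get the boundary conditions $L(F_1)=R(F_k)=\emptyset$ from the distance-to-$I^P$ cutoff in the definition of $Z_P$. The only deviations are cosmetic (e.g.\ the exact window length and that you attach an $R$-component to every window rather than only at edge transitions), so this is essentially the same argument.
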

\begin{proof}
Since $P$ is a path cluster of $H^q$, we have that $1 \leq |I^P|
\leq 2$.
Choose $s \in I^P$, and orient each long edge of $P$ away from $s$. If both ends of $P$ connect to $P$, forming a cycle, choose a clockwise or counter-clockwise direction in which to orient the long edges.

Let 
\[
Z^\ell = \{v \in Z : |\ball_G(v, (4(c_\OPT)^2 + 1) \cdot c_\OPT))| \leq (4(c_\OPT)^2 + 1) \cdot (c_\OPT)^2\}.
\]

For each long edge $e=\{e_1, e_2\}$ of $P$, let $Z_e$ be the sequence of vertices such that
\[
V(Z_e) = \{v \in Z : f_\OPT(v) \in e_\OPT \mbox{ and } d_{(H_\OPT, w_\OPT)}(f_\OPT(v), \{e_1, e_2\}) \geq (4(c_\OPT)^2+1) \cdot c_\OPT\}
\]
and $Z_e$ has the order imposed on $V(Z_E)$ by $f_\OPT$, traversing $e$ along the orientation.
For any $z \in V(Z_e)$, we have that 
\[
|\ball_G(v, (4(c_\OPT)^2 + 1) \cdot c_\OPT))| \leq (4(c_\OPT)^2 + 1) \cdot (c_\OPT)^2,
\]
since $f_\OPT$ is a $c_\OPT$-embedding, and so $f_\OPT$ must embed $\ball_G(v, (4(c_\OPT)^2 + 1) \cdot c_\OPT))$ within $e$.
Therefore, 
\[
V(Z_e) \subseteq Z^\ell.
\]

Let $Z_i$ be the contiguous subsequence of $Z_e$ starting from the $i$-th vertex of $Z_e$ such that $|Z_i| = 4(c_\OPT)^2 + 1$.
Let $g_i$ be a function such that for any $z_j \in V(Z_i)$,
\[
g_i(z_j) = j.
\]
Thus $g_i$ is a partial embedding of $V(Z_i)$.
Since $f_\OPT$ is a proper, pushing, non-contracting $c_\OPT$-embedding, $g_i$ is a feasible partial embedding of $V(Z_i)$, and for all $i \in \{2, \ldots, |V(Z_i)| - 4(c_\OPT)^2+1\}$, we have that $g_i$ succeeds $g_{i-1}$.

Let $k$ be the number of long edges contained in $P$, so that 
\[
P = e_1, C_1, \ldots, e_k, C_k
\]
or 
\[
P = e_1, C_1, \ldots, e_k, C_k
\]
and $e_1$ is connected to some interesting cluster of $H^q$.
For $j \in \{1, \ldots, k\}$, let $g_i^j$ be the $i$-th feasible partial embedding created as described above for the $j$-th long edge of $P$.

If $g_i^j$ is that last feasible partial embedding for $e_j$ and $j \neq k$, we can construct $R_i^j$ by copying the embedding of $f_\OPT$ restricted to $\sub_{(H_\OPT, w_\OPT)}(C_j)$ and the path of length $8(c_\OPT)^2+1$ on $\sub_{(H_\OPT, w_\OPT)}(e_{j+1})$ starting from $\sub_{(H_\OPT, w_\OPT)}(C_j)$.
If $g_i^j$ is that last feasible partial embedding for $e_j$, then take $R_i^j$ to be the embedding $f_\OPT$ restricted to the subpath of $\sub_{(H_\OPT, w_\OPT)}(e_j)$ from $A_{g_i^j}$ to $C_j$.

For each $g_i^j$, if $j \neq k$ then let
\[
R_i^j = (g_i^j, j, R_i^j)
\]
and if $j = k$ then let
\[
R_i^j = (g_i^j, j, \emptyset)
\]

By construction, each $R_i^j$ is a feasible partial embedding, and the sequence ordered by $j, i$ forms a sequence of succeeding feasible partial embeddings with the desired attributes.
\end{proof}

\begin{definition}
Let $D(Z^\ell)$ be the directed graph with feasible partial embeddings of $Z$ as vertices, and a directed edge between vertices which succeed one another. We call this graph the \emph{succession graph of $Z$}.
\end{definition}

We present here the $\path$ algorithm.
\begin{description}
\setlength{\itemsep}{-2pt}
\item{\textbf{Step 1.}} Compute $Z^\ell$.
\item{\textbf{Step 2.}} Construct $D(Z^\ell)$. 
\item{\textbf{Step 3.}} Let $F_S = (f_S, 0, \emptyset)$ be the feasible partial embedding of $Z$ implied by $S$. If $F_S \notin V(D(Z^\ell))$ then halt.
\item{\textbf{Step 4.}} If $P = e_1, C_1, e_2, C_2, \ldots, e_j, C_j$:
\begin{description}
\item{\textbf{Step 4.1.}} Perform a DFS of $D(Z^\ell)$, starting from $F_S$. If a node with out-degree 0 is discovered, output the embedding described in Definition \ref{def:path_cluster_embedding}.
\end{description}
\item{\textbf{Step 5.}} If $P = e_1, C_1, e_2, C_2, \ldots, e_j, C_j, e_{j+1}$:
\begin{description}
\item{\textbf{Step 5.1.}} If $T = \emptyset$:
\begin{description}
\item{\textbf{Step 5.1.1.}} Perform a DFS of $D(Z^\ell)$, starting from $F_S$. If a node with out-degree 0 is discovered, output the embedding described in Definition \ref{def:path_cluster_embedding}.
\end{description}
\item{\textbf{Step 5.2.}} If $T \neq \emptyset$:
\begin{description}
\item{\textbf{Step 5.2.1.}} Let $F_T = (f_T, 0, \emptyset)$ be the feasible partial embedding of $Z$ implied by $T$. If $F_T \notin V(D(Z^\ell))$ then halt.
\item{\textbf{Step 5.2.2.}} Perform a DFS of $D(Z^\ell)$, starting from $F_S$. If $F_T$ is discovered, output the embedding described in Definition \ref{def:path_cluster_embedding}.
\end{description}
\end{description}
\end{description}

\begin{lemma} \label{lemma:path_alg_bound}
The $\path$ algorithm runs in time $n^2 \cdot f(H, c_\OPT)$.
\end{lemma}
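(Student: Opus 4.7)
The plan is to decompose the running time into three parts: computing $Z^\ell$, constructing the succession graph $D(Z^\ell)$, and performing the depth-first search. The key observation is that once the number of vertices and the maximum out-degree of $D(Z^\ell)$ are bounded, and the succession relation is shown to be checkable in polynomial time, the stated bound will follow from standard graph-search analysis.

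First I would argue that computing $Z^\ell$ takes polynomial time. For each $v \in Z$ we run a BFS from $v$ until we have seen either $(4(c_\OPT)^2+1)\cdot(c_\OPT)^2+1$ vertices or explored the entire ball of radius $(4(c_\OPT)^2+1)\cdot c_\OPT$; each such computation takes polynomial time, and there are at most $n$ such vertices, yielding an $n^{\cO(1)}$ bound for Step~1. Next, by Lemma~\ref{lemma:feasible_limit}, $|V(D(Z^\ell))| \leq n \cdot g(H,c_\OPT)$ for the function $g(H,c_\OPT)=(c_\OPT)^{\cO(c_\OPT)} \cdot 2|E(H)| \cdot |E(H)|^{\cO(|E(H)|^2)}$. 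Enumerating all these feasible partial embeddings and, for each, verifying the four defining conditions of feasibility (including invoking the $\cluster$ algorithm for the third coordinate $R$) takes only polynomial time per vertex, since $\cluster$ runs on a set of bounded size and the path-component $P \subseteq H$ has only $|E(H)|$-many long edges to index over.

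The central structural fact I would then establish is that the out-degree of every vertex in $D(Z^\ell)$ is bounded by a function of $H$ and $c_\OPT$ alone. Indeed, if $F_2 = (f_2,r_2,R_2)$ succeeds $F_1 = (f_1,r_1,R_1)$, then in both cases of the definition of succession the partial embeddings $f_1$ and $f_2$ share $4(c_\OPT)^2$ of their $4(c_\OPT)^2+1$ vertices, and the single new vertex introduced by $f_2$ must lie within distance $c_\OPT$ of the previous endpoint (by the fourth feasibility condition). Thus the number of candidate successors is bounded by the size of a single ball of radius $c_\OPT$ in $G$ restricted to $Z^\ell$, which is at most $(4(c_\OPT)^2+1)\cdot(c_\OPT)^2$ by the definition of $Z^\ell$; in the second case of succession we also pay an additional multiplicative factor bounded by the number of $\cluster$-solutions for the next interior cluster $C_{r_1+1}$, which by Lemma~\ref{lemma:cluster_alg_bound} depends only on $|E(H)|$ and $c_\OPT$. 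Checking whether a candidate is actually a successor (i.e.~verifying the succession conditions and the compatibility predicate on $e_{r_2}$) takes polynomial time.

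Combining these estimates, constructing $D(Z^\ell)$ takes time at most $|V(D(Z^\ell))| \cdot h(H,c_\OPT) \cdot n^{\cO(1)}$ for some function $h$, and the DFS in Steps~4-5 then runs in time linear in $|V(D(Z^\ell))|+|E(D(Z^\ell))| \leq |V(D(Z^\ell))|\cdot h(H,c_\OPT)$. Since $|V(D(Z^\ell))|\leq n\cdot g(H,c_\OPT)$, the overall running time can be bounded by $n^2 \cdot f(H,c_\OPT)$ for a suitable function $f$, absorbing all the polynomial overhead from shortest-path computations, feasibility checks, and compatibility verifications into the $n^2$ factor. The main obstacle I anticipate is the bookkeeping needed to bound the number of $\cluster$-solutions attached to each feasible partial embedding and to verify that the compatibility condition on $e_{r_2}$ can be checked in polynomial time; everything else is a routine combination of Lemmata~\ref{lemma:feasible_limit} and~\ref{lemma:cluster_alg_bound} with standard DFS analysis.
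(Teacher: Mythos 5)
Your decomposition (compute $Z^\ell$, build the succession graph $D(Z^\ell)$, run DFS) is the same as the paper's, but your accounting is genuinely different and, in one respect, more careful. The paper's proof simply cites Lemma~\ref{lemma:feasible_limit} but then treats $|V(D(Z^\ell))|$ as $(c_\OPT)^{\cO(c_\OPT)}$, silently dropping the factor $n$ that the lemma actually carries, and implicitly builds $D(Z^\ell)$ by testing all pairs of vertices (each test costing $\cO(n\cdot(c_\OPT)^2)$). You instead keep the bound $|V(D(Z^\ell))|\leq n\cdot g(H,c_\OPT)$ and recover the stated $n^2\cdot f(H,c_\OPT)$ bound by proving that each vertex of $D(Z^\ell)$ has out-degree bounded by a function of $H$ and $c_\OPT$ alone, so that only $f(H,c_\OPT)$ candidate successors per vertex need to be generated and checked (each check in time $n\cdot\poly(H,c_\OPT)$). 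This extra structural observation is not in the paper, and it is what makes the exponent $2$ come out cleanly; with the paper's all-pairs construction and the correct vertex count one would naively get an $n^3\cdot f(H,c_\OPT)$ bound (which would still suffice for Theorem~\ref{theorem:fpt_H}, but not for the literal statement of the lemma).

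One inaccuracy to fix: you claim that in \emph{both} cases of the succession relation on feasible partial embeddings of $Z$ the maps $f_1$ and $f_2$ share $4(c_\OPT)^2$ of their $4(c_\OPT)^2+1$ vertices. That is true only in the first case ($r_1=r_2$ and $f_2$ succeeds $f_1$); in the second case ($r_2=r_1+1$) the new map $f_2$ is only required to be compatible with the cluster solution $R_1$ on $e_{r_2}$, and need not overlap $f_1$ at all. The out-degree bound still holds there, but for a different reason: compatibility forces the domain of $f_2$ to intersect the (bounded-size) domain of $R_1$, and feasibility condition~4 then confines all of $A_{f_2}$ to a ball of radius $(4(c_\OPT)^2+1)\cdot c_\OPT$ around that anchor, which inside $Z^\ell$ contains at most $(4(c_\OPT)^2+1)\cdot(c_\OPT)^2$ vertices; together with the bound of Lemma~\ref{lemma:cluster_alg_bound} on the number of choices for $R_2$ this again gives at most $f(H,c_\OPT)$ successors. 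With that repair your argument goes through and yields the claimed running time.
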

\begin{proof}
For Step 1, for each vertex $v \in Z$, to decide if $v \in Z^\ell$, explore the neighborhood around $v$ until it is revealed that
\[
|\ball_G(v, 4(c_\OPT)^2_1) \cdot c_\OPT \leq (4(c_\OPT)^2 + 1) \cdot (c_\OPT)^2
\]
or that
\[
|\ball_G(v, 4(c_\OPT)^2_1) \cdot c_\OPT > (4(c_\OPT)^2 + 1) \cdot (c_\OPT)^2.
\]
Therefore, Step 1 can be performed in time $O(n \cdot (c_\OPT)^2)$.

By Lemma \ref{lemma:feasible_limit}, $|V(D(Z^\ell))| = (c_\OPT)^{O(c_\OPT)}$, and so $|E(D(Z^\ell))| = (c_\OPT)^{O(c_\OPT)}$. For $F_1=(f_1, r_1, R_1), F_2=(f_2, r_2, R_2) \in V(D(Z^t))$, there is an edge from $F_1$ to $F_2$ if $F_2$ succeeds $F_1$. We can check if $f_2$ succeeds $f_1$ in time $O(n \cdot 4(c_\OPT)^2)$, and check if $R_1$ and $g_2$ are compatible on $e_{r_2}$ in time $O(n \cdot 4(c_\OPT)^2)$.

For Step 3, we need time $(c_\OPT)^{O(c_\OPT)}$ to find $F_S$ in $V(D(Z^\ell))$. 

For Steps 4.1, 5.1, and 5.2, we use DFS to find an a path in $D(Z^\ell)$, which take time $(c_\OPT)^{O(c_\OPT)}$.

Therefore, the $\path$ algorithm tuns in time $n^2 \cdot f(H, c_\OPT)$.
\end{proof}

\subsection{FPT algorithm}
Given as input $G$, $H$, and an integer $c > 0$, the following algorithm either produces a non-contracting $c$-embedding of $G$ into $(H_\ALG, w_\ALG)$, $H_\ALG$ a subdivision of some quasi-subgraph of $H$, or correctly decides that no such embedding exists.

We provide first an informal summary of the algorithm:
\begin{description}
\setlength{\itemsep}{-1pt}
\item{\textbf{Step 1.}} Choose a quasi-subgraph $H'$ of $H$, and a set of short edges of $H'$.

\begin{description}
\setlength{\itemsep}{-1pt}
\item{\textbf{Step 1.1.}} Find and order the interesting clusters of $H'$. 

\item{\textbf{Step 1.2.}} Find the $\Delta$-interesting vertices of $G$, and choose a subset $I$. Choose an assignment of the vertices in $I$ into the interesting clusters.:

\begin{description}
\setlength{\itemsep}{-1pt}
\item{\textbf{Step 1.2.1.}} For each interesting cluster, use the $\cluster$ algorithm to choose an arrangement of the interesting vertices in the cluster, and along the start of the long edges leaving the cluster.

\begin{description}
\item{\textbf{Step 1.2.1.1.}} Find the path clusters of $H'$ and $\mathcal{T}$, which is the set of connected components  of $G \setminus I$. 

\item{\textbf{Step 1.2.1.2.}} For each connected component in $\mathcal{T}$, choose a path cluster f $H'$ to try embedding it into.

\begin{description}
\item{\textbf{Step 1.2.1.2.1.}} and {\textbf{Step 1.2.1.2.2.}} Using the results of the cluster algorithm above, we know what the embedding into the path cluster should look like near where the path cluster meets an interesting cluster. This determines our inputs to the $\path$ algorithm in the next step.

\item{\textbf{Step 1.2.1.2.3.}} For each path cluster $P_j$, use $\path$ to find an embedding.

\begin{description}
\item{\textbf{Step 1.2.1.2.3.1.}} By construction the embeddings for the interesting and path clusters are compatible on the edges they meet on, so they can be combined into $f_\ALG$ and $(H_\ALG, w_\ALG)$.

\item{\textbf{Step 1.2.1.2.3.2.}} Test $f_\ALG$ and $(H_\ALG, w_\ALG)$ to see if $f_\ALG$ is a non-contracting $c$-embedding of $G$ into $(H_\ALG, w_\ALG)$. If it is, we halt and output $f_\ALG, (H_\ALG, w_\ALG)$. Otherwise, we continue with different choices.

\end{description}

\end{description}

\end{description}

\end{description}

\end{description}

\item{\textbf{Step 2.}} If no embedding is found after all choices are exhausted, output NO.

\end{description}

Here we provide the formal algorithm:

\begin{description}
\item{\textbf{Step 1.}} For each quasi-subgraph $H'$ of $H$ and $S \subseteq E(H')$:

\begin{description}
\item{\textbf{Step 1.1.}} Supposing that $S$ is the set of short edges of $H'$, let $\mathcal{C}^{\geq3}_{H'}$ be the set of interesting clusters of $H'$. 
Let $k = |\mathcal{C}^{\geq3}_{H'}|$.
Fix an ordering $C_1, \ldots, C_k$ of $\mathcal{C}^{\geq3}_{H'}$.

\item{\textbf{Step 1.2.}} Let $I^{\Delta}$ be the set of $\Delta$-interesting vertices of $G$. For each $I \subseteq I^{\Delta}$ and partition $U_{C_1}, \ldots, U_{C_k}$ of $I$:

\begin{description}
\item{\textbf{Step 1.2.1.}} 
For each $i \in \{1, \ldots, k\}$, let $D_i = C_i \cup \{e \in E(H') : e \mbox{ is incident to } C_i\}$, and choose a solution $f_i, (D'_i, w'_i)$ of $\cluster(I, D_i)$ such that for every long edge $e$ incident to $C_i$ we have that $8c^2+2 \leq |\{v \in V(G) : f_i(v) \in \sub_{D'_i}(e)\}|$. Perform the following:

\begin{description}
\item{\textbf{Step 1.2.1.1.}} Let $\mathcal{P}_{H'}$ be the set of path clusters of $H'$.
Let $p = |\mathcal{P}_{H'}|$.
Fix an ordering $P_1, \ldots, P_p$ of $\mathcal{P}$. 
Let $\mathcal{T}$ be the set of connected components of $G \setminus I$. 

\item{\textbf{Step 1.2.1.2.}} For each partition $Q_1, \ldots, Q_p$ of $\mathcal{T}$, and for each $i \in \{1, \ldots, p\}$, let $W_j = \cup_{T \in Q_j} V(T)$:

\begin{description}
\item{\textbf{Step 1.2.1.2.1.}} For each $j \in \{1, \ldots, p\}$, let $x_j \in \{1, \ldots, k\}$ such that $P_j \cap D_{x_j} \neq \emptyset$. 
Let $\{a, b\} \in E(H')$ be a long edge connecting $P_j$ and $D_{x_j}$ with $a \in C_{x_j}$.
Let $S_j = s_1, \ldots, s_{4c^2+1}$ be the sequence of the last $4c^2+1$ consecutive vertices $f_{x_j}$ embeds into $\sub_{D'_{x_j}}(\{a, b\})$, going from $a$ to $b$. 

\item{\textbf{Step 1.2.1.2.2.}} If there exists a second long edge $\{a', b'\} \in E(H')$ connecting $P_j$ to $D_{x'_j}$ from some $x'_j \in \{1, \ldots, k\}$ with $a' \in C_{x'_j}$ then let $T_j = t_1, \ldots, t_{4c^2+1}$ be the sequence of the last $4c^2+1$ consecutive vertices $f_{x_j}$ embeds into $\{a, b\}$, going from $a$ to $b$.
Otherwise, let $T_j = \emptyset$. 

\item{\textbf{Step 1.2.1.2.3.}} For each $j \in \{1, \ldots, p\}$, let $g_j, (P'_j, w'_j)$ be the output of $\path(W_j, P_j, S_j, T_j)$:

\begin{description}
\item{\textbf{Step 1.2.1.2.3.1.}} Construct $f_\ALG, (H_\ALG, w_\ALG)$ from the outputs of the $\cluster$ and $\path$ algorithms as follows:
For all $i \in \{1, \ldots, k\}$ and $j \in \{1, \ldots, p\}$, if $C_i$ and $P_j$ are connected by an edge $e$, then by construction, $f_i$ and $g_j$ are compatible on edge $e$. Combine $f_i$, $g_i$. Call the weighed graph which results from these combinations $(H_\ALG, w_\ALG)$.
Let $f_\ALG$ be the embedding of $G$ into $(H_\ALG, w_\ALG)$.

\item{\textbf{Step 1.2.1.2.3.2.}} If $f_\ALG$ is a non-contracting, $c$-embedding of $G$ into $(H_\ALG, w_\ALG)$, then output $f_\ALG, (H_\ALG, w_\ALG)$ and halt.

\end{description}

\end{description}

\end{description}

\end{description}

\end{description}

\item{\textbf{Step 2.}} Output NO.

\end{description}


\begin{lemma} \label{lemma:important_bound}
If $v \in V(G)$ is $\Delta$-interesting, then there exists $z \in V(H^q)$ such that 
\[
d_{(H_\OPT, w_\OPT)}(f_\OPT(v), z) \leq 2 c_\OPT \cdot \Delta.
\]
\end{lemma}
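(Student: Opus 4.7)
The plan is to prove the contrapositive: if no such $z \in V(H^q)$ exists, then $v$ is not $\Delta$-interesting, i.e., $(\ball_G(v,\Delta), d_G)$ admits a $c_\OPT$-embedding into the line.

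First I would assume for contradiction that $d_{(H_\OPT, w_\OPT)}(f_\OPT(v), z) > 2 c_\OPT \cdot \Delta$ for every $z \in V(H^q)$. In particular, $f_\OPT(v)$ cannot itself be a vertex of $H^q$, so it lies strictly in the interior of $P_e$ for some edge $e = \{a,b\} \in E(H^q)$; and both $d_{(H_\OPT,w_\OPT)}(f_\OPT(v), a) > 2 c_\OPT \cdot \Delta$ and $d_{(H_\OPT,w_\OPT)}(f_\OPT(v), b) > 2 c_\OPT \cdot \Delta$. The key localization step is then to show that $f_\OPT$ maps every $u \in \ball_G(v,\Delta)$ into $P_e$. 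Indeed, since $f_\OPT$ has distortion at most $c_\OPT$, $d_{(H_\OPT, w_\OPT)}(f_\OPT(u), f_\OPT(v)) \leq c_\OPT \cdot d_G(u,v) \leq c_\OPT \cdot \Delta$, so $f_\OPT(u)$ lies in the open ball of radius $c_\OPT \Delta$ around $f_\OPT(v)$ in $H_\OPT$. Since the only way to escape $P_e$ is through $a$ or $b$, both of which are more than $c_\OPT \Delta$ away from $f_\OPT(v)$, the image $f_\OPT(u)$ must stay on $P_e$.

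The main technical step is to argue that $f_\OPT$ restricted to $\ball_G(v,\Delta)$ is itself a $c_\OPT$-embedding into the subdivided edge $P_e$, viewed as a weighted line. For this I need to verify that for any $u_1, u_2 \in \ball_G(v,\Delta)$, the shortest path from $f_\OPT(u_1)$ to $f_\OPT(u_2)$ in $(H_\OPT, w_\OPT)$ lies entirely within $P_e$. Any alternative path must exit $P_e$ through $a$ or $b$ and re-enter through $a$ or $b$, so its length is at least $d_{(H_\OPT,w_\OPT)}(f_\OPT(u_1), \{a,b\}) + d_{(H_\OPT,w_\OPT)}(f_\OPT(u_2), \{a,b\}) > 2 c_\OPT \Delta$. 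On the other hand, the sub-path of $P_e$ between the two images has length at most $d_{(H_\OPT,w_\OPT)}(f_\OPT(u_1),f_\OPT(v)) + d_{(H_\OPT,w_\OPT)}(f_\OPT(v),f_\OPT(u_2)) \leq 2 c_\OPT \Delta$, strictly smaller than any escaping path. Hence the shortest-path distance in $H_\OPT$ between $f_\OPT(u_1)$ and $f_\OPT(u_2)$ equals their distance along $P_e$.

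Putting it together, since $P_e$ is isometric to an interval of $\mathbb{R}$, the restriction of $f_\OPT$ to $\ball_G(v,\Delta)$ is a $c_\OPT$-embedding of $(\ball_G(v,\Delta), d_G)$ into the real line. This contradicts $v$ being $\Delta$-interesting (per Definition~\ref{def:important_vertex}), completing the proof. The only subtle point, and really the only thing that requires care, is the comparison in the previous paragraph showing that re-routing through $a$ or $b$ is strictly longer than staying on $P_e$; the rest follows from the definitions and the triangle inequality.
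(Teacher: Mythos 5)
Your proposal is correct and follows essentially the same route as the paper: assume every vertex of $H^q$ is farther than $2c_\OPT\Delta$ from $f_\OPT(v)$, localize the image of $\ball_G(v,\Delta)$ into the subdivision of a single edge, observe that the relevant distances are realized along that subdivided edge, and conclude a $c_\OPT$-embedding of $\ball_G(v,\Delta)$ into the line, contradicting $\Delta$-interestingness. Your explicit comparison showing that any path escaping through $a$ or $b$ is strictly longer is just a more careful spelling-out of a step the paper asserts directly.
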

\begin{proof}
Let $v \in V(G)$ be a $\Delta$-interesting vertex. Suppose that for all $p \in V(H^q)$ we have that
\[
d_{(H_\OPT, w_\OPT)}(f_\OPT(v), p) > 2 c_\OPT \cdot \Delta,
\]
and we shall find a contradiction.

Since $f_\OPT$ is a non-contracting $c_\OPT$-embedding, for all $x \in \ball_G(v, \Delta)$, we have
\[
d_{(H_\OPT, w_\OPT)}(f_\OPT(v), f_\OPT(x)) \leq c_\OPT \cdot \Delta,
\]
and for all $x, y \in \ball_G(v, \Delta)$,
\begin{align*}
d_{(H_\OPT, w_\OPT)}(f_\OPT(x), f_\OPT(y)) 
&\leq d_{(H_\OPT, w_\OPT)}(f_\OPT(v), f_\OPT(x)) + d_{(H_\OPT, w_\OPT)}(f_\OPT(v), f_\OPT(y)) \\
&\leq 2 c_\OPT \cdot \Delta.
\end{align*}
Since for all $p \in V(H^q)$ we have that $d_{(H_\OPT, w_\OPT)}(f_\OPT(v), p) > 2 c_\OPT \cdot \Delta$, we have that $f_\OPT$ embeds all $x \in \ball_G(v, \Delta)$ into $\{a, b\}_\OPT$, for some $\{a, b\} \in E(H^q)$. 
From the limits stated above, we have that 
\[
\min_{x \in \ball_G(v, \Delta)}\{d_{(H_\OPT, w_\OPT)}(a, f_\OPT(x))\} > c_\OPT \cdot \Delta
\]
and 
\[
\min_{x \in \ball_G(v, \Delta)}\{d_{(H_\OPT, w_\OPT)}(b, f_\OPT(x))\} > c_\OPT \cdot \Delta.
\]
Therefore, for any $x, y \in \ball_G(v, \Delta)$, the shortest path from $f_\OPT(x)$ to $f_\OPT(y)$ in $(H_\OPT, w_\OPT)$ is the path from $f_\OPT(x)$ to $f_\OPT(y)$ contained in $\{a, b\}_\OPT$.
Let 
\[
z = \argmin_{x \in \ball_G(v, \Delta)}\{d_{(H_\OPT, w_\OPT)}(a, f_\OPT(x))\},
\]
and let $g$ be a function so that for any $x \in \ball_G(v, \Delta)$,
\[
g(x) = d_{(H_\OPT, w_\OPT)}(f_\OPT(z), f_\OPT(x)).
\]
Then for all $s, t \in V(G)$, we have that
\begin{align*}
|g(s) - g(t)| &= |d_{(H_\OPT, w_\OPT)}(f_\OPT(s), f_\OPT(z)) - d_{(H_\OPT, w_\OPT)}(f_\OPT(t), f_\OPT(z))| \\
&= d_{(H_\OPT, w_\OPT)}(f_\OPT(s), f_\OPT(t))
\end{align*}
and therefore
\[
d_G(s, t) \leq |g(s) - g(t)| \leq c_\OPT \cdot d_G(s, t).
\]
Therefore, $g$ is a non-contracting, $c_\OPT$-embedding of $\ball_G(v, \Delta)$ into the line, which is a contradiction.
Thus the supposition, that for all $p \in V(H^q)$
\[
d_{(H_\OPT, w_\OPT)}(f_\OPT(v), p) > 2 c_\OPT \cdot \Delta,
\]
is false.
\end{proof}

\begin{lemma} \label{lemma:Delta_bound}
Let $I^\Delta$ be the set of $\Delta$-interesting vertices of $G$.
Then
\[
|I^\Delta| \leq 8 c_\OPT \cdot \Delta \cdot |E(H)|. 
\]
\end{lemma}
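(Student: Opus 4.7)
My plan is to combine Lemma \ref{lemma:important_bound} with a packing argument along each edge of the quasi-subgraph $H^q$, exploiting the non-contracting property of $f_\OPT$. The intuition is that every $\Delta$-interesting vertex must be embedded near a ``real'' vertex of $V(H^q)$, and near each such vertex only a bounded number of images can be packed along each incident edge.

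First, I would apply Lemma \ref{lemma:important_bound}: for every $v \in I^\Delta$ there exists $z_v \in V(H^q)$ with $d_{(H_\OPT, w_\OPT)}(f_\OPT(v), z_v) \leq 2 c_\OPT \Delta$. Since $H_\OPT$ is a subdivision of $H^q$, the image $f_\OPT(v)$ lies on $\sub_{H_\OPT}(e)$ for some $e = \{a,b\} \in E(H^q)$. Any path in $H_\OPT$ from a point on $\sub_{H_\OPT}(e)$ to a vertex of $V(H^q)$ either ends at $a$ or $b$, or first traverses one of them. Hence I conclude
\[
\min\{\, d_{(H_\OPT, w_\OPT)}(f_\OPT(v), a),\; d_{(H_\OPT, w_\OPT)}(f_\OPT(v), b)\,\} \leq 2 c_\OPT \Delta.
\]

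Next, for each edge $e = \{a,b\} \in E(H^q)$ define $I^\Delta_e = \{v \in I^\Delta : f_\OPT(v) \in V(\sub_{H_\OPT}(e))\}$. By the previous step, the images of $I^\Delta_e$ lie within the union of two sub-paths of $\sub_{H_\OPT}(e)$ — one of length at most $2 c_\OPT \Delta$ ending at $a$ and one of length at most $2 c_\OPT \Delta$ ending at $b$ — of total length at most $4 c_\OPT \Delta$. Because $f_\OPT$ is injective and non-contracting and any two distinct $u, v \in V(G)$ satisfy $d_G(u,v) \geq 1$, any two distinct images along $\sub_{H_\OPT}(e)$ are at distance at least $1$ in $(H_\OPT, w_\OPT)$. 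Therefore any sub-path of length $L$ contains at most $L + 1$ images of vertices of $V(G)$, giving $|I^\Delta_e| \leq 4 c_\OPT \Delta + O(1)$.

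Finally, I would sum over all edges of $H^q$, using the bound $|E(H^q)| \leq 2 |E(H)|$ which follows from Definition \ref{def:H_quasi} (rule 3 is the only rule that increases edge count, and it at most doubles it). This yields
\[
|I^\Delta| \;\leq\; \sum_{e \in E(H^q)} |I^\Delta_e| \;\leq\; 2 |E(H)| \cdot (4 c_\OPT \Delta + O(1)) \;\leq\; 8 c_\OPT \Delta |E(H)|,
\]
after absorbing the additive lower-order terms using $c_\OPT, \Delta \geq 1$. The only subtle point I expect is the bookkeeping for images that happen to coincide with a vertex of $V(H^q)$ and thus lie on the subdivisions of multiple edges; such boundary cases can be attributed to a single incident edge, which does not affect the bound. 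The main engine, though, is entirely the packing argument together with the $|E(H^q)| \leq 2|E(H)|$ factor coming from the quasi-subgraph operations.
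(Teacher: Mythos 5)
Your proposal is correct and takes essentially the same route as the paper's proof: apply Lemma \ref{lemma:important_bound} to place each $\Delta$-interesting image within distance $2c_\OPT\cdot\Delta$ of an endpoint of its edge of $H^q$, use the non-contracting property to pack at most roughly $4c_\OPT\cdot\Delta$ such images per edge, and conclude via $|E(H^q)|\leq 2|E(H)|$. The only difference is cosmetic: the paper silently drops the same additive $\cO(1)$-per-edge boundary terms that you note and absorb.
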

\begin{proof}
By Lemma \ref{lemma:important_bound}, each $\Delta$-interesting vertex is within distance $2 c_\OPT \cdot \Delta$ of a vertex of $V(H^q)$.
Since $f_\OPT$ is non-contracting, for each $e \in E(H^q)$, $f_\OPT$ can map at most
\[
2 c_\OPT \cdot \Delta + 2 c_\OPT \cdot \Delta
\]
$\Delta$-interesting vertices to $e$.
Therefore, there are at most
\[
4 c_\OPT \cdot \Delta \cdot |E(H^q)|
\]
$\Delta$-interesting vertices in $G$.
Thus, 
\[
|I^\Delta| \leq 8 c_\OPT \cdot \Delta \cdot |E(H)|.
\]
\end{proof}

\begin{lemma} \label{lemma:H_important}
Let $C$ be any interesting cluster of $H^q$. For any $v \in V(G)$ such that 
\[
d_{(H_\OPT, w_\OPT)}(f_\OPT(v), f_\OPT(C)) \leq 8c_\OPT,
\]
 we have that $v$ is $\diam(C)$-interesting.
\end{lemma}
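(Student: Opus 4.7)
The plan is to prove the contrapositive: assuming there is a $c_{\OPT}$-embedding $\phi$ of $(\ball_G(v,\diam(C)), d_G)$ into the line, derive a contradiction from the fact that $C$ is an interesting cluster, i.e.\ that at least three long edges $e_1,e_2,e_3$ of $H^q$ leave $C$, each of $w^q$-length greater than $16(c_{\OPT})^4$.

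First I would locate, for each $i\in\{1,2,3\}$, a witness vertex $u_i\in V(G)$ whose image $f_{\OPT}(u_i)$ lies on the subdivision of $e_i$ a controlled amount past the endpoint $a_i\in C$ of $e_i$. The hypothesis $d_{(H_\OPT,w_\OPT)}(f_\OPT(v),C)\leq 8c_\OPT$, together with the fact that $C$ has $w^q$-diameter $\diam(C)$, gives the bound $d_{(H_\OPT,w_\OPT)}(f_\OPT(v),a_i)\leq \diam(C)+8c_\OPT$ for each $i$. Because $f_\OPT$ is proper and pushing (so consecutive $G$-vertices mapped to the same edge-subdivision lie at $H_\OPT$-distance exactly their $G$-distance, which is at most $c_\OPT$) the subdivisions of the long edges are densely populated by $f_\OPT$-images of $G$-vertices. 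This lets me pick $u_i\in V(G)$ with $f_\OPT(u_i)$ at a prescribed $H_\OPT$-distance along $e_i$, and since $f_\OPT$ is non-contracting, $d_G(v,u_i)\le d_{(H_\OPT,w_\OPT)}(f_\OPT(v),f_\OPT(u_i))$, giving $u_i\in\ball_G(v,\diam(C))$ provided the position is chosen within the available budget.

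Next, I would transfer this to a tripod-style metric lower bound. The images $f_\OPT(u_1),f_\OPT(u_2),f_\OPT(u_3)$ sit on three distinct long edges emanating from $C$, so any shortest path in $H_\OPT$ between two of them must cross through $C$, forcing $d_{(H_\OPT,w_\OPT)}(f_\OPT(u_i),f_\OPT(u_j))$ to be roughly the sum of the two ``arm'' lengths. Pulling this back through the $c_\OPT$-embedding yields $d_G(u_i,u_j)$ comparable to the sum of $d_G(u_i,v)$ and $d_G(u_j,v)$, which is the textbook obstruction to line embeddings. Applying the standard argument to $\phi$ on $\{v,u_1,u_2,u_3\}$, any line ordering places two of the $u_i$'s on the same side of $\phi(v)$, from which the arithmetic $|\phi(u_i)-\phi(u_j)|\leq (c_\OPT-1)\cdot d_G(u_i,v)$ collides with the non-contracting lower bound $|\phi(u_i)-\phi(u_j)|\geq d_G(u_i,u_j)$, producing a contradiction once the arms are long enough. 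The long-edge threshold $16(c_\OPT)^4$ is exactly what supplies enough room to scale the tripod so that the resulting multiplicative gap exceeds $c_\OPT$ rather than merely $3$.

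The main obstacle will be keeping all three witnesses $u_1,u_2,u_3$ inside $\ball_G(v,\diam(C))$ while still positioning them far enough out along $e_1,e_2,e_3$ to force distortion strictly greater than $c_\OPT$: the hypothesis only bounds $v$ to be within $8c_\OPT$ of $C$ in $H_\OPT$, and each $a_i$ may be as far as $\diam(C)+8c_\OPT$ from $f_\OPT(v)$, so one has to exploit the budget carefully and use the $c_\OPT$-embedding inequality in both directions to convert the $H_\OPT$-separation of the three long edges into a $G$-tripod that fits into the prescribed ball. The classical tripod only yields distortion $\geq 3$, so I expect the crux to be replacing the three-point tripod by a longer configuration of vertices along each arm (afforded by the $16(c_\OPT)^4$ edge-length guarantee) in order to amplify the line-distortion lower bound past $c_\OPT$.
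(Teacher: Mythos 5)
There is a genuine gap, and it sits exactly at the point you flag but do not resolve. Your plan reduces to a four-point tripod $\{v,u_1,u_2,u_3\}$ plus an unspecified ``amplification'': you concede that the classical tripod only forbids distortion about $3$, and you ``expect the crux'' to be a longer configuration along each arm, without saying what that configuration is or why it works. That amplification is the entire content of the lemma, so the proposal is missing its core step. Worse, the purely metric version of the amplification you hint at (more points along the arms, then the same same-side arithmetic) does not go through: from a non-contracting $c_\OPT$-embedding $f_\OPT$ you only get $d_G(u_i,u_j)\geq d_{(H_\OPT,w_\OPT)}(f_\OPT(u_i),f_\OPT(u_j))/c_\OPT$, while $d_G(u_i,v)$ can be as large as the full $H_\OPT$-arm length; so the ``tripod'' you pull back into $G$ is lossy by a factor $c_\OPT$ and the same-side inequality never produces a contradiction for large $c_\OPT$, no matter how far out you place the witnesses. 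The paper's proof is not a point-configuration argument at all: it uses properness of $f_\OPT$ to produce, inside $G$, a \emph{connected} central set (the vertices mapped within $2(c_\OPT)^2$ of $C$, shown connected in $G$ by a properness argument that is itself one of the delicate steps and absent from your sketch, since a priori the preimages near $C$ could split into several $G$-components) together with three vertex-disjoint $G$-paths of $c_\OPT$ vertices each leaving it (prefixes of shortest $a_i$--$b_i$ paths, shown disjoint from each other and from the center). A spider with a connected center and three legs of $G$-length at least $c_\OPT$ admits no $c_\OPT$-embedding into the line, and crucially that obstruction is proved using the \emph{edges} of $G$ (adjacent vertices land within $c_\OPT$ of each other on the line, so one leg's image must pass close to another leg's image), not the four-point inequality you invoke.

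A secondary issue: your worry about fitting the witnesses inside $\ball_G(v,\diam(C))$ is legitimate but cannot be fixed by ``careful budgeting'' alone. The construction necessarily uses vertices whose images lie order $(c_\OPT)^3$ out along the long edges, so what one actually obtains is non-embeddability of a ball of radius about $\diam(C)+8(c_\OPT)^3$; this is how the paper's own proof concludes (it derives $\Delta$-interestingness, with $\Delta=\diam(H^q)+8(c_\OPT)^4$, which is how the lemma is later applied), rather than literally $\diam(C)$-interestingness. So even with the spider argument supplied, your proof should target that larger radius.
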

\begin{proof}
Since $C$ is an interesting cluster of $H^q$, there exist long edges $e_1, e_2, e_3 \in E(H^q)$ adjacent to $C$. Since $f_\OPT$ is proper, there exists vertices $a_1, a_2, a_3, b_1, b_2, b_3 \in V(G)$ such that for all $i \in \{1, 2, 3\}$,
\[
f_\OPT(a_i) \in {e_i}_\OPT,
\]
\[
f_\OPT(b_i) \in {e_i}_\OPT,
\]
\[
4(c_\OPT)^3 + 2c_\OPT \leq d_{(H_\OPT, w_\OPT)}(f_\OPT(a_i), V(C)) \leq 4(c_\OPT)^3 + 4c_\OPT,
\]
and 
\[
0 \leq d_{(H^q, w^q)}(f_\OPT(b_i), V(C)) \leq 2c_\OPT.
\]
Since $f_\OPT$ is a $c_\OPT$-embedding, for each $i \in \{1, 2, 3\}$ we have that
\[
4(c_\OPT)^3 \geq d_G(a_i, b_i) \geq 4(c_\OPT)^2.
\]

Let 
\[
V_C = \{v \in V(G) : d_{(H^q, w^q)}(f_\OPT(v), (C, w^q)) \leq 2(c_\OPT)^2\}.
\]
Suppose that $b_1$ and $b_2$ are in distinct connected components $C_1$, $C_2$ of $G[V_C]$. Let $P_{1,2}$ be the shortest path in $\ball_{(H_\OPT, w_\OPT)}(C, (c_\OPT)^2)$ from $f_\OPT(b_1)$ to $f_\OPT(b_2)$.
Let $p_1$ be the vertex in $C_1$ such that $f_\OPT(p_1) \in P_{1,2}$ and $d_{(H_\OPT, w_\OPT)}(f_\OPT(b_1), f_\OPT(p_1))$ is maximal. Let $p \in V(G) \setminus V(C_1)$ such that $f_\OPT(p)$ is in the subpath of $P_{1,2}$ from $f_\OPT(p_1)$ to $f_\OPT(b_2)$ and $d_{(H_\OPT, w_\OPT)}(f_\OPT(p_1), f_\OPT(p))$ is minimal.
Since $f_\OPT$ is proper we have
\[
d_{(H_\OPT, w_\OPT)}(f_\OPT(p_1), f_\OPT(p)) \leq 2c_\OPT,
\]
and since $f_\OPT$ is non-contracting we have
\[
d_G(p_1, p) \leq 2c_\OPT.
\]
Let $S \subseteq V(G)$ be the set of vertices in the shortest path from $p_1$ to $p$ in $G$.
For all $s \in S$,
\[
f_\OPT(s) \in \ball_{(H_\OPT, w_\OPT)}(f_\OPT(p_1), 2c_\OPT \cdot c_\OPT),
\]
and since $f_\OPT(p_1) \in P_{1,2}$, we have
\[
f_\OPT(s) \in \ball_{(H_\OPT, w_\OPT)}(C, 2(c_\OPT)^2).
\]
Therefore, $p_1$ and $p$ are in the same connected component of $G[V_C]$, and thus $p$ and $b_1$ are in the same connected component of $G[V_C]$. Therefore $p \in C_1$ and $p \in V(G) \setminus V(C_1)$, a contradiction.
Therefore, $b_1$ and $b_2$ are in the same connected component of $G[V_C]$, and by a similar argument, $b_2$ and $b_3$ are in the same connected component of $G[V_C]$. Thus, $b_1, b_2, b_3$ are all in the same connected component of $G[V_C]$.

For all $i \in \{1, 2, 3\}$, let $P_i$ be the set containing the vertices in the shortest path in $G$ from $a_i$ to $b_i$, and let $P^{c_\OPT}_i$ be the set containing the first $c_\OPT$ vertices in the shortest path in $G$ from $a_i$ to $b_i$.
Suppose there exists $i, j \in \{1, 2, 3\}$, $i \neq j$, such that
\[
P^{c_\OPT}_i \cap P^{c_\OPT}_j \neq \emptyset.
\]
Then there exists $y \in P^{c_\OPT}_i \cap P^{c_\OPT}_j$, and 
\[
d_G(y, a_i) \leq c_\OPT
\]
and 
\[
d_G(y, a_j) \leq c_\OPT.
\]
Therefore, since $f_\OPT$ is a $c_\OPT$-embedding, we have that
\[
d_{(H_\OPT, w_\OPT)}(f_\OPT(y), f_\OPT(a_i)) \leq c_\OPT^2
\]
and
\[
d_{(H_\OPT, w_\OPT)}(f_\OPT(y), f_\OPT(a_j)) \leq c_\OPT^2.
\]
Since $f_\OPT(a_i), f_\OPT(a_j)$ are not in the same edge of $(H_\OPT, w_\OPT)$, and each are of distance greater than $(c_\OPT)^2$ from either of the endpoints of the edges containing $f_\OPT(a_i), f_\OPT(a_j)$, this is a contradiction. Therefore, for all $i, j \in \{1, 2, 3\}$, $i \neq j$, we have that
\[
P^{c_\OPT}_i \cap P^{c_\OPT}_j = \emptyset.
\]
Furthermore, since for all $i \in \{1,2,3\}$ we have that
\[
4(c_\OPT)^3 + 2c_\OPT \leq d_{(H_\OPT, w_\OPT)}(f_\OPT(a_i), (C, ^q)) \leq 4(c_\OPT)^3 + 4c_\OPT,
\]
for all $p \in P^{c_\OPT}_i$, we have that
\begin{align*}
d_{(H_\OPT, w_\OPT)}(f_\OPT(p), V(C)) &\geq 4(c_\OPT)^3 + 2c_\OPT - (c_\OPT)^2 \\
&\geq 3(c_\OPT)^3 + 2c_\OPT,
\end{align*}
and therefore $P^{c_\OPT}_i \cap V_C = \emptyset$.

Let $C_{1,2,3}$ be the connected component of $G[V_C \cup P_1 \cup P_2 \cup P_3]$ containing $b_1, b_2, b_3$. For all $i \in \{1, 2, 3\}$, $C_{1,2,3}$ contains a path connecting $a_i$ and $b_i$, with at least $c_\OPT$ vertices not in $V_C$. Therefore, $C_{1,2,3}$ consists of a central component with at least 3 paths of length $\geq c_\OPT$ leaving the central component. Such a structure cannot be embedding into the line with distortion $c_\OPT$.

Let $v \in \ball_{(H_\OPT, w_\OPT)}(V(C), \diam(C) + 2(c_\OPT)^2)$. Then for all $z \in V(C_{1,2,3})$, we have
\begin{align*}
d_G(v, z) &\leq d_{(H^q, w^q)}(f_\OPT(v), f_\OPT(z)) \\
&\leq \max_{i\in\{1,2,3\}} d_{(H^q, w^q)}(f_\OPT(v), f_\OPT(a_i)) \\
&\leq \diam(C) + 8(c_\OPT)^3.
\end{align*}
Therefore, for all $v \in \ball_{(H_\OPT, w_\OPT)}(V(C), \diam(C) + 2(c_\OPT)^2)$, we have that
\[
\ball_G(v, \diam(C) + 8(c_\OPT)^3) \subseteq \ball_G(v, \Delta)
\]
does not embed into the line.
Therefore, $v$ is $\Delta$-interesting.
\end{proof}

\begin{lemma} \label{lemma:path_cluster_vertices}
Let 
\[
I_\OPT = \{v \in V(G) : \exists C \in \mathcal{C}^{\geq 3} \mbox{ such that } v \in \ball_{(H_\OPT, w_\OPT)}(C, 2(c_\OPT)^2)\},
\]
and let $C_G$ be any connected component of $G \setminus I_\OPT$. Then there exists a path cluster $P$ of $(H_\OPT, w_\OPT)$ such that $f_\OPT(V(C_G)) \subseteq (P, w_\OPT)$.
\end{lemma}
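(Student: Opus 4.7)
\textbf{Proof proposal for Lemma \ref{lemma:path_cluster_vertices}.} The plan is to assign to each $v \in V(C_G)$ the (unique) path cluster $P(v)$ whose subdivision in $H_\OPT$ contains $f_\OPT(v)$, and then show that this assignment is constant on $C_G$. Taking $P = P(v)$ for any $v \in V(C_G)$ gives the conclusion of the lemma.

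For well-definedness of $P(v)$, I would use that $v \notin I_\OPT$ to conclude $d_{(H_\OPT,w_\OPT)}(f_\OPT(v), C') > 2(c_\OPT)^2$ for every interesting cluster $C' \in \mathcal{C}^{\geq 3}$, where the distance is interpreted to the topological subdivision of $C'$ in $H_\OPT$. In particular, $f_\OPT(v)$ does not lie in the subdivision of any interesting cluster. Since the subdivisions of the interesting and path clusters of $H^q$ jointly cover $V(H_\OPT)$, and distinct path clusters can only meet at vertices lying in interesting clusters, the point $f_\OPT(v)$ lies strictly in the interior of the subdivision of a unique path cluster $P(v)$.

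For constancy of $P$, by connectedness of $C_G$ it suffices to verify $P(u) = P(v)$ for every edge $\{u,v\} \in E(C_G)$. I would argue by contradiction. If $P(u) \neq P(v)$, then since $P(u)$ and $P(v)$ are distinct connected components of $H^q \setminus \mathcal{C}^{\geq 3}$, any path in $H_\OPT$ between $f_\OPT(u)$ and $f_\OPT(v)$ must pass through the subdivision of some interesting cluster $C'$; let $z$ be such a vertex on a shortest such path. By the previous paragraph, both $d_{(H_\OPT,w_\OPT)}(f_\OPT(u), z) > 2(c_\OPT)^2$ and $d_{(H_\OPT,w_\OPT)}(z, f_\OPT(v)) > 2(c_\OPT)^2$, and adding these yields
\[
d_{(H_\OPT,w_\OPT)}(f_\OPT(u), f_\OPT(v)) > 4(c_\OPT)^2.
\]
However, $\{u,v\} \in E(G)$ together with $f_\OPT$ being a $c_\OPT$-embedding gives $d_{(H_\OPT,w_\OPT)}(f_\OPT(u), f_\OPT(v)) \leq c_\OPT \cdot d_G(u,v) = c_\OPT$, which for $c_\OPT \geq 1$ is strictly smaller than $4(c_\OPT)^2$: contradiction.

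The main (minor) obstacle is really bookkeeping: confirming the natural reading of $\ball_{(H_\OPT,w_\OPT)}(C, r)$ for a subgraph $C$ (distance to the topological subdivision of $C$, not merely to $V(C)$), and handling the fact that two adjacent path clusters may share vertices lying inside an interesting cluster. Both issues are resolved by observing that $v \notin I_\OPT$ forces $f_\OPT(v)$ to sit strictly inside the interior of a single path cluster's subdivision, so no ambiguity remains.
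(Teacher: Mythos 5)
Your proof is correct and takes essentially the same route as the paper: both arguments rest on the fact that distinct path clusters of $H^q$ are separated in $H_\OPT$ by (subdivisions of) interesting clusters, so vertices outside $I_\OPT$, whose images lie at distance more than $2(c_\OPT)^2$ from every interesting cluster, cannot be joined within $G \setminus I_\OPT$ while being mapped to different path clusters, because adjacent vertices have images at distance at most $c_\OPT$. The only cosmetic difference is that you verify constancy edge-by-edge and then invoke connectivity of $C_G$, whereas the paper argues along an arbitrary $G$-path between two offending vertices and exhibits on it a vertex of $I_\OPT$.
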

\begin{proof}
Suppose there exists $C$, a connected component of $G \setminus I_\OPT$ such that for some $x, y \in V(C)$, we have that $f_\OPT(x)$ and $f_\OPT(y)$ are in different path clusters of $H^q$. Therefore, any path in $(H_\OPT, w_\OPT)$ between $f_\OPT(x)$ and $f_\OPT(y)$ must intersect the subdivision of an interesting cluster of $H^q$.
So for any path in $G$ between  $x$ and $y$, the path must contain a vertex $z$ such that, for $C'$ the subdivision of some interesting cluster of $H^q$, 
\[
d_{(H_\OPT, w_\OPT)}(f_\OPT(z), V(C')) \leq c_\OPT,
\]
and thus $x$ and $y$ cannot be in the same connected component of $G \setminus I_\OPT$.
Therefore, for all connected components $C$ of $G \setminus I_\OPT$, there exists a path cluster $P$ of $H^q$ such that $f_\OPT(C) \subseteq P$.
\end{proof}

\begin{lemma} \label{lemma:path_cluster_bound}
Let 
\[
I_\OPT = \{v \in V(G) : \exists C \in \mathcal{C}^{\geq 3} \mbox{ such that } v \in \ball_{(H_\OPT, w_\OPT)}(C, 2(c_\OPT)^2)\},
\]
and let $T$ be the set of connected components of $G \setminus I_\OPT$.
Then $|T| \leq (4c_\OPT \cdot |E(H)|)^2$.
\end{lemma}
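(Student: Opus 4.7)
My plan is to bound $|T|$ by assigning to each component $T \in \mathcal{T}$ a distinguished boundary vertex $v_T \in T$ adjacent in $G$ to some vertex of $I_\OPT$, and then counting how many such boundary vertices are possible via the Local Density Lemma~\ref{lemma:local_density}. The cases $I_\OPT = \emptyset$ and $V(G) = I_\OPT$ are immediate ($|\mathcal{T}| \leq 1$), so assume $\emptyset \neq I_\OPT \neq V(G)$. Because $G$ is connected, every component $T$ of $G \setminus I_\OPT$ must contain at least one vertex with a $G$-neighbor in $I_\OPT$; these representatives are distinct across components, so it suffices to count the number of vertices in $V(G)\setminus I_\OPT$ that can serve as some $v_T$.

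The next step is to localize $f_\OPT(v_T)$ inside $(H_\OPT, w_\OPT)$. Since $v_T \notin I_\OPT$, its image lies at distance strictly greater than $2(c_\OPT)^2$ from every interesting cluster. Simultaneously, $v_T$ is $G$-adjacent to some $u_T \in I_\OPT$ and $f_\OPT$ is a $c_\OPT$-embedding, so $d_{(H_\OPT, w_\OPT)}(f_\OPT(v_T), f_\OPT(u_T)) \leq c_\OPT$, while $f_\OPT(u_T)$ lies within $2(c_\OPT)^2$ of some interesting cluster $C_T$. The triangle inequality then forces $f_\OPT(v_T)$ into the annulus around $C_T$ at distances in $(2(c_\OPT)^2,\, 2(c_\OPT)^2 + c_\OPT]$. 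The key geometric observation is that every long edge has weight greater than $16(c_\OPT)^4$, while every short edge incident to $C_T$ lies inside $C_T$; hence any path from $C_T$ to a point outside $C_T$ must first enter a long edge incident to $C_T$, and the entire annulus is contained in length-$c_\OPT$ initial subsegments of such long edges.

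To conclude, I would count. Fix an interesting cluster $C$ and an incident long edge $e$, and let $\sigma_{C,e}$ be the associated length-$c_\OPT$ subsegment of $e$ identified above. All vertices with image in $\sigma_{C,e}$ are within $(H_\OPT, w_\OPT)$-distance $c_\OPT$ of one another, and therefore within $G$-distance $c_\OPT$ by non-contraction, so they lie in a common ball of radius $c_\OPT$ in $G$. By Lemma~\ref{lemma:local_density}, this ball has at most $4(c_\OPT)^2 |E(H)|$ vertices. Each long edge produces at most two such (cluster, edge) pairs (one per endpoint), and the number of long edges is at most $|E(H^q)| \leq 2|E(H)|$, so the number of pairs is at most $4|E(H)|$. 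Multiplying yields $|T| \leq 4|E(H)| \cdot 4(c_\OPT)^2 |E(H)| = (4c_\OPT \cdot |E(H)|)^2$.

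The main technical obstacle will be the localization step: one must rigorously exclude the possibility that $f_\OPT(v_T)$ lies deep inside some boring cluster reached through a chain of short edges from $C_T$, or falls into the annulus of a different interesting cluster. This is precisely where the separation between the annulus outer radius $2(c_\OPT)^2 + c_\OPT$ and the long-edge length threshold $16(c_\OPT)^4$ is essential. The remaining pieces -- selecting $v_T$, triangle-inequality bookkeeping, and invoking Lemma~\ref{lemma:local_density} -- are routine.
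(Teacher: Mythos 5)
Your proof is correct and rests on the same two ingredients as the paper's own argument: boundary vertices of the components must be mapped into a thin annulus around the interesting clusters (width $c_\OPT$, sandwiched between the radius $2(c_\OPT)^2$ defining $I_\OPT$ and that radius plus the stretch $c_\OPT$ of a single edge of $G$), and the Local Density Lemma~\ref{lemma:local_density}. The bookkeeping differs, though. The paper picks, for each component, a neighbor $z \in I_\OPT$ of that component; since one such $z$ can serve several components, it bounds the number of admissible $z$'s by $4c_\OPT|E(H)|$ (at most $2c_\OPT$ annulus vertices per edge of $H^q$, by non-contraction) and then multiplies by $|\ball_G(z,1)| \leq 4c_\OPT|E(H)|$ to account for the several components attached to a single $z$. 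You instead pick the representative $v_T$ inside the component, which makes the assignment injective and removes the degree factor, and you pay for it by counting, per length-$c_\OPT$ segment of a long edge, all vertices of a radius-$c_\OPT$ ball ($\leq 4(c_\OPT)^2|E(H)|$ by Lemma~\ref{lemma:local_density}) rather than the roughly $c_\OPT$ vertices that non-contraction alone permits on such a segment; both decompositions give exactly $(4c_\OPT \cdot |E(H)|)^2$. One caveat: your claim that the annulus lies entirely on long edges incident to $C_T$ presumes that $\ball_{(H_\OPT, w_\OPT)}(C,\cdot)$ is taken around the whole subdivision of the cluster $C$, so that points on subdivided short edges of $C$ are at distance $0$ and hence their preimages are in $I_\OPT$; if the ball were read as being only around $V(C)$, the image of $v_T$ could sit in the interior of a short edge of $C_T$ itself, but your count survives unchanged because every edge of $H^q$, short or long, still contributes at most two length-$c_\OPT$ annulus segments and $|E(H^q)| \leq 2|E(H)|$. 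The paper's per-edge count is agnostic to this distinction, which is why it never needs the long-edge localization at all.
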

\begin{proof}
Let $C_G$ be a connected component of $G \setminus I_\OPT$.
Since $G$ is a connected graph, we have that
\[
\ball_G(V(C_G), 1) \cap I_\OPT \neq \emptyset.
\]
Let $z \in \ball_G(V(C_G), 1) \cap I_\OPT$. Since $f_\OPT$ is a $c_\OPT$-embedding, we have that
\[
z \in I_\OPT \setminus \{v \in V(G) : \exists C \in \mathcal{C}^{\geq 3} \mbox{ such that } v \in \ball_{(H_\OPT, w_\OPT)}(C, 2(c_\OPT)^2 - c_\OPT)\}.
\]
For each edge in $E(H^q)$, there are at most $2c_\OPT$ vertices in $v \in V(G)$ such that 
\[
v \in I_\OPT \setminus \{v \in V(G) : \exists C \in \mathcal{C}^{\geq 3} \mbox{ such that } v \in \ball_{(H_\OPT, w_\OPT)}(V(C), 2(c_\OPT)^2 - c_\OPT)\},
\]
and so there are at most $2c_\OPT \cdot |E(H^q)| \leq 4c_\OPT \cdot |E(H)|$ vertices to which each connected component of $G \setminus I_\OPT$ is connected to one or more. From Lemma \ref{lemma:local_density}, for all $v \in V(G)$, we have that
\[
|\ball_{G}(v, 1)| \leq 4c_\OPT \cdot |E(H)|.
\]
Therefore, there are at most $(4c_\OPT \cdot |E(H)|)^2$ connected components of $G \setminus I_\OPT$.
\end{proof}

\begin{lemma} \label{lemma:fpt_H_runtime}
The FPT Algorithm runs in time $n^{\cO(1)} \cdot f(H, c_\OPT)$.
\end{lemma}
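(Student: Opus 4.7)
The plan is to walk through the nested loops of the FPT algorithm from the outside in, bounding the number of iterations at each level by a function of $H$ and $c_\OPT$, and bounding the work per iteration by $n^{\cO(1)}\cdot f(H,c_\OPT)$. Multiplying yields the claimed runtime.

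First, the outer Step~1 enumerates at most $f(H)$ quasi-subgraphs $H'$ (by Definition~\ref{def:H_quasi} each has at most $|V(H)|+2|E(H)|$ vertices), at most $2^{|E(H')|}\le f(H)$ subsets $S$, and at most $|V(H^q)|!$ orderings of $\mathcal{C}^{\geq3}_{H'}$. In Step~1.2 we enumerate subsets $I\subseteq I^\Delta$ and partitions $U_{C_1},\ldots,U_{C_k}$ of $I$: the key observation is that $\Delta\le f(H,c_\OPT)$ (interpreting $\diam(H^q)$ in Definition~\ref{def:Delta} as the combinatorial diameter of the small quasi-subgraph), so Lemma~\ref{lemma:Delta_bound} gives $|I^\Delta|\le f(H,c_\OPT)$, hence $2^{|I^\Delta|}\cdot k^{|I|}\le f(H,c_\OPT)$ choices. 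Computing $I^\Delta$ itself reduces to invoking Lemma~\ref{lemma:importance_tractable} for each vertex of $G$, at cost $n\cdot f(H,c_\OPT)$.

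Inside, Step~1.2.1 enumerates CLUSTER solutions: Lemma~\ref{lemma:cluster_alg_bound} bounds the number of solutions of $\cluster(I,D_i)$ by $|E(D_i)|^{|I|}\cdot|I|!\cdot(|V(D_i)|-2)!$, each constructible in polynomial time from the LP in Step~1.1 of CLUSTER; all quantities here are $f(H,c_\OPT)$. The subtlest level is Step~1.2.1.2, which enumerates partitions of $\mathcal{T}$, the set of connected components of $G\setminus I$, among the $p\le|V(H^q)|$ path clusters. I will bound $|\mathcal{T}|$ using the local density Lemma~\ref{lemma:local_density}: every vertex of $G$ has at most $2c_\OPT\cdot|E(H)|$ neighbors, so deleting $I$ cuts at most $|I|\cdot 2c_\OPT|E(H)|$ edges from the connected graph $G$, leaving at most $|I|\cdot 2c_\OPT|E(H)|+1\le f(H,c_\OPT)$ connected components; hence the number of partitions into $p$ parts is $p^{|\mathcal{T}|}\le f(H,c_\OPT)$.

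Finally, the innermost body (Steps~1.2.1.2.1--1.2.1.2.3.2) is a polynomial-time computation: extracting $S_j$ and $T_j$ takes $\cO(c_\OPT^2)$ work per cluster, the PATH algorithm runs in $n^2\cdot f(H,c_\OPT)$ by Lemma~\ref{lemma:path_alg_bound}, combining the pairwise-compatible partial embeddings into $f_\ALG,(H_\ALG,w_\ALG)$ is straightforward, and verifying that $f_\ALG$ is a non-contracting $c$-embedding reduces to an all-pairs distance computation followed by pairwise comparisons, i.e.~$n^{\cO(1)}$. Multiplying the per-level bounds gives the $n^{\cO(1)}\cdot f(H,c_\OPT)$ bound of Lemma~\ref{lemma:fpt_H_runtime}.

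The main obstacle is to certify that both $|I^\Delta|$ and $|\mathcal{T}|$ remain bounded in $H$ and $c_\OPT$ independently of $n$; both hinge on the local density lemma and hence on the promised existence of $f_\OPT$. A secondary worry is the interpretation of $\diam(H^q)$ in Definition~\ref{def:Delta}: for the analysis to deliver an FPT bound one must read it as the combinatorial diameter of the quasi-subgraph of $H$ (or equivalently the weighted diameter restricted to short-edge clusters, which is at most $16(c_\OPT)^4|E(H)|$), rather than as the weighted diameter of $(H_\OPT,w_\OPT)$, which can scale with $n$.
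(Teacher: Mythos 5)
Your proof is correct and follows the same outside-in loop-counting structure as the paper's own proof: bound the Step~1 choices by $2^{\cO(|V(H)|+|E(H)|)}$, Step~1.2 via Lemma~\ref{lemma:Delta_bound}, Step~1.2.1 via Lemma~\ref{lemma:cluster_alg_bound}, and the inner body via Lemma~\ref{lemma:path_alg_bound} together with polynomial-time stitching and verification. The one place you genuinely diverge is the bound on $|\mathcal{T}|$ in Step~1.2.1.2: the paper invokes Lemma~\ref{lemma:path_cluster_bound}, which counts the components of $G\setminus I_\OPT$ by using properness of $f_\OPT$ to count vertices mapped just outside the $2(c_\OPT)^2$-neighborhoods of the interesting clusters; you instead bound the components of $G\setminus I$ directly by the number of edges leaving $I$, using the degree bound from the local density lemma, giving $|\mathcal{T}|\leq |I|\cdot \cO(c_\OPT |E(H)|)+1$. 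Your variant is slightly more robust, since it applies verbatim to every guessed set $I\subseteq I^\Delta$ in every iteration (the paper's lemma is stated only for $I_\OPT$), and both bounds are $f(H,c_\OPT)$, so the final runtime is unaffected; as you note, both arguments rest on the existence of $f_\OPT$, exactly as the paper's do. Your remark about reading $\diam(H^q)$ in Definition~\ref{def:Delta} as a quantity bounded in terms of $H$ and $c_\OPT$ (rather than the weighted diameter of $(H_\OPT,w_\OPT)$, which can grow with $n$) is also the right reading: the paper's own bound $2^{8c_\OPT\Delta|E(H)|}$ on the number of choices in Step~1.2 is FPT only under that interpretation.
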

\begin{proof}
For Step 1, when creating a quasi-subgraph, the only rule which increases the number of edges is rule 3. Since The quasi-subgraph must be connected, rule 3 can be applied at most once for each edge in $E(H)$. Therefore, 
\[
|E(H')| \leq 2|E(H)|
\] 
and 
\[
|V(H')| \leq 2|V(H)|.
\]
We can find an upper bound on the number of quasi-subgraphs of $H$ by first selecting a subset of edges of $H$ to apply rule 3 to, in which we have $2^{|E(H)|}$ choices, and then selecting subsets of vertices and edges for deletion, of which there are at most $2^{2|V(H)|}$ and $2^{2|E(H)|}$ sets to choose from. There are therefore at most
\[
2^{|E(H)|} \cdot 2^{2|V(H)|} \cdot 2^{2|E(H)|} = 2^{|V(H)|} \cdot 2^{3|E(H)|}
\]
choices for quasi-subgraph of $H$, and therefore
\[
2^{3|E(H)|} \cdot 2^{2|E(H)|} = 2^{5|E(H)|}
\]
possible choices for Step 1.

For Step 1.1, we can find the interesting clusters of $H'$ in time $f(H)$.

For Step 1.2, Lemma \ref{lemma:Delta_bound} tells us that 
\[
|I^\Delta| \leq 8 c_\OPT \cdot \Delta \cdot |E(H)|. 
\]
Therefore, there are at most
\[
2^{8 c_\OPT \cdot \Delta \cdot |E(H)|}
\]
subsets of $I^\Delta$.
Each interesting cluster of $H'$ must contain a vertex of $H'$, and so there are at most 
\[
2|V(H)|^{|I|} \leq 2|V(H)|^{8 c_\OPT \cdot \Delta \cdot |E(H)|}
\]
possible partitions of $I$.
Therefore, there are at most 
\[
2^{8 c_\OPT \cdot \Delta \cdot |E(H)|} \cdot 2|V(H)|^{8 c_\OPT \cdot \Delta \cdot |E(H)|}
\]
choices for Step 1.2.

For Step 1.2.1, we have that 
\[
|I| \leq |I^\Delta| \leq 8 c_\OPT \cdot \Delta \cdot |E(H)|,
\]
and each $D_i$ is a subgraph of $H'$, and thus by Lemma \ref{lemma:cluster_alg_bound}, each instance of $\cluster(I, D_i)$ has at most
\[
O(2|E(H)|^{|I|} \cdot |I|! \cdot (2|V(H)| - 2)!) = |E(H)|^{O(|E(H)|}
\]
solutions.

For Step 1.2.1.1, we can find the connected components of $G \setminus I$ in time $O(n)$.

For Step 1.2.1.2, by Lemma \ref{lemma:path_cluster_bound}, we have that
\[
|\mathcal{T}| \leq (4c_\OPT \cdot |E(H)|)^2.
\]
Each path cluster contains at least one long edge of $H'$, so there are at most $2|E(H)|$ path clusters.
Therefore, there are at most
\[
(2|E(H)|)^{|\mathcal{T}|} = (2|E(H)|)^{(4c_\OPT \cdot |E(H)|)^2}
\]
possible partitionings of $\mathcal{T}$.

Steps 1.2.1.2.1 and 1.2.1.2.2 can be done in time $f(H, c)$.

For Step 1.2.1.2.3, by Lemma \ref{lemma:path_alg_bound}, the $\path$ algorithm runs in time $n^2 \cdot f(H, c_\OPT)$.

Step 1.2.1.2.3.1 can be done in time $\cO(n)$, by checking where each vertex in $G$ is embedded.

Step 1.2.1.2.3.2 can be done by computing all-pairs shortest path on both $G$ and $(H_\ALG, w_\ALG)$, then for each $u, v \in V(G)$, compare $d_G(u, v)$ and $d_{(H_\ALG, w_\ALG)}(f_\ALG(u), f_\ALG(v))$. Since each edge of $H^q$ is subdivided no more than $c_\OPT\cdot  n$ times, $|V(H_\ALG)| \leq 2|V(H)|c_\OPT \cdot n$, and so this check can be performed in time $\cO(f(H, c_\OPT) \cdot n^3)$.

Therefore, the algorithm runs in time $n^{\cO(1)} \cdot f(H, c_\OPT)$.
\end{proof}

\begin{lemma} \label{lemma:fpt_H_correctness}
If $H$ contains an interesting cluster and $c \geq c_\OPT$, then the FPT Algorithm outputs $f_\ALG, (H_\ALG, w_\ALG)$, where $f_\ALG$ is a non-contracting $c$-embedding of $G$ into $(H_\ALG, w_\ALG)$.
\end{lemma}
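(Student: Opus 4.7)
The plan is to show that among all the nested choices explored by the FPT algorithm, at least one full branch exactly matches (or yields an embedding equivalent to) the optimal $f_\OPT$ promised by Lemma~\ref{lemma:H_proper}. Since $c \geq c_\OPT$, Lemma~\ref{lemma:H_proper} gives a proper, pushing, non-contracting $c_\OPT$-embedding $f_\OPT$ of $G$ into $(H_\OPT, w_\OPT)$, where $H_\OPT$ is a subdivision of some quasi-subgraph $H^q$ of $H$. First I would argue that, in the outer loop, the algorithm eventually picks $H' = H^q$ together with $S$ equal to the set of short edges of $H^q$ (under the definition $w^q(e) \leq 16(c_\OPT)^4$). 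Under this choice, the interesting clusters $\mathcal{C}^{\geq 3}_{H'}$ and path clusters $\mathcal{P}_{H'}$ found by the algorithm coincide with those of $H^q$ under $f_\OPT$.

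Next, I would define the target set of ``important'' vertices
\[
I_\OPT \;=\; \bigl\{v \in V(G) : \exists\, C \in \mathcal{C}^{\geq 3}_{H'} \text{ with } f_\OPT(v) \in \ball_{(H_\OPT,w_\OPT)}(C, 2(c_\OPT)^2)\bigr\},
\]
together with the induced partition $U_{C_i} = \{v \in I_\OPT : f_\OPT(v) \in \ball(C_i, 2(c_\OPT)^2)\}$. By Lemma~\ref{lemma:H_important} each vertex of $I_\OPT$ is $\Delta$-interesting, so $I_\OPT \subseteq I^\Delta$ and thus appears as a candidate choice $I$ in Step~1.2, with the correct partition tried in some iteration. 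At this point the algorithm's guesses have aligned with $f_\OPT$ on the ``skeleton'' of interesting clusters.

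Then I would check that the $\cluster$ subroutine can reproduce the behavior of $f_\OPT$ on each $D_i$. Since $f_\OPT$ restricted to $I$ together with an initial segment of length $8c^2+2$ along each outgoing long edge induces a configuration of $I, D_i$ in the sense of the definition preceding Lemma~\ref{lemma:cluster_alg_bound}, the corresponding $f_i, (D'_i, w'_i)$ is enumerated by the algorithm. In a parallel branch the algorithm also chooses the correct partition $Q_1,\ldots,Q_p$ of $\mathcal{T}$: by Lemma~\ref{lemma:path_cluster_vertices} each connected component of $G \setminus I_\OPT$ is mapped by $f_\OPT$ entirely inside a single path cluster, so assigning $Q_j$ to be the set of components sent into $P_j$ is a legal partition that the algorithm will try. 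For each $P_j$, the boundary sequences $S_j, T_j$ produced in Steps~1.2.1.2.1--1.2.1.2.2 are exactly the restriction of the $\cluster$ output on the two interfacing long edges, hence they coincide with the $f_\OPT$-induced ones. Lemma~\ref{lemma:feasible_partial_embeddings_fopt} then guarantees that $\path(W_j, P_j, S_j, T_j)$ succeeds, because the sequence of feasible partial embeddings derived from $f_\OPT$ provides a path in the succession graph from $F_{S_j}$ to a terminal node (or to $F_{T_j}$ when $T_j \neq \emptyset$).

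Finally, I would stitch everything together. By construction all the $f_i$ and $g_j$ are compatible on the shared long edges (they agree pointwise on the sequence $S_j$ and are both pushing non-contracting), so the combination in Step~1.2.1.2.3.1 produces a well-defined weighted subdivision $(H_\ALG, w_\ALG)$ and embedding $f_\ALG$. The main obstacle is verifying global distortion from purely local/cluster-wise guarantees: for $u,v \in V(G)$ embedded in different clusters or path clusters, any shortest path in $(H_\ALG, w_\ALG)$ between $f_\ALG(u)$ and $f_\ALG(v)$ decomposes into sub-paths through the assembled pieces, each of which mirrors the same decomposition in $(H_\OPT, w_\OPT)$; pushing/non-contractingness on each piece, plus the compatibility conditions on the interface edges, lets one compare both distances directly to $d_G(u,v)$ and conclude $d_G(u,v) \leq d_{(H_\ALG,w_\ALG)}(f_\ALG(u),f_\ALG(v)) \leq c_\OPT \cdot d_G(u,v) \leq c \cdot d_G(u,v)$. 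The test in Step~1.2.1.2.3.2 then passes, and the algorithm outputs $f_\ALG, (H_\ALG, w_\ALG)$ as required.
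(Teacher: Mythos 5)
Your proposal follows essentially the same route as the paper's own proof: fix the branch where the algorithm guesses $H^q$, the correct short edges, the set of vertices near interesting clusters (via Lemma~\ref{lemma:H_important}), the correct partition of components into path clusters (via Lemma~\ref{lemma:path_cluster_vertices}), reproduce $f_\OPT$ on the interesting clusters through $\cluster$ and on the path clusters through $\path$ (via Lemma~\ref{lemma:feasible_partial_embeddings_fopt}), and then verify distortion by decomposing any pair $u,v$ according to the clusters their shortest path crosses, using the shared boundary sequences on the interface edges. The only difference is that your final distortion check is stated more compactly than the paper's explicit three-case analysis (both endpoints in an interesting cluster, both in a path cluster, or crossing several clusters with intermediate vertices chosen on the interface sequences), but the underlying argument is the same.
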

\begin{proof}
Since the algorithm iterates over all possible choices of quasi-subgraphs and short edges, we may assume that for some iteration, $H' = H^q$, and the correct short edges are chosen. By Lemma \ref{lemma:importance_tractable}, we can find all $\Delta$-interesting vertices. By Lemma \ref{lemma:H_important} and Definition \ref{def:Delta}, we have that all vertices which $f_\OPT$ embeds into a radius of $8\cdot(c_\OPT)^4$ of any interesting cluster is $\Delta$-interesting. Therefore, since the algorithm tries all assignments of $\Delta$-interesting vertices to interesting clusters, and all possible orders in which the vertices might be embedded along the edges of and incident to the interesting clusters, we may assume that the algorithm will reach a state where for each interesting cluster $C$, $f_\OPT$ and the algorithm match for each edge $e \in E(C)$ on the vertices embedded into $e$, the order of the vertices on $e$, and the order of vertices embedded into long edges leaving $C$, up to distance at least $8\cdot(c_\OPT)^2+2$.

For each path cluster, for each long edge in the path cluster connected to an interesting cluster, the $\path$ algorithm is given as input a sequence of $4(c_\OPT)^2+1)$ vertices of distance at least $4(c_\OPT)^2+1)$ from the interesting cluster, and in the order they are embedded, when traversing the edge away from the interesting cluster.
By Lemma \ref{lemma:feasible_partial_embeddings_fopt}, for each path cluster $P$, there exists a solution to the $\path$ algorithm such that if $P$ is connected by long edge $e=\{a, b\}$ to interesting cluster $C$, and $a \in V(C)$, then the solution is compatible with $f_\OPT$ restricted $\ball_{(H_\OPT, w_\OPT)}(V(C), 8(c_\OPT)^2)$.

Therefore, we may assume that the algorithm computes $f_\ALG$, $(H_\ALG, w_\ALG)$ such that
\begin{enumerate}
\item $H_\ALG$ is a subdivision of $H^q$.
\item For each interesting cluster $C_I$ of $H^q$, for each $e \in E(C_I)$,
\[
f_\OPT(V(G)) \cap e_\OPT = f_\ALG(V(G)) \cap \sub_{(H_\ALG, w_\ALG)}(e)
\]
and the order from imposed on $f_\OPT(V(G)) \cap e_\OPT$ by $f_\OPT$ is the same as the order imposed on $f_\ALG(V(G)) \cap \sub_{(H_\ALG, w_\ALG)}(e)$ by $f_\ALG$.
\item For each path cluster $C_P$ in $H^q$, we have that
\[
f_\OPT(V(G)) \cap \sub_{(H_\OPT, w_\OPT)}(C_P) = f_\ALG(V(G)) \cap \sub_{(H_\ALG, w_\ALG)}(C_P)
\]
\item For any path cluster $C_P$ in $H^q$, for any $\{a, b\} \in E(G)$ such that $f_\OPT(a) \in \sub_{(H_\OPT, w_\OPT)}(C_P)$ and $f_\OPT(b) \in \sub_{(H_\OPT, w_\OPT)}(C_P)$, we have that
\[
1 \leq d_{(H_\ALG, w_\ALG)}(f_\ALG(a), f_\ALG(b)) \leq c_\OPT.
\]
\item For any path cluster $C_P$ in $H^q$, for any $u, v \in V(G)$ such that $f_\OPT(u) \in \sub_{(H_\OPT, w_\OPT)}(C_P)$ and $f_\OPT(v) \in \sub_{(H_\OPT, w_\OPT)}(C_P)$, we have that
\[
d_{(H_\ALG, w_\ALG)}(f_\ALG(u), f_\ALG(v)) \geq d_{(H_\OPT, w_\OPT)}(f_\OPT(u), f_\OPT(v)).
\]
\end{enumerate}

Let $u, v \in V(G)$, and let $P_{u, v}$ be the shortest path in $G$ from $u$ to $v$.

If there exists an interesting cluster $C_I$ in $H^q$ such that $f_\OPT(P_{u, v}) \in \sub_{(H_\OPT, w_\OPT)}(C_I)$, then the $\cluster$ algorithm has ensured that
\[
d_G(u, v) \leq d_{(H_\ALG, w_\ALG)}(f_\ALG(u), f_\ALG(v)) \leq c_\OPT \cdot d_G(u, v).
\]

If there exists a  path cluster $C_P$ in $H^q$ such that $f_\OPT(P_{u, v}) \in \sub_{(H_\OPT, w_\OPT)}(C_P)$, then by the observations above, we have that
\begin{align*}
d_G(u, v)
&\leq d_{(H_\OPT, w_\OPT)}(f_\OPT(u), f_\OPT(v)) \\
&\leq d_{(H_\ALG, w_\ALG)}(f_\ALG(u), f_\ALG(v)) \\
&\leq \sum_{e \in P_{u, v}} c_\OPT \\
&\leq c_\OPT \cdot d_G(u, v).
\end{align*}

If $u$ and $v$ are not in the same interesting or path cluster, then there is some minimum sequence $C_1, C_2, \ldots, C_k$ such that $f_\OPT(P_{u, v}) \in \sub_{(H_\OPT, w_\OPT)}(C_1 \cup C_2 \cup \ldots \cup C_k)$. Since the embeddings on these clusters are compatible, for each $i \in \{1, \ldots, k-1\}$, there is a sequence of $4(c_\OPT)^2+1$ consecutive vertices embedded in the edge connecting $C_i$ and $C_{i+1}$. For each $i \in \{1, \ldots, k-1\}$, there exists $v_i \in V(P_{u, v})$ such that $v_i$ intersects the sequence between $C_i$ and $C_{i+1}$. Therefore,
\begin{align*}
d_{(H_\ALG, w_\ALG)}(f_\ALG(u), f_\ALG(v))
&= d_{(H_\ALG, w_\ALG)}(f_\ALG(u), f_\ALG(v_1)) + \ldots + d_{(H_\ALG, w_\ALG)}(f_\ALG(v_{k-1}), f_\ALG(v)) \\
&\leq c_\OPT \cdot d_G(u, v_1) + \ldots + c_\OPT \cdot d_G(v_{k-1}, v) \\
&= c_\OPT \cdot d_G(u, v).
\end{align*}
Since the $\cluster$ and $\path$ algorithms do not allow contraction of distances, we also have that
\begin{align*}
d_{(H_\ALG, w_\ALG)}(f_\ALG(u), f_\ALG(v))
&= d_{(H_\ALG, w_\ALG)}(f_\ALG(u), f_\ALG(v_1)) + \ldots + d_{(H_\ALG, w_\ALG)}(f_\ALG(v_{k-1}), f_\ALG(v)) \\
&\geq d_G(u, v_1) + \ldots + d_G(v_{k-1}, v) \\
&= d_G(u, v).
\end{align*}

Therefore, $f_\ALG$ is a non-contracting, $c_\OPT$-embedding of $G$ into $(H_\ALG, w_\ALG)$, where $H_\ALG$ is some subdivision of $H^q$, and $H^q$ is a quasi-subgraph of $H$.
\end{proof}

We are now ready to prove Theorem \ref{theorem:fpt_H}.

\begin{proof}[Proof of Theorem \ref{theorem:fpt_H}]
To ensure the existence of interesting vertices in $G$, and thus interesting clusters in $H$, we make the following modifications to $G$ and $H$.

Let $k = 8 c \cdot |E(H)|$, and $K_k$ the complete graph on $k$ vertices. Note that there cannot be a non-contracting $c$-embedding of $K_k$ into $H$, since for at least one edge of any quasi-subgraph $H^q$ of $H$ and $|E(H^q)| \leq 2|E(H)|$, by the pigeonhole principle, for any embedding, at least $4c$ vertices of $G$ would be embedded into the same subdivsion of an edge of $H^q$, and so two vertices adjacent in $K_k$ would be embedded with distance greater than $c$.

We will now describe how to use $K_k$ to find a non-contracting $c$ embedding of $G$ into $H$, if such an embedding exists.

Create a new graph $G'$ in the following way: Connect a single copy of $K_k$ to $G$ by creating 3 paths of length $16c^4+1$ from a single vertex of $K_k$ to 3 arbitrary vertices $v_1, v_2, v_3$ in $G$. Let $\mathbb{H}_k$ be the set of graphs creating in the following way: For all $A \subset V(H) \cup E(H)$ with $|A|=3$, connect single copy of $K_k$ to $H$ by connecting $K_k$ to each $a \in A$ through a single vertex of $K_k$. If $a \in V(H)$ then this connection is by an edge from $K_k$ to $a$, and if $a \in E(H)$ then the connection is by subdividing $a$ and connecting $K_k$ to the new vertex.

If there exists a non-contracting $c$-embedding of $G$ into $H$, then there must exist $H' \in \mathbb{H}_k$ such that there is a non-contracting $c$-embedding of $G'$ into $H'$. This is easy to see by taking the embedding of $G$ into $H$ and extending it. Construct $H''$ by connecting a vertex of $K_k$ to the subdivision of $H$ (call this subdivision $H_s$) used for the embedding with 3 paths of length $16c^4+1$ to the vertices $v_1, v_2, v_3$ are embedded to. Using the embedding of $G$ into $H$, it is clear that there exists $H' \in \mathbb{H}_k$ such that a subdivision of $H'$ matches $H''$. By how $H''$ was constructed, the additional vertices in $V(G') \setminus V(G)$ are all embedded into vertices in $V(H'') \setminus V(H_s)$. By modifying our algorithm so that only embeddings of this type are considered, if a non-contracting $c$-embedding of $G$ into $H$ exists, the corresponding non-contracting $c$-embedding of $G'$ into $H'$ can be found, and then the corresponding non-contracting $c$-embedding of $G$ into $H$ can be extracted.

The rest of the theorem follows immediately from Lemma \ref{lemma:fpt_H_correctness} and Lemma \ref{lemma:fpt_H_runtime}.
\end{proof}

\bibliographystyle{plain}
\bibliography{embedding}


\end{document}